\newcommand{\argmin}{\operatorname{arg\,min}}
\newtheorem{theorem}{Theorem}%
\newtheorem{lemma}{Lemma}%
\newtheorem{claim}{Claim}%
\newtheorem{corollary}{Corollary}%
\newtheorem{proposition}{Proposition}%
\theoremstyle{definition}
\newtheorem{definition}{Definition}
\DeclareMathOperator*{\E}{E}%
\newcommand{\Xcomment}[1]{}
\newcommand{\prob}[2][]{\text{\bf Pr}\ifthenelse{\not\equal{}{#1}}{_{#1}}{}\!\left[#2\right]}
\newcommand{\expect}[2][]{\text{\bf E}\ifthenelse{\not\equal{}{#1}}{_{#1}}{}\!\left[#2\right]}
\newcommand{\given}{\;\middle|\;}
\newcommand{\distribution}[2][]{\text{\bf Dist}\ifthenelse{\not\equal{}{#1}}{_{#1}}{}\!\left[#2\right]}
\newcommand{\type}{t}
\newcommand{\types}{{\mathbf \type}}
\newcommand{\typespace}{T}
\newcommand{\typespaces}{{\mathbf \typespace}}
\newcommand{\distover}[1]{{\Delta{(#1)}}}
\newcommand{\expost}[1]{\hat{#1}}
\newcommand{\interim}[1]{#1}
\newcommand{\epallocs}{{\expost{\allocs}}}
\newcommand{\epalloc}{{\expost{\alloc}}}
\newcommand{\epoutcomes}{{\expost{\outcomes}}}
\newcommand{\epoutcome}{{\expost{\outcome}}}
\newcommand{\outcomespace}{W}
\newcommand{\outcome}{w}
\newcommand{\outcomes}{{\mathbf \outcome}}
\newcommand{\distoutcomespace}{\outcomedistspace}
\newcommand{\outcomespaces}{{\mathbf \outcomespace}}
\newcommand{\outcomedistspace}{{\distover{\outcomespace}}}
\newcommand{\outcomedistspaces}{{\distover{\outcomespaces}}}
\newcommand{\agent}{i}
\newcommand{\agind}[1][\agent]{_{#1}}
\newcommand{\toutcome}{\outcome}
\newcommand{\intoutcome}{\interim{\outcome}}
\newcommand{\intoutcomes}{\interim{\outcomes}}
\newcommand{\numservice}{m}
\newcommand{\alloc}{x}
\newcommand{\allocs}{{\mathbf \alloc}}
\newcommand{\intalloc}{\interim{\alloc}}
\newcommand{\intallocs}{\interim{\allocs}}
\newcommand{\talloc}{\alloc}
\newcommand{\util}{u}
\newcommand{\dens}{f}
\newcommand{\denss}{\mathbf \dens}
\DeclareMathOperator{\ALLOC}{Alloc}
\DeclareMathOperator{\PAYMENT}{Payment}
\DeclareMathOperator{\REV}{Rev}
\DeclareMathOperator{\RULE}{Outcome}
\newcommand{\reals}{{\mathbb R}}
\newcommand{\posreals}{\reals_+}
\newcommand{\Dist}{\mathcal{D}}
\newcommand{\Inter}[1]{{#1}} 
\newcommand{\Angles}[1]{{\langle{#1}\rangle}}
\newcommand{\Src}{{\Angles{\operatorname{\textsc{Src}}}}}
\newcommand{\Snk}{{\Angles{\operatorname{\textsc{Snk}}}}}
\newcommand{\Node}[2]{{\Angles{{#1},{#2}}}}
\newcommand{\SNode}[1]{{\Angles{{#1}}}}
\newcommand{\Agents}{N}
\newcommand{\NumTypes}{D}
\newcommand{\Range}[2]{\AutoAdjust{\{}{{#1},\ldots,{#2}}{\}}}
\newcommand{\Cut}{\operatorname{\textsc{Cut}}}
\newcommand{\Reroute}{\operatorname{\textsc{Reroute}}}
\newcommand{\InBrackets}[1]{\AutoAdjust{[}{#1}{]}}
\newcommand{\Ex}[2][]{\operatorname{\mathbf E}_{#1}\InBrackets{#2}}
\newcommand{\Prx}[2][]{\operatorname{\mathbf{Pr}}_{#1}\InBrackets{#2}}
\newcommand{\Universe}{\mathit{U}}
\newcommand{\USubset}{S}
\newcommand{\SubFun}{\mathcal{F}}
\newcommand{\PolyMat}[1]{\mathit{P}_{#1}}
\newcommand{\PolyVec}{y}
\newcommand{\PolyVecDelta}{\hat{\PolyVec}}
\newcommand{\Elem}{s}
\newcommand{\Mat}{\mathcal{M}}
\newcommand{\IndepSets}{\mathcal{I}}
\newcommand{\IndepSet}{I}
\newcommand{\Rank}[2][]{r^{#1}_{#2}}
\newcommand{\TypeSpace}{T}
\newcommand{\TypeSpaces}{{\mathbf \TypeSpace}}
\newcommand{\UnionTypeSpace}{{\TypeSpace_{\Agents}}}
\newcommand{\AllocSpace}{\mathbb{X}}
\newcommand{\InAllocSpace}{{\Inter{\AllocSpace}}}
\newcommand{\SeqAllocSpace}{\mathbb{S}}
\newcommand{\Type}{t}
\newcommand{\Types}{{\mathbf \Type}}
\newcommand{\TypeVar}{\tau}
\newcommand{\NonAugType}{\TypeVar^*}
\newcommand{\TypeSubset}{S}
\newcommand{\TightSets}{\mathds{S}}
\newcommand{\VertexSet}[1]{{\operatorname{\textsc{Vertex}}(#1)}}
\newcommand{\Vertex}[2]{{\operatorname{\textsc{Vertex}}(#2, {#1})}}
\newcommand{\AgentSubset}{{\Agents'}}
\newcommand{\DistProb}{f}
\newcommand{\DistProbs}{{\mathbf f}}
\newcommand{\DistOracle}{g}
\newcommand{\DistDP}[3]{G_{#2}^{#1}}
\newcommand{\Alloc}{x}
\newcommand{\Allocs}{{\mathbf \Alloc}}
\newcommand{\InAlloc}{{\Inter{\Alloc}}}
\newcommand{\InAllocs}{{\Inter{\Allocs}}}
\newcommand{\RvAlloc}[1]{\epalloc^{#1}}
\newcommand{\IVec}[1]{{\mathbf 1}_{#1}}
\newcommand{\RandRound}{\operatorname{\textsc{RandRound}}}
\newcommand{\RandBits}{b}
\newcommand{\SeqAlloc}{y}
\newcommand{\SeqAllocs}{y}
\newcommand{\SeqTrans}{z}
\newcommand{\SeqTranses}{\SeqTrans}
\newcommand{\SeqTable}{\pi}
\newcommand{\RCap}{\operatorname{\textsc{ResCap}}}
\newcommand{\PreAlloc}{\overline{x}}
\newcommand{\PreAllocSpace}{\overline{{\mathds X}}}
\newcommand{\Rev}{\operatorname{Rev}}
\newcommand{\AutoAdjust}[3]{\mathchoice{ \left #1 #2  \right #3}{#1 #2 #3}{#1 #2 #3}{#1 #2 #3} }
\newcommand{\Abs}[1]{{\AutoAdjust{\lvert}{#1}{\rvert}}}
\newcommand{\Reals}{\mathds{R}}
\newcommand{\PosReals}{\Reals_+}
\newcommand{\mdval}[2]{#1_{#2}}   
\newcommand{\mdalloc}[2]{#1_{#2}} 
\newcommand{\mdprice}[1]{#1_p}    
\newcommand{\bugbug}[1]{#1_b}
\newcommand{\bugval}[1]{#1_v}
\newcommand{\bugalloc}[1]{#1_x}
\newcommand{\bugprob}[1]{#1_\rho}
\newcommand{\bugprice}[1]{#1_p}
\newcommand{\RElem}{\pi}
\newcommand{\Ranking}{\pi}
\title{Bayesian Optimal Auctions \\ via Multi- to Single-agent Reduction}
\author{\hspace*{1in}
Saeed Alaei\thanks{email: \texttt{saeed@cs.umd.edu}, Dept.\ of Computer Science, University of Maryland, College Park, MD 20742.
Partially supported by ONR YIP grant N000141110662. Part of this work was done when the author was visiting Northwestern
University.  } \and Hu Fu\thanks{email: \texttt{hufu@cs.cornell.edu}. Dept.\ of Computer Science, Cornell University, Ithaca, NY
14853.  Supported by NSF grants CCF-0643934 and AF-0910940.  Part of this work was done when the author was visiting Northwestern
University.} \and Nima Haghpanah\thanks{email: \texttt{nima.haghpanah@gmail.com}. Dept.\ Electrical Engineering \& Computer
Science, Northwestern University, Evanston, IL 60201.} \hspace*{1in} \and Jason Hartline\thanks{email:
\texttt{hartline@eecs.northwestern.edu}. Dept.\ Electrical Engineering \& Computer Science, Northwestern University, Evanston, IL
60208.} \and Azarakhsh Malekian\thanks{email: \texttt{azarakhshm@gmail.com}. Dept.\ Electrical Engineering \& Computer Science,
Massachusetts Institute of Technology, Cambridge , MA 02139.} }
\begin{document}
\maketitle

\begin{abstract}
We study an abstract optimal auction problem for a single good or
service.  This problem includes environments where agents have
budgets, risk preferences, or multi-dimensional preferences over
several possible configurations of the good (furthermore, it allows an
agent's budget and risk preference to be known only privately to the
agent). These are the main challenge areas for auction theory.  A
single-agent problem is to optimize a given objective subject to a
constraint on the maximum probability with which each type is
allocated, a.k.a., an allocation rule.  Our approach is a reduction
from multi-agent mechanism design problem to collection of
single-agent problems.  We focus on maximizing revenue, but our
results can be applied to other objectives (e.g., welfare).

An optimal multi-agent mechanism can be computed by a linear/convex
program on interim allocation rules by simultaneously optimizing several
single-agent mechanisms  subject to joint feasibility of the
allocation rules.  For single-unit auctions, Border \citeyearpar{B91}
showed that the space of all jointly feasible interim allocation rules
for $n$~agents is a $\NumTypes$-dimensional convex polytope which can
be specified by $2^\NumTypes$ linear constraints, where $\NumTypes$ is
the total number of all agents' types.  Consequently, efficiently
solving the mechanism design problem requires a separation oracle for
the feasibility conditions and also an algorithm for ex-post
implementation of the interim allocation rules.  We show that the
polytope of jointly feasible interim allocation rules is the
projection of a higher dimensional polytope which can be specified by
only $O(\NumTypes^2)$ linear constraints.  Furthermore, our proof
shows that finding a preimage of the interim allocation rules in the
higher dimensional polytope immediately gives an ex-post
implementation.

We generalize Border's result to the case of $k$-unit and matroid
auctions.  For these problems we give a separation-oracle based
algorithm for optimizing over feasible interim allocation rules and a
randomized rounding algorithm for ex post implementation.  These ex
post implementations have a simple form; they are randomizations over
simple greedy mechanisms.  Given a ordered subset of agent types, such
a greedy mechanisms serves types in the specified order.

\end{abstract}

\section{Introduction}

Classical economics and game theory give fundamental characterizations
of the structure of competitive behavior.  For instance, Nash's
\citeyearpar{N51} theorem shows that mixed equilibrium gives a complete
description of strategic behavior, and the Arrow-Debreu \citeyearpar{AD54}
theorem shows the existence of market clearing prices in multi-party
exchanges.  In these environments computational complexity has offered
further perspective.  In particular, mixed equilibrium in general
games can be computationally hard to find \citep{CD06,DGP09}, whereas market
clearing prices are often easy to find \citep{DPSV08,J04}.  In this paper we
investigate an analogous condition for auction theory due to Kim
\citet{B91}, give a computationally constructive generalization that
further illuminates the structure of auctions, and thereby show that
the theory of optimal auctions is tractable.

Consider an abstract optimal auction problem.  A seller faces a set of
agents.  Each agent desires service and there may be multiple ways to
serve each agent (e.g., when renting a car, you can get a GPS or not,
you can get various insurance packages, and you will pay a total
price).  Each agent has preferences over the different possible ways
she can be served and we refer to this preference as her type.  The
seller is restricted by the {\em feasibility} constraint that at most
one agent can be served (e.g., only one car in the rental shop).  When
the agents' types are drawn independently from a known prior
distribution, the seller would like to design an auction to optimize
her objective, e.g., revenue, in expectation over this distribution,
subject to feasibility.  Importantly, in this abstract problem we have
not made any of the following standard assumptions on the agents'
preferences: quasi-linearity, risk-neutrality, or
single-dimensionality.

%
%

%
%
We assume that agents behave strategically and we will analyze an
auction's performance in {\em Bayes-Nash equilibrium}, i.e., where
each agent's strategy is a best response to the other agents'
strategies and the distribution over their preferences.  Without loss
of generality the revelation principle \citep{M81} allows for the
restriction of attention to {\em Bayesian incentive compatible} (BIC)
mechanisms, i.e., ones where the truthtelling strategy is a Bayes-Nash
equilibrium.

%
%
Any auction the seller proposes can be decomposed across the agents as
follows.  From an agent's perspective, the other agents are random
draws from the known distribution.  Therefore, the composition of
these random draws (as inputs), the bid of the agent, and the
mechanism induce an {\em interim allocation rule} which specifies the
probability the agent is served as a function of her bid.  BIC implies
that the agent is at least as happy to bid her type
as
any other bid.

%
%
Applying the same argument to each agent induces a profile of interim
allocation rules.  These interim allocation rules are jointly feasible
in the sense that there exists an auction that, for the prior
distribution, induces them.  As an example, suppose an agent's type is
high or low with probability $1/2$ each.  Consider two interim
allocation rules: rule (a) serves the agent with probability one when
her type is high and with probability zero otherwise, and rule (b)
serves the agent with probability $1/2$ regardless of her type.  It is
feasible for both agents to have rule (b) or for one agent to have
rule (a) and the other to have rule (b); on the other hand, it is
infeasible for both agents to have rule (a).  This last combination is
infeasible because with probability one quarter both agents have high
types but we cannot simultaneously serve both of them.  An important
question in the general theory of auctions is to decide when a profile
of interim allocation rules is feasible, and furthermore, when it is
feasible, to find an auction that implements it.

%
%
A structural characterization of the necessary and sufficient conditions
for the aforementioned {\em interim feasibility} is important for the
construction of optimal auctions as it effectively allows the auction
problem to be decomposed across agents.  If we can optimally serve a
single agent for a given interim allocation rule and we can check
feasibility of a profile of interim allocation rules, then we can
optimize over auctions.
Effectively, we can reduce the multi-agent
auction problem to a collection of single-agent auction
problems.

%
%

We now informally describe Border's \citeyearpar{B91} characterization
of interim feasibility.  A profile of interim allocation rules is
implementable if for any subspace of the agent types the expected
number of items served to agents in this subspace is at most the
probability that there is an agent with type in this subspace.
Returning to our infeasible example above, the probability that there
is an agent with a high type is $3/4$ while the expected number of
items served to agents with high types is one; Border's condition is
violated.

%
%
The straightforward formulation of interim feasibility via Border's
characterization has exponentially many constraints.  Nonetheless, it
can be simplified to a polynomial number of constraints in single-item
auctions with symmetric agents (where the agents' type space and
distribution are identical).  This simplification of the
characterization has lead to an analytically tractable theory of
auctions when agents have budgets \citep{LR96} or are risk averse
\citep{M84,MR84}.

%
%
\paragraph{Results}
Our main theorem is to show computationally tractable (i.e., in polynomial time in the total number of agents' types) methods for
each of the following problems. First, a given profile of interim allocation rules can be checked for interim feasibility.
Second, for any feasible profile of interim allocation rules, an auction (i.e., ex post allocation rule) that induces these
interim allocation rules can be constructed.  In particular, for problems where the seller can serve at most one agent, we show
that the exponentially-faceted polytope specified by the interim feasibility constraints is a projection of a
quadratically-faceted polytope in a higher dimension, and an ex post allocation rule implementing a feasible profile of interim
allocation rules is given immediately by the latter's preimage in the higher dimensional polytope. In particular, this implies
that optimal interim allocation rules can be computed by solving a quadratically sized linear/convex program. These results
combine to give a (computationally tractable) reduction from the multi-agent auction problem to a collection of single-agent
problems. Furthermore, our algorithmic procedure characterizes every single service auction as implementable by a simple token
passing game.

%
%
We also generalize the interim feasibility characterization and use it
to design optimal auctions for the cases where the seller faces a
$k$-unit feasibility constraint, i.e., at most $k$~agents can be
served, and more generally to matroid feasibility constraints.  Our
generalization of the feasibility characterization is based on a
simpler network-flow-based approach.  Although the number of
constraints in this characterization is exponential in the sizes of
type spaces, we observe that the constraints define a polymatroid,
whose vertices correspond to particularly simple auctions which can be
implemented by simple determinist allocation rules based on ranking.
Algorithms for submodular function minimization give rise to fast
separation oracles which, given a set of interim allocation rules,
detect a violated feasibility constraint whenever there is one;
expressing any point in the polymatroid as a convex combination of the
vertices allows us to implement any feasible interim allocation rule
as a distribution over the simple auctions.  These enable us again to
reduce the multi-agent problem to single-agent problems.


%
%
Auction theory is very poorly understood outside the standard
single-dimensional quasi-linear revenue maximization environment of
\citet{M81}.  The main consequence of this work is that even without
analytical understanding, optimal auctions can be computationally
solved for in environments that include non-quasi-linear utility
(e.g., budgets or risk aversion) and multi-dimensional preferences
(assuming that the corresponding single-agent problem can be solved).
Furthermore, unlike most work in auction theory with budgets or
risk-aversion, our framework permits the budgets or risk parameters to
be private to the agents.

\Xcomment{
The contrast between the economic understanding of optimal mechanisms
in environments with multi- and single-dimensional preferences is
severe.

There are two main challenges imposed by traditional environments for
multi-dimensional mechanism design.  First, there is the inherent
multi-dimensionality of the preference space.  This
multi-dimensionality prevents simplification of the incentive
compatibility constraints to enable analytical tractability.  The
second, is the multi-dimensionality of the inter-agent externalities.
E.g., if there are two items for sale, an agent could impose
externality on the other by taking the first, the second, or both
items.

In contrast, environments with single-dimensional preferences are
analytically tractable because the incentive compatibility constraints
simplify and because the inter-agent externalities are
single-dimensional.  An agent is either served or not, and if an agent
is served then the same externality is imposed on the other agents.

We restrict attention to environments for which the inter-agent
externalities are single-dimensional but the agents may otherwise have
multi-dimensional preferences.  We refer to these environments as {\em
  service constrained environments}.  There are a number of relevant
examples of these environments.  First, single-item auctions wherein
agents have a private value and private budget.  Second, single-item
auctions wherein agents have a private value and private risk
preferences.  Third, single-item multi-unit auctions wherein agents
have preferences over unconstrained ``add-ons.''  Classic examples of
an add-on market is the one for automobiles where a dealer can provide
the automobile with combinations of various options such as audio
systems and colors, or a restaurant that has limited seating capacity
but an unlimited menu.

Our results are directly comparable in form to standard environments
with single-dimensional preferences.  First, for revenue maximization
with agents with private value for service and quasi-linear utilities,
\citet{M81} characterized the optimal mechanism as a {\em virtual
  surplus maximizer}.  I.e., the agents' values should first be
transformed by a weakly monotone virtual valuation function, and then
the agents with maximum virtual surplus should served.  For the
special case of single-item auctions, maximizing virtual surplus means
allocating the item to the agent with the highest positive virtual
value.  In fact, \citet{BR89} reinterpret these virtual values as
``marginal revenue.'' Second, for revenue maximization with
single-dimensional agents a common (publicly known) budget,
\citet{LR96} showed that the optimal mechanism is again based on
ordering the agents by some criteria (like Myerson's virtual values),
but this criteria depends on market conditions (unlike Myerson's
virtual values).

We show that multi-dimensional preferences can also be partitioned
into two classes (based only on each agent's distribution over
preferences and not on the designer's feasibility constraint).  We
refer to the two classes as {\em contextual} and {\em context free}.  In
the context-free case, the optimal mechanism is a virtual surplus
maximizer and furthermore, these virtual values have a ``marginal
revenue'' interpretation.  In the contextual case, the optimal
mechanism is based on a linear ordering of the agents, though this
ordering depends on market conditions, i.e., the context.

Our approach to these multi-dimensional mechanism design problems is
through reduction from multi- to single-agent problems.  This
reduction is closely related to the ``marginal revenue''
reinterpretation of optimal mechanisms by \citet{BR89}.  Given a
single agent with private type drawn from a distribution, we can
analyze the revenue obtainable from this agent as a function of the ex
ante probability of sale, i.e., the {\em revenue curve}.  This
analysis question is a single-agent problem.  Myerson's virtual values
are exactly the derivative of this revenue curve, i.e., the marginal
revenue.  We generalize the Myerson-Bulow-Roberts result by showing
for (multi-dimensional) context-free preferences, the multi-agent
optimal mechanism optimizes the marginal revenue of the agents served.
Therefore, if the single-agent problem of optimizing revenue for any
ex ante service probability can be solved then so can the multi-agent
problem.

To understand contextual environments, we must formalize the
constraint that the context imposes.  An {\em interim service rule} is
a monotone (non-increasing) function from type probabilities to
service probabilities.  More specifically, it gives an upper-bound on
the probability that a random type conditioned to be in a subspace of
type space can be served.  A natural single-agent problem is to
maximize expected revenue subject to an interim service rule.  To
reduce the multi-agent problem to this single-agent problem we need to
show how to combine interim service rules (from the different agents)
to produce an ex post service rule.  The inequality of \citet{B91}
characterizes when a set of interim service rules are combinable.  In
symmetric environments the Border's condition gives linear (in the
number of agents) number of constraints; while, in asymmetric
environments the number of constraints is generally exponential.  Our
structural results are enabled by an algorithmic characterization of
Border's inequality with polynomial complexity.  This algorithmic
characterization can (a) check for the feasibility of a collection of
interim service rules and (b) and find a corresponding ex post service
rule if the interim rules are feasible.

Notice that the single-agent problem discussed in the previous
paragraph is the special case where the interim service rule is a step
function.  The distinction between context-free and contextual
preferences, then, is the distinction between whether the expected
performance is linear in the interim service rule.  If it is linear,
then the optimal mechanism for any interim service rule is a convex
combination of the optimal mechanism for step functions and its
revenue is given by the (same) convex combination of the marginal
revenues.

}

\paragraph{Related Work}

\citet{M81} 
characterized Bayesian optimal auctions in environments with
quasi-linear risk-neutral single-dimensional agent
preferences.  \citet{BR89} reinterpreted Myerson's approach as reducing
the multi-agent auction problem to a related single-agent problem.
Our work generalize this reduction-based approach to single-item
multi-unit auction problems with general preferences.

An important aspect of our approach is that it can be applied to
general multi-dimensional agent preferences.  Multi-dimensional
preferences can arise as distinct values different configurations of
the good or service being auctioned, in specifying a private budget
and a private value, or in specifying preferences over risk.  We
briefly review related work for agent preferences with multiple
values, budgets, or risk parameters.

Multi-dimensional valuations are well known to be difficult.  For
example, \citet{RC98}, showed that, because {\em
bunching}\footnote{Bunching refers to the situation in which a group
of distinct types are treated the same way in by the mechanism.} can
not be ruled out easily, the optimal auctions for multi-dimensional
valuations are dramatically different from those for single
dimensional valuations.  Because of this, most results are for cases with special structure \citep[e.g.,][]{A96, W93, MM88} and
often, by using such structures, reduce the problems to
single-dimensional ones \citep[e.g.,][]{S80, R79, MS80}.  Our
framework does not need any such structure.


A number of papers consider optimal auctions for agents with budgets
\citep[see, e.g.,][]{PV08, CG95, M00}.  These papers rely on
budgets being public or the agents being symmetric; our technique
allows for a non-identical prior distribution and private budgets.
Mechanism design with risk averse agents was studied by
\citet{MR84} and \citet{Matthews83}.  Both works assume i.i.d.\@ prior
distributions and have additional assumptions on risk attitudes; our
reduction does not require these assumptions.

Our work is also related to a line of work on approximating the
Bayesian optimal mechanism.  These works tend to look for simple
mechanisms that give constant (e.g., two) approximations to the optimal
mechanism.  \citet{CHK07}, \citet{BCKW-10}, and \citet{CD11} consider
item pricing and lottery pricing for a single agent; the first two
give constant approximations the last gives a
$(1+\epsilon)$-approximation for any $\epsilon$.  These problems are
related to the single-agent problems we consider.
\citet{CHMS10} and \citet{BGGM10} extend these approaches to
multi-agent auction problems.  The point of view of reduction from
multi- to single-agent presented in this paper bears close
relationship to recent work by \citet{A11} who gives a reduction from
multi- to single-agent mechanism design that loses at most a constant
factor of the objective.  Our reductions, employing entirely different
techniques, give rise to optimal mechanisms instead of approximations
thereof.

Characterization of interim feasibility plays a vital role in this
work.  For single-item single-unit auctions, necessary and sufficient
conditions for interim feasibility were developed through a series of
works \citep{MR84, M84, B91, B07, M11}; this characterization has
proved useful for deriving properties of mechanisms, \citet{MV10}
being a recent example. \citet{B91} characterized symmetric interim
feasible auctions for single-item auctions with identically
distributed agent preferences.  His characterization is based on the
definition of ``hierarchical auctions.''  He observes that the space
of interim feasible mechanisms is given by a polytope, where vertices
of this polytope corresponding to hierarchical auctions, and interior
points corresponding to convex combinations of vertices.  \citet{M11}
generalize Border's approach and characterization to asymmetric
single-item auctions.  The characterization via hierarchical auctions
differs from our characterization via ordered subset auctions in that
hierarchical auctions allow for some types to be relatively unordered
with the semantics that these unordered types will be considered in a
random order; it is important to allow for this when solving for
symmetric auctions.  Of course convex combinations over hierarchical
auctions and ordered subset auctions provide the same generality.  Our
work generalizes the characterization from asymmetric single-unit
auctions to asymmetric multi-unit and matroid auctions.

Our main result provides computational foundations to the interim
feasibility characterizations discussed above.  We show that interim
feasibility can be checked, that interim feasible allocation rules can
be optimized over, and that corresponding ex post implementations can
be found.  Independently and contemporaneously \citet{CDW11} provided
similar computational foundations for the single-unit auction problem.
Their approach to the single-unit auction problem is most comparable
to our approach for the multi-unit and matroid auction problems where
the optimization problem is written as a convex program which can be
solved by the ellipsoid method; while these methods result in strongly
polynomial time algorithms they are not considered practical.  In
contrast, our single-unit approach, when the single-agent problems can
be solved by a linear program, gives a single linear program which can
be practically solved.

While our work gives computationally tractable interim feasibility
characterizations in ``service based'' environments like multi-unit
auctions and matroid auctions; \citet{CDW11} generalize the approach
to multi-item auctions with agents with additive preferences.  The
problem of designing an optimal auction for agents with
multi-dimensional additive preferences is considered one of the main
challenges for auction theory and their result, from a computational
perspective, solves this problem.

\paragraph{Organization.}
\label{sec:structure}

In Section~\ref{sec:prelim} we describe single- and multi-agent
mechanism design problems.  In Section~\ref{sec:single-agent} we give
algorithms for solving two kinds of single-agent problems: multi-item
unit-demand preferences and private-value private-budget preferences.
In Section~\ref{sec:alg}, we give a high-level description of the
multi- to single-agent reduction which allows for efficiently compute
optimal mechanisms for many service based environments.  The key step
therein, an efficient algorithm that implements any jointly feasible
set of interim allocation rules, is presented in
Section~\ref{sec:inter}.  This section is divided into three parts
which address single-unit, multi-unit, and matroid feasibility
constraints, respectively.  Conclusions and extensions are discussed
in Section~\ref{sec:conc}.

\section{Preliminaries}
\label{sec:prelim}

\paragraph{Single-agent Mechanisms}


We consider the provisioning of an abstract service.  This service may
be parameterized by an {\em attribute}, e.g., quality of service, and
may be accompanied by a required payment.  We denote the outcome
obtained by an agent as $\outcome \in \outcomespace$.  We view this
outcome as giving an indicator for whether or not an agent is served and
as describing attributes of the service such as quality of service and
monetary payments.  Let $\ALLOC(\outcome) \in \{0,1\}$ be an indicator
for whether the agent is served or not; let $\PAYMENT(\outcome)
\in \reals$ denote any payment the agent is required to make.  In a
randomized environment (e.g., randomness from a randomized mechanism or Bayesian
environment) the outcome an agent receives is a random variable from a
distribution over $\outcomespace$.  The space of all such
distributions is denoted $\outcomedistspace$.

The agent has a type $\type$ from a finite type space $\typespace$.
This type is drawn from distribution $\dens \in \distover{\typespace}$
and we equivalently denote by $\dens$ the probability mass function.
I.e., for every $\type \in \typespace$, $\dens(\type)$ is the
probability that the type is~$\type$.  The utility function $\util
\,:\, \typespace \times \outcomespace \to \reals$ maps the agent's
type and the outcome to real valued utility.  The agent is a von
Neumann--Morgenstern expected utility maximizer and we extend $\util$
to $\distoutcomespace$ linearly, i.e., for $\outcome
\in \distoutcomespace$, $\util(\type, \outcome)$ is the expectation of $\util$ where the outcome is drawn according
to~$\outcome$.
We do not require the usual assumption of quasi-linearity.


A single-agent mechanism, without loss of generality by the revelation
principle, is just an {\em outcome rule}, a mapping from
the agent's type to a distribution over outcomes.  We denote an {\em
  outcome rule} by $\toutcome \,:\, \typespace
\to \outcomedistspace$.  We say that an outcome rule $\toutcome$ is
{\em incentive compatible} (IC) and {\em individually rational} (IR) if for
all $\type,\type' \in \typespace$, respectively,
\begin{align}
\tag{IC}\label{eq:IC}
\util(\type,\toutcome(\type)) &\geq \util(\type,\toutcome(\type')),\\
\tag{IR}\label{eq:IR}
\util(\type,\toutcome(\type)) &\geq 0.
\end{align}

We refer to restriction of the outcome rule to the indicator for
service as the {\em allocation rule}.  As the allocation to
each agent is a binary random variable, distributions over allocations
are fully described by their expected value.  Therefore the allocation
rule $\talloc \,:\, \typespace \to [0,1]$ for a given outcome rule
$\outcome$ is $\talloc(\type) = \expect{\ALLOC(\toutcome(\type))}$.

We give two examples to illustrate the abstract model described above.
The first example is the standard quasi-linear risk-neutral preference
which is prevalent in auction theory.  Here the agent's type space is
$\typespace \subset \posreals$ where $\type \in \typespace$ represents the
agent's valuation for the item.  The outcome space is $\outcomespace =
\{0, 1\} \times \posreals$ where an outcome $\outcome$ in this space
indicates whether or not the item is sold to the agent, by
$\ALLOC(\outcome)$, and at what price, by $\PAYMENT(\outcome)$.  The
agent's quasi-linear utility function is $\util(\type, \outcome) =
\type \cdot \ALLOC(\outcome) - \PAYMENT(\outcome)$.
%
%
The second example is that of an $\numservice$-item unit-demand (also
quasi-linear and risk-neutral) preference.  Here the type space is
$\typespace \subset \posreals^\numservice$ and a type $\type
\in \typespace$ indicates the agent's valuation for each of the items
when the agent's value for no service is normalized to zero.  An
outcome space is $\outcomespace = \{0,\ldots,\numservice\} \times
\posreals$.  The first coordinate of $\outcome$ specifies which item
the agent receives or none and $\ALLOC(\outcome) = 1$ if it is
non-zero; the second coordinate of $\outcome$ specifies the required
payment $\PAYMENT(\outcome)$.  The agent's utility for $\outcome$ is
the value the agent attains for the item received less her payment.
Beyond these two examples, our framework can easily incorporate more
general agent preferences exhibiting, e.g., risk aversion or a budget
limit.

\Xcomment{
We give two examples to illustrate the set of notations above.  The
first example is the standard quasi-linear risk-neutral preference
which is prevalent in auction theory.  Here the type of the agent is
simply her valuation for the item, an outcome $\outcome$ is a tuple
$(\sdalloc, \sdprice) \in \{0, 1\} \times \posreals$ that indicates
(by $\sdalloc$) whether or not the item is sold to the agent and (by
$\sdprice$) at what price.  $\ALLOC(\sdalloc,\sdprice) = \sdalloc$ and
$\PAYMENT(\sdalloc,\sdprice)$ is $\sdprice$.  The agent's quasi-linear
utility function is $\util(\type, (\sdalloc,\sdprice)) = \type
\sdalloc - \sdprice$.
%
%
The second example is that of a $\numservice$-item unit-demand (also
quasi-linear and risk-neutral) preference.  Here the type $\type :
\{0,\ldots, \numservice\} \to \posreals$ indicates the agent's
valuation for each of the items or $\mdtype{0} = 0$ for no service.
An outcome $\outcome = (\mdalloc, \mdprice) \in
\{0,\ldots,\numservice\} \times \posreals$.  Here, $\mdalloc$
specifies which item the agent receives or none if $\mdalloc = 0$,
therefore, $\ALLOC(\mdalloc,\mdprice) = 1$ if $\mdalloc > 0$ and zero
otherwise, $\PAYMENT(\mdalloc,\mdprice) = \mdprice$.  The agent's
utility is $\util(\type,(\mdalloc,\mdprice)) = \mdtype{\mdalloc} -
\mdprice$.  Of course, our framework can easily incorporate more
general settings such as those involving a risk averse agent with a
budget limit.
}

Consider the following single-agent mechanism design problem.  A
feasibility constraint is given by an upper bound $\talloc(\type)$ on
the probability that the agent is served as a function of her type
$\type$; the distribution on types in $\typespace$ is given by
$\dens$.  The single-agent problem is to find the outcome rule
$\toutcome^*$ that satisfies the allocation constraint of $\talloc$
and maximizes the performance, e.g., revenue.  This problem is
described by the following program:
\begin{align*}
\max_{\toutcome}:  & \quad \expect[\type \sim \dens, \toutcome(\type)]{\PAYMENT(\toutcome(\type))}
\tag{SP}\label{eq:rule-rev-def} \\
\hbox{s.t. } & \quad \expect[\toutcome(\type)]{\ALLOC(\toutcome(\type))} \leq \talloc(\type), \qquad \forall \type \in \typespace \\
 & \quad \toutcome \text{ is IC and IR}.
\end{align*}
We denote the outcome rule $\toutcome^*$ that optimizes this program
by $\RULE(\talloc)$ and its revenue by $\REV(\talloc) = \expect[\type
  \sim \dens, \toutcome^*(\type)]{\PAYMENT(\toutcome^*(\type))}$.  We note
that, although this paper focuses on revenue maximization, the same
techniques presented can be applied to maximize (or minimize) general
separable objectives such as social welfare.

\paragraph{Multi-agent Mechanisms}

There are $n$ independent agents.  Agents need not be identical,
i.e., agent $\agent$'s type space is~$\typespace\agind$, the
probability mass function for her type is~$\dens\agind$,
her outcome space is $\outcomespace\agind$, and her utility function
is $\util\agind$. The profile of agent types is denoted by $\types =
(\type\agind[1],\ldots,\type\agind[n]) \in \typespace\agind[1] \times
\cdots \times \typespace\agind[n] = \typespaces$, the joint
distribution on types is $\denss \in \distover{\typespace_1}\times \cdots \times \distover{\typespace_n}$, a vector of outcomes is
$(\outcome\agind[1], \cdots, \outcome\agind[n]) \in \outcomespaces$, and an allocation is
$(\alloc\agind[1],\ldots,\alloc\agind[n]) \in \{0,1\}^n$.  The
mechanism has an inter-agent feasibility constraint that permits
serving at most $k$~agents, i.e., $\sum_\agent \alloc\agind \leq
k$.\footnote{Furthremore, in Section~\ref{sec:border_matroid}, we
  review the theory of {\em matroids} and extend our basic results
  environments with feasibility constraint derived from a matroid set
  system.}  A mechanism that obeys this constraint is {\em
  feasible}. The mechanism has no inter-agent constraint on attributes
or payments.

A mechanism maps type profiles to a (distribution over) outcome
vectors via an {\em ex post outcome rule}, denoted $\epoutcomes \,:\,
\typespaces \to \outcomedistspaces$ where $\epoutcome\agind(\types)$
is the outcome obtained by agent $\agent$. We will similarly define
$\epallocs \,:\, \typespaces \to [0,1]^n$ as the {\em ex post
  allocation rule} (where $[0,1] \equiv \distover{\{0,1\}}$).  The ex
post allocation rule~$\epallocs$ and the probability mass
function~$\denss$ on types induce {\em interim} outcome and allocation
rules.  For agent~$\agent$ with type~$\type\agind$ and $\types \sim
\distribution[\types]{\types \given \type\agind}$ the interim outcome
and allocation rules are $\intoutcome\agind(\type\agind) =
\distribution[\types]{\epoutcome\agind(\types) \given \type\agind }$
and $\intalloc\agind(\type\agind) =
\distribution[\types]{\epalloc\agind(\types) \given \type\agind}
\equiv \expect[\types]{\epalloc\agind(\types) \given
  \type\agind}$.\footnote{We use notation $\distribution{X \given E}$
  to denote the distribution of random variable $X$ conditioned on the
  event $E$.}  A profile of interim allocation rules is feasible if it
is derived from an ex post allocation rule as described above; the set
of all feasible interim allocation rules is denoted by $\InAllocSpace$.
A mechanism is Bayesian incentive compatible and interim individually
rational if equations~\eqref{eq:IC} and~\eqref{eq:IR}, respectively,
hold for all $\agent$ and all~$\type\agind$.

Consider again the examples described previously of quasi-linear
single-dimensional and unit-demand preferences.  For the
single-dimensional example, the multi-agent mechanism design problem
is the standard single-item $k$-unit auction problem.  For the
unit-demand example, the multi-agent mechanism design problem is an
{\em attribute auction}.  In this problem there are $k$-units
available and each unit can be configured in one of $\numservice$
ways.  Importantly, the designer's feasibility constraint restricts
the number of units sold to be $k$ but places no restrictions on how
the units can be configured.  E.g., a restaurant has $k$ tables but
each diner can order any of the $m$ entrees on the menu.


A reduction from multi-agent mechanism design to single-agent
mechanism design as we have described above would assume that for any
types pace $\typespace\agind$, any probability mass
function~$\dens\agind$, and interim allocation rule $\intalloc\agind$,
the optimal outcome rule $\RULE(\intalloc\agind)$ and its performance
$\REV(\intalloc\agind)$ can be found efficiently (see
Section~\ref{sec:single-agent} for examples).  The goal then is to
construct an optimal multi-agent auction from these single-agent
mechanisms.  Our approach to such a reduction is
as follows.
\begin{enumerate}
\item \label{step:optimize}
Optimize, over all feasible profiles of interim allocation rules
$\intallocs = (\intalloc\agind[1],\ldots,\intalloc\agind[n]) \in \InAllocSpace$, the
sum of performances of the allocation
rules $\sum_\agent \REV(\intalloc\agind)$.
\item \label{step:expostize} Implement the profile of interim outcome
  rules $\intoutcomes$ given by $\intoutcome\agind =
  \RULE(\intalloc\agind)$ with a feasible ex post outcome rule $\epoutcomes$.
\end{enumerate}

Two issues should be noted.  First, Step~\ref{step:expostize} requires
an argument that the existence of a feasible ex post outcome rule for
a given profile of interim allocation rules implies the existence of
one that combines the optimal interim outcome rules
from~$\RULE(\cdot)$.  We address this issue in Section~\ref{sec:alg}.
Second, Step~\ref{step:optimize} requires that we optimize over
jointly feasible interim allocation rules, and after solving for
$\intallocs$, its implementation by an ex post allocation rule is
needed to guide Step~\ref{step:expostize}.  We address this issue in
Section~\ref{sec:ssa}.  For single-unit (i.e., $k=1$) auctions a
characterization of the necessary and sufficient condition for interim
feasibility was provided by Kim Border.




\begin{theorem}[\citealp{B91}]
\label{thm:border}%
In a single-item auction environment, interim allocation rules
$\intallocs$ are feasible (i.e., $\intallocs \in \InAllocSpace$) if
and only if the following holds:
\begin{align}
\forall S\agind[1] \subseteq \typespace\agind[1],
\cdots, \forall S\agind[n] \subseteq \typespace\agind[n]: &\qquad
\sum_{\agent=1}^n \expect{\intalloc\agind(t_i) \given \type\agind \in
    S\agind}\cdot \prob{\type\agind \in S\agind}
        \le \prob[\types \sim \denss]{\exists i \in[n]: \type\agind \in S\agind
    } \tag{MRMB} \label{eq:MRM}
\end{align}
\end{theorem}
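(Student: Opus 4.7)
The plan is to encode the existence of an ex post implementation of $\intallocs$ as a max-flow problem in an appropriate network, then derive both directions via max-flow/min-cut. Necessity is immediate: fix subsets $S_i \subseteq \typespace_i$ and an ex post rule $\epallocs$ inducing $\intallocs$; the left-hand side of \eqref{eq:MRM} equals the expected number of agents $i$ such that $t_i \in S_i$ and $i$ is served, which is at most the expected number of agents served on the event $\{\exists i : t_i \in S_i\}$. Since only one unit is available, this expectation is at most the probability of that event.

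For sufficiency, I build a flow network with source $\Src$, sink $\Snk$, a node $\SNode{i,\tau}$ for each agent--type pair, and a node $\SNode{\types}$ for each type profile. Give the edge $\Src \to \SNode{i,\tau}$ capacity $\dens_i(\tau)\,\intalloc_i(\tau)$; each edge $\SNode{\types}\to\Snk$ capacity $\prod_j \dens_j(t_j)$; and each edge $\SNode{i,\tau}\to\SNode{\types}$ (defined only when $t_i = \tau$) infinite capacity. Let $V := \sum_{i,\tau}\dens_i(\tau)\,\intalloc_i(\tau)$. A feasible flow of value $V$ corresponds exactly to an ex post allocation rule $\epallocs$ inducing $\intallocs$: setting $\epalloc_i(\types)$ equal to the flow on $\SNode{i,t_i}\to\SNode{\types}$ normalized by $\prod_j \dens_j(t_j)$ gives a rule obeying $\sum_i \epalloc_i(\types) \le 1$ (from the sink-side capacities) and matching the prescribed interim values (from saturation of the source edges).

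By max-flow/min-cut it suffices to show that \eqref{eq:MRM} is equivalent to every $\Src$--$\Snk$ cut having capacity at least $V$. Any finite cut is determined by choosing for each $i$ a set $S_i \subseteq \typespace_i$ of types whose node lies on the source side; the infinite-capacity edges then force every profile $\types$ with $t_i \in S_i$ for some $i$ to also be on the source side. The capacity of such a cut equals
\[
\sum_{i}\sum_{\tau \notin S_i} \dens_i(\tau)\,\intalloc_i(\tau) \;+\; \Prx[\types\sim\denss]{\exists i : t_i \in S_i},
\]
and ``cut $\ge V$'' rearranges exactly to \eqref{eq:MRM}. The one subtle step is the structural observation that reduces the exponentially many minimum-cut candidates to precisely the family indexed by tuples $(S_1,\ldots,S_n)$; once this is in place the algebraic rearrangement and invocation of max-flow/min-cut are routine, and the characterization follows.
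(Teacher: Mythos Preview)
Your argument is correct. The paper's own route to this statement is different: it introduces the stochastic sequential allocation (SSA) mechanism and proves (\autoref{thm:seq_alloc}) that a normalized interim rule is feasible if and only if some SSA transition table implements it; Border's condition then drops out as \autoref{cor:1cons}, because whenever the natural LP over transition tables cannot match a target $\PreAlloc$, the set of non-augmentable types witnesses a violated inequality. That argument buys strictly more than the bare characterization --- it exhibits the feasible polytope as the projection of an $O(\NumTypes^2)$-constraint polytope and simultaneously hands you an ex post implementation. Your max-flow/min-cut construction is instead the classical direct route and is essentially the same idea as the proof the paper gives in Appendix~\ref{app:border_k} for the $k$-unit generalization (there the network is oriented the other way, $\Src \to \SNode{\types} \to \SNode{\type_i} \to \Snk$ with finite capacities on the middle edges rather than infinite ones, but the min-cut computation is the same). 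So your proof is fine for the statement as written; the paper's SSA proof is chosen because it yields the compact LP formulation that the rest of the paper relies on.
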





\Xcomment{
\paragraph{Old Preliminary Section}

Consider a simple auction with $n$ participating agents for getting
one of $m$~services. In this
paper, we assume that we can serve at most $k$ of these agents. Each
agent $i \in [n]$ has a private type that defines his
values for the services.  The type of agent $i$ is denoted by $t_i \in T_i$,
and we use $v_i(t_i) = (v_i(t_i))_{j=1}^m \in \mathbb R^m$ to represent the vector of bidder~$i$'s valuations for the
$m$ services when his type is~$t_i$.
$T_i$ is
called the types pace of agent $i$.  Suppose that the agents' types
are drawn from independent but not
necessarily identical distributions,
publicly known as $\Dist = \Dist_1 \times \Dist_2 \times \cdots \times \Dist_n$.
$f_i(t_i)$ denotes the probability that agent $i$ has type $t_i$.
Furthermore, for every $S_i \subset T_i$, we define $f_i(S_i) =
\sum_{t_i \in S_i} f_i(t_i)$.
A mechanism solicits bids from the agents and determines the outcome which consists
of an allocation $x = (x_1(t_1),\ldots, x_n(t_n))$ and payments
$p = (p_1(t_1), p_2(t_2), \ldots, p_n(t_n))$,
where $x_i(t_i) = (x_i(t_i))_{j=1}^m \in \mathbb [0, 1]^m$ represents the probabilities with which bidder~$i$ receives service~$j$.

Given any allocation rule $x$
and payment rule $p$, we can define a corresponding
\emph{interim allocation rule}
$\Inter{x}_i : T_i \to [0, 1]^m$
\emph{interim allocation rule} $\Inter{p}_i: T_i \to \mathbb R_+$
for each agent $i$ that specifies the expected
allocation and payment
for each type of agent $i$, i.e., $\Inter{x}_i(t_i) = \E_{t_{-i}
\sim \Dist_{-i}}[x_i(t_i, t_{-i})]$,
and $\Inter{p}_i(t_i) = \E_{t_{-i} \sim \Dist_{-i}} [p_i(t_i, t_{-i})]$.
The allocation $x$ that is
assigned to a profile of types by the mechanism is also called
ex-post allocation.
If for each agent~$i$ and all types $t_i, t_j \in T_i$, a mechanism satisfies
\[
v_i(t_i)\cdot \Inter{x}_i(t_i) - \Inter{p}_i(t_i) \ge v_i(t_i)\cdot
\Inter{x}_i(t_j) - \Inter{p}_i(t_j)
\]
then the mechanism is said to be \emph{Bayesian truthful}.
Note that the constraint for Bayesian truthfulness is in terms only of interim allocation
and payments.
As a result, when designing a Bayesian truthful mechanism, one may check all incentive constraints by inspecting only
the interim allocation and payment rules for each bidder, instead of going through the exponentially large description
of ex-post rules.  However, not every interim allocation rule $(\Inter{x}_1, \Inter{x}_2, \ldots, \Inter{x}_n)$ can be implemented by a feasible ex-post allocation; even if it
can, it is still challenging to find the right ex-post allocation implementing it.  In this paper, we give efficient algorithmic
solutions to the following two problems, which enable us to design optimal mechanisms in terms of reduced forms
described by interim allocations, for domains which would otherwise seem intractable:

\begin{enumerate}
\item
Are $\Inter{x}_1, \cdots, \Inter{x}_n$ feasible?  In other words, is
there a corresponding feasible ex-post allocation rule~$x$ such that
$\forall i \in [n], \forall t_i \in T_i$, $\Inter{x}_i(t_i) = \E_{t_{-i} \sim \Dist_{-i}}[x_i(t_i, t_{-i})]$?

\item
If the answer to~\label{prob:A} is affirmative,
how can such an $x$ be computed?
\end{enumerate}

For the domain where one service ($m=1$) is to be provided to at most one agent $(k=1)$,
Maskin and Riley~\cite{MR84} and
Matthews~\cite{M84} independently
provided a condition for the first problem, which they proved to be necessary.
The next theorem
due to Border~\cite{B91} states that the same condition is in fact both
necessary and sufficient for that domain.

\begin{theorem}[\citealp{B91}]
\label{thm:border}%
In an $n$-agent
single-service
auction environment, interim allocation
rules $(\Inter{x}_1, \cdots, \Inter{x}_n)$
can be implemented by a feasible ex-post allocation rule~$\epallocs$
if and only if the following holds:
\begin{align}
    \forall S_1 \subseteq T_1, \cdots, \forall S_n \subseteq T_n: &\qquad \sum_{i=1}^n \sum_{t_i\in S_i} \Inter{x}_i(t_i) f_i(t_i)
        \le \Pr_{t \sim \denss}\left[\exists i \in[n]: t_i \in S_i
    \right] \tag{MRM} \label{eq:MRM}
\end{align}
\end{theorem}
}

\section{The Single-agent Problem}
\label{sec:single-agent}

Given an allocation rule $\talloc(\cdot)$ as a constraint the
single-agent problem is to find the (possibly randomized) outcome rule
$\toutcome(\cdot)$ that allocates no more frequently that
$\talloc(\cdot)$, i.e., $\forall \type \in \typespace$,
$\expect[\toutcome(\type)]{\ALLOC(\toutcome(\type))} \leq
\talloc(\type)$, with the maximum expected performance.  Recall that the optimal
such outcome rule is denoted $\RULE(\talloc)$ and its performance
(e.g., revenue) is denoted $\REV(\talloc)$.  We first observe that $\REV(\cdot)$ is concave.

\begin{proposition}
\label{claim:concave}
$\REV(\cdot)$ is a concave function in $\Inter \alloc$.
\end{proposition}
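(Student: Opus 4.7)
The plan is to prove concavity by an explicit mixing construction at the level of outcome rules. Fix two allocation constraints $\intalloc_1, \intalloc_2 \,:\, \typespace \to [0,1]$ and a mixing weight $\lambda \in [0,1]$, and set $\intalloc = \lambda\, \intalloc_1 + (1-\lambda)\, \intalloc_2$. Let $\toutcome_1 = \RULE(\intalloc_1)$ and $\toutcome_2 = \RULE(\intalloc_2)$ be the corresponding optimal outcome rules, with revenues $\REV(\intalloc_1)$ and $\REV(\intalloc_2)$. I will exhibit a single feasible outcome rule $\toutcome$ that obeys the constraint $\intalloc$, is IC and IR, and attains revenue exactly $\lambda\, \REV(\intalloc_1) + (1-\lambda)\, \REV(\intalloc_2)$; since $\RULE(\intalloc)$ is optimal subject to the constraint $\intalloc$, this will give $\REV(\intalloc) \ge \lambda\, \REV(\intalloc_1) + (1-\lambda)\, \REV(\intalloc_2)$, which is concavity.

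The outcome rule $\toutcome$ I propose is the obvious randomization: on input type $\type$, report $\toutcome_1(\type)$ with probability $\lambda$ and $\toutcome_2(\type)$ with probability $1-\lambda$ (independently of $\type$). Since $\toutcome(\type)$ is itself a distribution in $\distoutcomespace$, this is a legal single-agent mechanism. The allocation probability is $\expect{\ALLOC(\toutcome(\type))} = \lambda\, \intalloc_1(\type) + (1-\lambda)\, \intalloc_2(\type) = \intalloc(\type)$, so the allocation constraint holds with equality. The expected payment is $\lambda \cdot \expect[\type]{\PAYMENT(\toutcome_1(\type))} + (1-\lambda)\cdot \expect[\type]{\PAYMENT(\toutcome_2(\type))} = \lambda\, \REV(\intalloc_1) + (1-\lambda)\, \REV(\intalloc_2)$, as needed.

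The only nontrivial step is incentive compatibility and individual rationality of $\toutcome$. The key observation is that $\util$ was extended linearly to $\distoutcomespace$ (the agent is a von Neumann--Morgenstern expected utility maximizer), so $\util(\type, \toutcome(\type')) = \lambda\, \util(\type, \toutcome_1(\type')) + (1-\lambda)\, \util(\type, \toutcome_2(\type'))$ for every pair $\type, \type'$. Writing down the IC inequalities \eqref{eq:IC} for $\toutcome_1$ and $\toutcome_2$ at the pair $(\type, \type')$ and taking the convex combination with weights $\lambda$ and $1-\lambda$ gives the IC inequality for $\toutcome$ at $(\type, \type')$; the same convex combination of \eqref{eq:IR} yields IR for $\toutcome$. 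Hence $\toutcome$ is feasible for the program \eqref{eq:rule-rev-def} under constraint $\intalloc$, completing the argument. I do not anticipate a real obstacle: the proof rests entirely on the linearity of expected utility and the fact that the allocation constraint and the payment objective are both linear in the outcome rule, so nothing beyond a careful bookkeeping of the mixture is required.
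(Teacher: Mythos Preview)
Your proof is correct and follows essentially the same approach as the paper: mix the two optimal outcome rules and use linearity of expected utility to inherit IC and IR. One minor slip: the allocation of the mixed rule satisfies $\expect{\ALLOC(\toutcome(\type))} = \lambda\,\expect{\ALLOC(\toutcome_1(\type))} + (1-\lambda)\,\expect{\ALLOC(\toutcome_2(\type))} \le \lambda\,\intalloc_1(\type) + (1-\lambda)\,\intalloc_2(\type) = \intalloc(\type)$, not equality, since $\RULE(\intalloc_j)$ is only required to stay below $\intalloc_j$; this does not affect the conclusion.
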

\begin{proof}
Consider any two allocation rules $\intalloc$ and $\intalloc'$,
and any $\alpha \in [0,1]$.  Define $\intalloc''$ to be $\alpha
\intalloc + (1-\alpha) \intalloc'$.  We will show that $\alpha
\REV(\intalloc) + (1 - \alpha) \REV(\intalloc') \leq
\REV(\intalloc'')$, which proves the claim.  To see this, let
$\toutcome$ and $\toutcome'$ be $\RULE(\intalloc)$ and
$\RULE(\intalloc')$, respectively.
Define $\toutcome''$ to be the outcome rule that runs $\toutcome$ with
probability~$\alpha$, and $\toutcome'$ with probability $1-\alpha$.
The incentive compatibility of outcome rules $\toutcome$ and
$\toutcome'$ imply the incentive compatibility of $\toutcome''$, since
for any $\type, \type' \in \typespace$,
\begin{align*}
\expect{\util(\type, \toutcome''(\type))} 
  &= \alpha \expect{\util(\type, \toutcome(\type))} + (1 - \alpha)
\expect{\util(\type, \toutcome'(\type))} \\
&\geq  \alpha \expect{\util(\type, \toutcome(\type'))} + (1 - \alpha)
\expect{\util(\type, \toutcome'(\type'))}\\
& = \expect {\util(\type,\toutcome''(\type'))}.
\end{align*}
Also, $\toutcome''$ is feasible as $\expect{\ALLOC(\toutcome''(\type))} = \alpha
\expect{\ALLOC(\toutcome(\type))} + (1-\alpha)
\expect{\ALLOC(\toutcome'(\type))} \leq \intalloc''(\type)$ for all
$\type \in \typespace$. As a result, $\REV(\intalloc'')$ is at least
the revenue of $\toutcome''$, which is in turn equal to $\alpha
\REV(\intalloc) + (1-\alpha) \REV(\intalloc')$.
\end{proof}

We now give two examples for which the single-agent problem is
computationally tractable.  Both of these example are
multi-dimensional.  The first example is that of a standard multi-item
unit-demand preferences.  The second example that of a single-item
with a private budget.  For both of these problems the single-agent
problem can be expressed as a linear program with size polynomial in
the cardinality of the agent's type space.

\subsection{Quasi-linear Unit-demand Preferences}

There are $\numservice$ items available.  There is a finite type space
$\typespace \subset \posreals^\numservice$; the outcome space
$\outcomespace$ is the direct product between an assignment to the
agent of one of the $\numservice$ items, or none, and a required
payment.  $\outcomedistspace$ is the cross product of a probability
distribution over which item the agent receives and a probability
distribution over payments.  Without loss of generality for a
quasi-linear agent such a randomized outcome can be represented as
$\outcome =
(\mdalloc{\outcome}{1},\ldots,\mdalloc{\outcome}{\numservice},\mdprice{\outcome})$
where for $j \in [\numservice]$, $\mdalloc{\outcome}{j}$ is the
probability that the agent receives item $j$ and $\mdprice{\outcome}$
is the agent's required payment.

A single-agent mechanism assigns to each type an outcome as described
above.  An outcome rule specifies an outcome for any type $\type$ of
the agent as $\toutcome(\type) =
(\mdalloc{\toutcome}{1}(\type),\ldots,\mdalloc{\toutcome}{\numservice}(\type),\mdprice{\toutcome}(\type))$.
This gives $\numservice+1$ non-negative real valued variables for each of
$|\typespace|$ types.  The following linear program, which is a simple
adaptation of one from \citet{BCKW-10} to include the feasibility
constraint given by $\talloc$, solves for the optimal single-agent
mechanism:
\begin{align*}
\max: & \quad \sum\nolimits_{\type \in \typespace} \dens(\type) \mdprice{\outcome}(\type) & \\
\hbox{s.t. }&\quad\!\sum\nolimits_j \mdalloc{\toutcome}{j}(\type) \leq \intalloc(\type) & \forall \type \in \typespace \\
            & \quad \sum\nolimits_j \mdval{\type}{j} \mdalloc{\toutcome}{j}(\type) - \mdprice{\toutcome}(\type) \geq \sum\nolimits_j \mdval{\type}{j} \mdalloc{\toutcome}{j}(\type') - \mdprice{\toutcome}(\type') & \forall \type,\type' \in \typespace\\
            & \quad \sum\nolimits_j \mdval{\type}{j} \mdalloc{\toutcome}{j}(\type) - \mdprice{\toutcome}(\type) \geq 0 & \forall \type \in \typespace.
\end{align*}
The optimal outcome rule from this program is $\toutcome^* = \RULE(\talloc)$ and its performance is $\REV(\talloc) = \expect[\type \sim \dens]{\mdprice{\toutcome^*}(\type)}$.

\begin{proposition}
The single-agent $\numservice$-item unit-demand problem can be solved
in polynomial time in $\numservice$ and $|\typespace|$.
\end{proposition}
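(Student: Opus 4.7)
The plan is to observe that the linear program displayed immediately above the statement has size polynomial in $\numservice$ and $|\typespace|$, and therefore can be solved in polynomial time by any standard polynomial-time linear programming algorithm (e.g., the ellipsoid method or an interior-point method). The argument consists of a correctness check plus a size count; there is no real obstacle.

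First I would justify that the LP captures the single-agent problem \eqref{eq:rule-rev-def} without loss of generality. For a quasi-linear unit-demand agent, the utility obtained from any randomized outcome $\outcome \in \outcomedistspace$ depends only on the marginal probability of receiving each item $j$ and on the expected payment, so correlating which item is delivered with the exact payment realization has no behavioral effect. Consequently, parametrizing a randomized outcome for each type $\type$ by the $\numservice+1$ non-negative numbers $\mdalloc{\toutcome}{1}(\type),\ldots,\mdalloc{\toutcome}{\numservice}(\type),\mdprice{\toutcome}(\type)$ loses no generality. The first set of constraints encodes $\expect{\ALLOC(\toutcome(\type))} = \sum_j \mdalloc{\toutcome}{j}(\type) \leq \intalloc(\type)$; the second and third sets encode \eqref{eq:IC} and \eqref{eq:IR}, respectively; and the objective is the expected payment. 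Hence the LP's optimum is exactly $\REV(\intalloc)$ and its optimizer is $\RULE(\intalloc)$, which is the adaptation of \citet{BCKW-10} advertised in the text.

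Second I would count: the program has $(\numservice+1)|\typespace|$ non-negative variables, $|\typespace|$ feasibility constraints, at most $|\typespace|^2$ incentive constraints, and $|\typespace|$ individual-rationality constraints, for a total of $O(|\typespace|^2)$ constraints with coefficients drawn from the input data $\dens$, $\intalloc$, and the type vectors $\type$. Assuming the numerical inputs are given with polynomially many bits, the total bit-length of the LP is polynomial in $\numservice$ and $|\typespace|$, and so invoking any polynomial-time LP algorithm produces $\RULE(\intalloc)$ and $\REV(\intalloc)$ in polynomial time, completing the proof.
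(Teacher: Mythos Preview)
Your proposal is correct and matches the paper's approach: the paper simply presents the polynomial-size LP and states the proposition without further proof, relying on the evident fact that the LP has $O(\numservice|\typespace|)$ variables and $O(|\typespace|^2)$ constraints. Your write-up adds the explicit correctness justification and size count that the paper leaves implicit, which is fine.
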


\subsection{Private budget preferences.}

There is a single item available.  The agent has a private value for
this item and a private budget, i.e., $\typespace \subset
\posreals^2$; we will denote by $\bugval{\type}$ and $\bugbug{\type}$
this value and budget respectively.  The outcome space is
$\outcomespace = \{0,1\} \times \reals$ where for $\outcome
\in \outcomespace$ the first coordinate $\bugalloc{\outcome}$ denotes
whether the agent receives the item or not and the second coordinate
$\bugprice{\outcome}$ denotes her payment.  The agent's utility
is 
\[\util(\type,\outcome) =
\begin{cases}
\bugval{\type} \bugalloc{\outcome} - \bugprice{\outcome} & \text{if $\bugprice{\outcome} \leq \bugbug{\type}$, and }\\
-\infty & \text{otherwise.} 
\end{cases}
\] 
\autoref{claim:private-budget} below implies that when optimizing over
distributions on outcomes we can restrict attention to $[0,1] \times
[0,1] \times \posreals \subset \outcomedistspace$ where the first
coordinate denotes the probability that the agent receives the item,
the second coordinate denotes the probability that the agent makes a
non-zero payment, and the third coordinate denotes the non-zero
payment made.

\begin{claim}
\label{claim:private-budget}
Any incentive compatible and individually rational outcome rule can
be converted into an outcome rule above with the same expected
revenue.
\end{claim}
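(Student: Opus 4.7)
The plan is to give an explicit construction. Given any IC and IR outcome rule $\toutcome$ that maps each type to an arbitrary distribution over $\outcomespace$, I define a new outcome rule $\toutcome'$ in the claimed restricted form by, for each type $\type$,
\begin{align*}
\bugalloc{\toutcome'(\type)} &= \expect{\bugalloc{\toutcome(\type)}}, \\
\bugprice{\toutcome'(\type)} &= \bugbug{\type}, \\
\bugprob{\toutcome'(\type)} &= \expect{\bugprice{\toutcome(\type)}}/\bugbug{\type}
\end{align*}
(taking $\bugprob{\toutcome'(\type)}=0$ when $\bugbug{\type}=0$, in which case IR of $\toutcome$ forces $\expect{\bugprice{\toutcome(\type)}}=0$ as well). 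The idea is to preserve the expected allocation and always charge exactly the budget, with just enough probability to match the original expected payment. IR of the original rule guarantees $\expect{\bugprice{\toutcome(\type)}}\le\bugbug{\type}$, so $\bugprob{\toutcome'(\type)}\in[0,1]$.

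Expected revenue preservation and individual rationality are immediate: the expected payment under $\toutcome'(\type)$ is $\bugprob{\toutcome'(\type)}\cdot\bugprice{\toutcome'(\type)}=\expect{\bugprice{\toutcome(\type)}}$, and every realized payment is either $0$ or $\bugbug{\type}$, both within budget, so the truthful utility is the finite value $\bugval{\type}\cdot\expect{\bugalloc{\toutcome(\type)}}-\expect{\bugprice{\toutcome(\type)}}$, exactly matching the truthful utility under $\toutcome$.

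The main step is incentive compatibility, and the subtlety lies in the interaction between budgets and the $-\infty$ utility of over-budget outcomes. Consider a misreport $\type\to\type'$ in the new mechanism. The announced nonzero payment $\bugprice{\toutcome'(\type')}=\bugbug{\type'}$ is deterministic, so the misreport yields $-\infty$ utility precisely when $\bugbug{\type'}>\bugbug{\type}$; in that case the new IC constraint holds trivially. Otherwise $\bugbug{\type'}\le\bugbug{\type}$, and the misreport is allowed in $\toutcome'$ with utility $\bugval{\type}\cdot\expect{\bugalloc{\toutcome(\type')}}-\expect{\bugprice{\toutcome(\type')}}$. The same misreport was also allowed under $\toutcome$, because IR of $\toutcome$ bounds every payment in the support of $\toutcome(\type')$ by $\bugbug{\type'}\le\bugbug{\type}$, and it yields the identical expected utility. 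Thus the old IC inequality $U(\type,\type)\ge U(\type,\type')$ transfers directly to $\toutcome'$. The one thing that could go wrong with such an aggressive collapse of the payment distribution is that some previously blocked deviation becomes newly available and profitable, but any newly enabled deviation would correspond to $\bugbug{\type'}>\bugbug{\type}$, which the construction explicitly blocks, so no new deviation ever becomes available.
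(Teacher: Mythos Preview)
Your proof is correct and matches the paper's own sketch: replace each type's payment distribution by a lottery that charges exactly the budget with probability (expected payment)$/$(budget), observe this preserves truthful utility and expected revenue, and argue that misreports to higher-budget types are blocked while misreports to types with lower-or-equal budgets were already budget-feasible in the old rule (by IR of the target type) with identical expected utility. The only slight imprecision is the claim that a misreport to $\type'$ with $\bugbug{\type'}>\bugbug{\type}$ yields $-\infty$ ``precisely'': this fails when $\bugprob{\toutcome'(\type')}=0$, but then (with nonnegative payments) every realized payment for $\type'$ was already zero, so that deviation was feasible in the original rule as well and the old IC inequality still transfers.
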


As a sketch of the argument to show this claim, note that if an agent
with type $\type$ receives randomized outcome $\outcome$ she is just
as happy to receive the item with the same probability and pay her
budget with probability equal to her previous expected payment divided
by her budget.  Such a payment is budget feasible and has the same
expectation as before.  Furthermore, this transformation only increases
the maximum payment that any agent makes which means that the relevant
incentive compatibility constraints are only fewer.  Importantly, the
only incentive constraints necessary are ones that prevent types with
higher budgets from reporting types with lower budgets.

A single-agent mechanism assigns to each type an outcome as described
above.  We denote the distribution over outcomes for $\type$ by
$\toutcome(\type) =
(\bugalloc{\outcome}(\type),\bugprob{\outcome}(\type),\bugbug{\type})$
where only the first two coordinates are free variables.  This gives
two non-negative real valued variables for each of $|\typespace|$
types.  The following linear program solves for the optimal
single-agent mechanism:
\begin{align*}
\max: & \quad \sum\nolimits_{\type \in \typespace} \dens(\type) \bugbug{\type} \bugprob{\outcome}(\type) & \\
\hbox{s.t. }&\quad \bugalloc{\outcome}(\type) \leq \intalloc(\type) & \forall \type \in \typespace \\
            & \quad \bugval{\type} \bugalloc{\outcome}(\type) - \bugbug{\type}\bugprob{\outcome}(\type) \geq \bugval{\type} \bugalloc{\outcome}(\type') - \bugbug{\type'}\bugprob{\outcome}(\type') & \forall \type,\type' \in \typespace \text{ with $\bugbug{\type'} \leq \bugbug{\type}$}\\
            & \quad \bugval{\type} \bugalloc{\outcome}(\type) - \bugbug{\type}\bugprob{\outcome}(\type) \geq 0 & \forall \type \in \typespace\\
            & \quad \bugprob{\outcome}(\type) \leq 1 & \forall \type \in \typespace.
\end{align*}
The optimal outcome rule from this program is $\toutcome^* = \RULE(\talloc)$ and its performance is $\REV(\talloc) = \expect[\type \sim \dens]{\bugbug{\type}\bugprob{\toutcome^*}(\type)}$.

\begin{proposition}
The single-agent private budget problem can be solved in polynomial
time in $|\typespace|$.
\end{proposition}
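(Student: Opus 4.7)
The plan is to verify that the displayed linear program is correct, has polynomial size, and can therefore be solved in polynomial time by any standard LP algorithm. Concretely, I would argue: (i) restricting the outcome space to the three-coordinate form $(\bugalloc{\outcome}(\type),\bugprob{\outcome}(\type),\bugbug{\type})$ is without loss of generality, (ii) the constraints of the LP faithfully encode feasibility, IC, and IR on this restricted space, and (iii) the LP has $O(|\typespace|)$ variables and $O(|\typespace|^2)$ constraints.

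For step (i), I would appeal to \autoref{claim:private-budget}: any IC, IR outcome rule can be reshaped so that each type either pays its own budget with some probability or pays zero, without loss of revenue and preserving IC/IR. This justifies parameterizing the outcome for type $\type$ by the allocation probability $\bugalloc{\outcome}(\type)$ and the probability $\bugprob{\outcome}(\type)$ of charging the full budget $\bugbug{\type}$, making the objective $\sum_\type \dens(\type)\,\bugbug{\type}\,\bugprob{\outcome}(\type)$ linear and contributing the bound $\bugprob{\outcome}(\type)\le 1$.

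For step (ii), the allocation constraint $\bugalloc{\outcome}(\type)\le\intalloc(\type)$ is exactly the feasibility restriction on $\talloc$. For IC, note that under the restricted outcome form, truthful utility of type $\type$ is $\bugval{\type}\bugalloc{\outcome}(\type)-\bugbug{\type}\bugprob{\outcome}(\type)$, while misreporting to $\type'$ yields utility $\bugval{\type}\bugalloc{\outcome}(\type')-\bugbug{\type'}\bugprob{\outcome}(\type')$ \emph{provided} the charge $\bugbug{\type'}$ does not exceed $\bugbug{\type}$ (otherwise the agent's utility is $-\infty$ and the constraint is vacuous). Hence only the constraints with $\bugbug{\type'}\le\bugbug{\type}$ are binding, which matches the LP. IR is the standard individual rationality inequality on the truthful report. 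I would include a brief remark noting that the remaining $(\type,\type')$ pairs with $\bugbug{\type'}>\bugbug{\type}$ are automatically satisfied and can be omitted safely, because an agent who cannot afford the payment of the reported type gets $-\infty$ utility and strictly prefers reporting truthfully.

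For step (iii), I would simply count: two variables per type and $O(|\typespace|^2)$ pairwise IC constraints plus $O(|\typespace|)$ allocation, IR, and probability constraints, all with rational coefficients of polynomial bit length in the input. Solving this LP with any polynomial-time algorithm (e.g., interior point methods) yields $\toutcome^*=\RULE(\talloc)$ and $\REV(\talloc)=\expect[\type\sim\dens]{\bugbug{\type}\bugprob{\toutcome^*}(\type)}$ in time polynomial in $|\typespace|$. The only subtlety — and the step I expect to need the most care — is justifying the restriction to the two-variable outcome form and the one-sided IC constraints; everything else is bookkeeping already in place via \autoref{claim:private-budget}.
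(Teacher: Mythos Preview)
Your proposal is correct and follows the same approach as the paper: the paper simply displays the linear program and states the proposition as an immediate consequence, relying on \autoref{claim:private-budget} to justify the restricted outcome form and the one-sided IC constraints. Your write-up just makes explicit the variable and constraint count that the paper leaves implicit; there is nothing to add or change.
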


\section{Multi- to Single-agent Reductions}
\label{sec:alg}%

An ex post allocation rule $\epallocs$ takes as its input a profile of
types $\Types=(\Type_1,\ldots, \Type_n)$ of the agents, and indicates
by $\epalloc_i(\types)$ a set of at most $k$ winners.  Agent $i$'s type
$\type_i \in \typespace_i$ is drawn independently at random from
distribution $\dens_i \in \distover{\typespace_i}$.  An ex post
allocation rule implements an interim allocation rule $\InAlloc_i
: \TypeSpace_i \to [0,1]$, for agent~$i$, if the probability of winning
for agent $i$ conditioned on her type $\Type_i \in \TypeSpace_i$ is
exactly $\InAlloc_i(\Type_i)$, where the probability is taken over the
random types other agents and the random choices of the allocation
rule.  A profile of interim allocation rules
$\InAllocs=(\InAlloc_1,\ldots, \InAlloc_n)$ is feasible if and only if
it can be implemented by some ex post allocation rule. $\InAllocSpace$
denotes the space of all feasible profiles of interim allocation
rules.

The optimal performance (e.g., revenue) of the single-agent problem
with allocation constraint given by $\talloc$ is denoted
$\REV(\talloc)$.  The outcome rule corresponding to this optimal
revenue is $\RULE(\talloc)$.  Given any feasible interim allocation
rule $\intallocs \in \InAllocSpace$ we would like to construct an
auction with revenue $\sum\agind \REV(\talloc\agind)$.  We need to be
careful because $\RULE(\intalloc\agind)$, by definition, is only
required to have allocation rules \emph{upper bounded}
by~$\intalloc\agind$ (see \eqref{eq:rule-rev-def} in
Section~\ref{sec:prelim}), while the ex post allocation
rule~$\epalloc\agind$ implements $\intalloc\agind$ exactly, and hence
we may need to scale down~$\epalloc\agind$ accordingly.  This is
defined formally as follows.

\begin{definition}
An optimal auction $\epoutcomes^*$ for feasible interim allocation rule $\intallocs$ (with corresponding ex post allocation rule $\epallocs$) is defined as follows on $\types$.  For agent $\agent$:
\begin{enumerate}
\item Let $\intoutcome^*\agind = \RULE(\intalloc\agind)$ be the optimal outcome rule for allocation constraint $\intalloc\agind$.
\item Let $\intalloc^*\agind = \expect{\ALLOC(\intoutcome^*\agind)}$
  be the allocation rule corresponding to outcome rule $\intoutcome^*\agind$.
\item If $\epalloc\agind(\types) = 1$, output
\[
\epoutcome^*\agind(\types) \sim
\begin{cases}
 \distribution{\intoutcome^*\agind(\type\agind) \given \ALLOC(\intoutcome^*\agind(\type\agind)) = 1}  & \text{w.p.~$\intalloc^*\agind(\type\agind)/\intalloc\agind(\type\agind)$, and}\\
\distribution{\intoutcome^*\agind(\type\agind) \given \ALLOC(\intoutcome^*\agind(\type\agind)) = 0} &\text{otherwise.}
\end{cases}
\]
\item Otherwise (when $\epalloc\agind(\types) = 0$), output
$\epoutcome^*\agind(\types) \sim \distribution{\intoutcome^*\agind(\type\agind) \given \ALLOC(\intoutcome^*\agind(\type\agind)) = 0}.$
\end{enumerate}
\end{definition}

\begin{proposition}
\label{prop:implementation}
For any feasible interim allocation rule $\intallocs
\in \InAllocSpace$, the optimal auction for this rule has expected
revenue $\sum_i \REV(\intalloc\agind)$.
\end{proposition}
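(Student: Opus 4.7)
The plan is to verify three things for the constructed auction $\epoutcomes^*$: (a) it is feasible, (b) its interim outcome rule for each agent $\agent$ coincides with the single-agent optimum $\intoutcome^*\agind = \RULE(\intalloc\agind)$, and (c) it is BIC and interim IR. Once (b) holds, the expected payment from agent $\agent$ under $\epoutcomes^*$ is exactly $\REV(\intalloc\agind)$ and summing gives the claim. Feasibility is immediate because $\ALLOC(\epoutcome^*\agind(\types)) \leq \epalloc\agind(\types)$ by the case analysis in the definition, so $\sum_\agent \ALLOC(\epoutcome^*\agind(\types)) \leq \sum_\agent \epalloc\agind(\types) \leq k$ since $\epallocs$ is a feasible ex post implementation of $\intallocs$.

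The core step is (b). First I would condition on the type $\type\agind$ of agent $\agent$ and compute the induced distribution over $\epoutcome^*\agind$. Since $\epallocs$ implements $\intallocs$, we have $\Pr[\epalloc\agind(\types)=1 \given \type\agind] = \intalloc\agind(\type\agind)$. By the single-agent feasibility constraint in \eqref{eq:rule-rev-def}, $\intalloc^*\agind(\type\agind) = \expect{\ALLOC(\intoutcome^*\agind(\type\agind))} \leq \intalloc\agind(\type\agind)$, so the scaling factor $\intalloc^*\agind(\type\agind)/\intalloc\agind(\type\agind)$ in the definition lies in $[0,1]$ and is well-defined (the case $\intalloc\agind(\type\agind)=0$ forces $\intalloc^*\agind(\type\agind)=0$ and is covered by the second branch). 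Combining the two cases of $\epalloc\agind(\types)$, the total probability that agent $\agent$ is allocated conditional on $\type\agind$ is
\begin{align*}
\intalloc\agind(\type\agind)\cdot\frac{\intalloc^*\agind(\type\agind)}{\intalloc\agind(\type\agind)} + \bigl(1-\intalloc\agind(\type\agind)\bigr)\cdot 0 \;=\; \intalloc^*\agind(\type\agind),
\end{align*}
matching the single-agent allocation. Moreover, conditional on being allocated, the outcome is drawn from $\distribution{\intoutcome^*\agind(\type\agind) \given \ALLOC=1}$, and conditional on not being allocated, it is drawn from $\distribution{\intoutcome^*\agind(\type\agind) \given \ALLOC=0}$. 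Combining by the law of total probability, the interim distribution over $\epoutcome^*\agind$ given $\type\agind$ is exactly $\intoutcome^*\agind(\type\agind)$.

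Given (b), the expected payment from agent $\agent$ under $\epoutcomes^*$ equals $\expect[\type\agind]{\PAYMENT(\intoutcome^*\agind(\type\agind))} = \REV(\intalloc\agind)$, and linearity of expectation yields the total revenue $\sum_\agent \REV(\intalloc\agind)$. For (c), note that each agent's interim outcome rule under $\epoutcomes^*$ is $\intoutcome^*\agind = \RULE(\intalloc\agind)$, which is IC and IR by definition; since BIC and interim IR are properties of the interim rule alone, $\epoutcomes^*$ inherits them. The main subtlety to be careful about is the case $\intalloc\agind(\type\agind) = 0$ in the scaling, which I would handle by a convention that makes the first branch vacuous (it has zero probability of occurring). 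Once this degenerate case is cleanly dispensed with, the argument is a straightforward conditioning computation.
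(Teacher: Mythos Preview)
Your proof is correct and follows the same approach as the paper's own proof: both argue that the construction induces exactly the interim outcome rule $\intoutcome^*\agind$ for each agent, from which the revenue claim follows immediately. The paper's proof is a one-line assertion of this fact (``by construction, induces interim outcome rule $\intoutcomes^*$ for which the revenue is as desired''), whereas you carry out the conditioning computation explicitly and also verify feasibility, BIC/IR, and the degenerate case; these are exactly the details the paper leaves to the reader.
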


\begin{proof}
The ex post outcome rule $\epoutcomes^*$ of the auction, by
construction, induces interim outcome rule $\intoutcomes^*$ for which
the revenue is as desired.
\end{proof}

The optimal multi-agent auction is the solution to optimizing the
cumulative revenue of individual single-agent problems subject to the
joint interim feasibility constraint given by $\intallocs
\in \InAllocSpace$.

\begin{proposition}
\label{claim:outcome-impl}
The optimal revenue is given by the convex program
\begin{align}
\max_{\intallocs \in \InAllocSpace}: \sum\nolimits_{\agent} \REV\agind(\intalloc\agind). \tag{CP} \label{eq:CP}
\end{align}
\end{proposition}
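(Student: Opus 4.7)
The plan is to prove both directions: that \eqref{eq:CP} is an upper bound on the revenue of any feasible BIC/IIR mechanism, and that this bound is achieved. Along the way we verify that \eqref{eq:CP} is genuinely a convex program, so that the ``max'' is well-defined.

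\textbf{Convexity of the program.} First I would observe that $\InAllocSpace$ is convex: if $\intallocs, \intallocs' \in \InAllocSpace$ are induced by ex post allocation rules $\epallocs, \epallocs'$ respectively, then for any $\alpha \in [0,1]$ the ex post rule that runs $\epallocs$ with probability $\alpha$ and $\epallocs'$ with probability $1-\alpha$ is still feasible (it serves at most $k$ agents for every type profile) and induces $\alpha \intallocs + (1-\alpha) \intallocs'$. By \autoref{claim:concave}, each $\REV\agind(\cdot)$ is concave in $\intalloc\agind$, so the objective $\sum_\agent \REV\agind(\intalloc\agind)$ is concave in $\intallocs$; thus \eqref{eq:CP} maximizes a concave objective over a convex set.

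\textbf{Upper bound.} Next I would argue that the optimal BIC/IIR mechanism has revenue at most the value of \eqref{eq:CP}. Consider any feasible BIC/IIR mechanism with ex post outcome rule $\epoutcomes$ and induced interim outcome and allocation rules $\intoutcomes, \intallocs$. The interim allocation rule profile $\intallocs$ lies in $\InAllocSpace$ by definition. For each agent $\agent$, the interim outcome rule $\intoutcome\agind$ satisfies $\expect{\ALLOC(\intoutcome\agind(\type\agind))} = \intalloc\agind(\type\agind)$ and is IC/IR, so it is a feasible point of the single-agent program \eqref{eq:rule-rev-def} with allocation constraint $\intalloc\agind$. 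Therefore its expected payment is at most $\REV\agind(\intalloc\agind)$. Summing over agents, the mechanism's expected revenue is at most $\sum_\agent \REV\agind(\intalloc\agind) \le \max_{\intallocs' \in \InAllocSpace} \sum_\agent \REV\agind(\intalloc'\agind)$.

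\textbf{Achievability.} Finally, for the reverse inequality, let $\intallocs^\star$ be an optimal solution of \eqref{eq:CP}. By \autoref{prop:implementation}, there is a feasible BIC/IIR auction whose expected revenue is exactly $\sum_\agent \REV\agind(\intalloc^\star\agind)$, matching the claimed value. Combining the two bounds gives the proposition.

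The main (minor) obstacle is being careful about the direction of the inequality in \eqref{eq:rule-rev-def}: $\RULE(\intalloc\agind)$ is defined with an upper bound on the allocation probability, whereas the interim allocation rules induced by a mechanism are equalities. This is not an issue for the upper bound (an equality is a valid instance of ``$\leq$''), and the achievability side already accounts for this through the scaling construction used in the definition preceding \autoref{prop:implementation}.
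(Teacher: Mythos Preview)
Your proposal is correct and follows essentially the same approach as the paper: establish convexity of the feasible region and concavity of the objective, then invoke \autoref{prop:implementation} for attainability. The paper's proof is extremely terse and leaves the upper-bound direction implicit (``this revenue is attainable; therefore, it is optimal''), whereas you spell out explicitly that any BIC/IIR mechanism's interim outcome rules are feasible for the single-agent programs and hence its revenue is bounded by $\sum_\agent \REV\agind(\intalloc\agind)$; this is a welcome clarification rather than a different argument.
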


\begin{proof}
This is a convex program as $\REV(\cdot)$ is concave and
$\InAllocSpace$ is convex (convex combinations of feasible interim
allocation rules are feasible).  By \autoref{prop:implementation} this
revenue is attainable; therefore, it is optimal.
\end{proof}

\section{Optimization and Implementation of Interim Allocation Rules}
\label{sec:inter}%

In this section we address the computational issues pertaining to (i) solving optimization problems over the space of feasible
interim allocation rules, and (ii) ex post implementation of such a feasible interim allocation rule.  We present computationally
tractable methods for both problems.

\paragraph{Normalized interim allocation rules.}
It will be useful to ``flatten'' the interim allocation rule $\InAllocs$ for which $\InAllocs_i (\type_i)$ denotes the
probability that $i$ with type $\Type_i$ is served (randomizing over the mechanism and the draws of other agent types); we do so
as follows. Without loss of generality, we assume that the type spaces of different agents are disjoint.\footnote{This can be
achieved by labeling all types of
  each agent with the name of that agent, i.e., for each $i\in [n]$ we
  can replace $\TypeSpace_i$ with $\TypeSpace'_i=\{(i,t) | t
  \in \TypeSpace_i\}$ so that $\TypeSpace'_1,\cdots, \TypeSpace'_n$
  are disjoint.}  Denoting the set of all types by $\UnionTypeSpace =
\bigcup_i \TypeSpace_i$, the interim allocation rule can be flattened as a vector in $[0,1]^{\UnionTypeSpace}$.

\begin{definition}
The {\em normalized interim allocation rule} $\PreAlloc \in
[0,1]^{\UnionTypeSpace}$ corresponding to interim allocation rule
$\intallocs$ under distribution $\DistProbs$ is defined as
\begin{align*}
    \PreAlloc(\Type_i) &= \intalloc_i(\Type_i) \DistProb_i(\Type_i)&
    &\forall \Type_i \in \UnionTypeSpace
\end{align*}
\end{definition}
For the rest of this section, we refer to interim allocation rules via $\PreAlloc$ instead of $\InAllocs$.  Note that there is a
one-to-one correspondence between $\PreAlloc$ and $\InAllocs$ as specified by the above linear equation; so any linear of convex
optimization problem involving $\InAllocs$ can be written in terms of $\PreAlloc$ without affecting its linearity or convexity.
As $\InAllocSpace$ denotes the space of feasible interim allocation rules $\InAllocs$, we will use $\PreAllocSpace$ to denote the
space of feasible normalized interim allocation rules.



In the remainder of this section we characterize interim feasibility
and show that normalized interim allocation rules can be optimized
over and implemented in polynomial time.

\subsection{Single Unit Feasibility Constraints}
\label{sec:border_1}%

In this section, we consider environments where at most one agent can be allocated to. For such environments, we characterize
interim feasibility as implementability via a particular, simple {\em stochastic sequential allocation} mechanism. Importantly,
the parameters of this mechanism are easy to optimize efficiently.

A stochastic sequential allocation mechanism is parameterized by a stochastic transition table.  Such a table specifies the
probability by which an agent with a given type can steal a token from a preceding agent with a given type.  For simplicity in
describing the process we will assume the token starts under the possession of a ``dummy agent'' indexed by 0; the agents are
then considered in the arbitrary order from 1 to $n$; and the agent with the token at the end of the process is the one that is
allocated (or none are allocated if the dummy agent retains the token).

\begin{definition}[stochastic sequential allocation mechanism]
\label{def:ssa}%
Parameterized by a stochastic transition table $\SeqTable$, the {\em stochastic sequential allocation mechanism (SSA)} computes
the allocations for a type profile $\types \in \TypeSpaces$ as follows:
\begin{enumerate}
\item Give the token to the dummy agent $0$ with dummy type $\type_0$.
\item For each agent $i$: (in order of $1$ to $n$)

If agent $i'$ has the token, transfer the token to
agent $i$ with probability $\pi(\type_{i'},\type_i)$.

\item Allocate to the agent who has the token (or none if the
  dummy agent has it).
\end{enumerate}
\end{definition}

First, we present a dynamic program, in the form of a collection of linear equations, for calculating the interim allocation rule
implemented by SSA for a given $\SeqTable$. Let $\SeqAlloc(\Type_{i'},i)$ denote the ex-ante probability of the event that
agent~$i'$ has type $\Type_{i'}$ and is holding the token at the end of iteration $i$. Let $\SeqTrans(\Type_{i'}, \Type_{i})$
denote the ex-ante probability in iteration $i$ of SSA that agent~$i$ has type~$\Type_i$ and takes the token from agent~$i'$ who
has type $\Type_{i'}$.

The following additional notation will be useful in this section.  For any subset of agents $\AgentSubset \subseteq \Agents =
\{1,\ldots, n\}$, we define $\TypeSpace_\AgentSubset = \bigcup_{i \in
  \AgentSubset} \TypeSpace_i$ (Recall that without loss of generality
agent type spaces are assumed to be disjoint.).  The shorthand notation $\Type_i \in \TypeSubset$ for $\TypeSubset \subseteq
\UnionTypeSpace$ will be used to quantify over all types in $\TypeSubset$ and their corresponding agents (i.e., $\forall \Type_i
\in \TypeSubset$ is equivalent to $\forall i \in \Agents, \forall \Type_i \in \TypeSubset\cap \TypeSpace_i$).

The normalized interim allocation rule $\PreAlloc$ resulting from the
SSA is exactly given by the dynamic program specified by the following
linear equations.

\newcounter{SeqCounter}
\newcommand{\SeqTag}{{\protect\stepcounter{SeqCounter}$\SeqAllocSpace$.\arabic{SeqCounter}}}

\begin{align}
    \SeqAlloc(\Type_0,0) & = 1, &
        & \tag{\SeqTag} \label{eq:seq_alloc0} \\
    \SeqAlloc(\Type_i,i) &= \sum_{\Type_{i'} \in \TypeSpace_{\Range{0}{i-1}}} \SeqTrans(\Type_{i'},\Type_i), &
        & \forall \Type_i \in \TypeSpace_{\Range{1}{n}}  \tag{\SeqTag} \label{eq:seq_alloc_self} \\
    \SeqAlloc(\Type_{i'},i) &= \SeqAlloc(\Type_{i'}, i-1) - \sum_{\Type_{i} \in \TypeSpace_{i}} \SeqTrans(\Type_{i'}, \Type_{i}), &
        & \forall i \in \Range{1}{n}, \forall \Type_{i'} \in \TypeSpace_{\Range{0}{i-1}} \tag{\SeqTag} \label{eq:seq_alloc}  \\
    \SeqTrans(\Type_{i'}, \Type_{i}) &= \SeqAlloc(\Type_{i'}, i-1) \SeqTable(\Type_{i'}, \Type_{i}) \DistProb_{i}(\Type_{i}), &
        & \forall \Type_{i} \in \TypeSpace_{\Range{1}{n}}, \forall \Type_{i'} \in \TypeSpace_{\Range{0}{i-1}} \tag{$\pi$}
        \label{eq:seq_pi} \\
    \PreAlloc(\Type_i) &=\SeqAlloc(\Type_i,n), &
        & \forall \Type_i \in \TypeSpace_{\Range{1}{n}} \notag 
\intertext{Note that $\SeqTable$ is the only adjustable parameter in the SSA algorithm, so by relaxing the equation
\eqref{eq:seq_pi} and replacing it with the following inequality we can specify all possible dynamics of the SSA algorithm. }
    0 \le \SeqTrans(\Type_{i'}, \Type_{i}) &\le \SeqAlloc(\Type_{i'}, i-1) \DistProb_{i}(\Type_{i}), &
        & \forall \Type_{i} \in \TypeSpace_{\Range{1}{n}}, \forall \Type_{i'} \in \TypeSpace_{\Range{0}{i-1}} \tag{\SeqTag} \label{eq:seq_last}
\end{align}

Let $\SeqAllocSpace$ denote the convex polytope captured by the \arabic{SeqCounter}~sets of linear constraints
\eqref{eq:seq_alloc0} through \eqref{eq:seq_last} above, i.e., $(\SeqAllocs, \SeqTranses) \in \SeqAllocSpace$ if{f} $\SeqAllocs$
and $\SeqTranses$ satisfy the aforementioned constraints. Note that every $(\SeqAllocs, \SeqTranses) \in \SeqAllocSpace$
corresponds to some stochastic transition table $\SeqTable$ by solving equation~\eqref{eq:seq_pi} for
$\SeqTable(\type_i,\type_{i'})$. We show that $\SeqAllocSpace$ captures all feasible normalized interim allocation rules, i.e.,
the projection of $\SeqAllocSpace$ on $\PreAlloc(\cdot)=\SeqAllocs(\cdot, n)$ is exactly $\PreAllocSpace$, as formally stated by
the following theorem.

\begin{theorem}
\label{thm:seq_alloc}%
A normalized interim allocation rule $\PreAlloc$ is feasible if and only if it can be implemented by the SSA algorithm for some
choice of stochastic transition table $\SeqTable$. In other words, $\PreAlloc \in \PreAllocSpace$ if{f} there exists
$(\SeqAllocs, \SeqTranses) \in \SeqAllocSpace$ such that $\PreAlloc(\Type_i)=\SeqAllocs(\Type_i, n)$ for all $\Type_i \in
\UnionTypeSpace$.
\end{theorem}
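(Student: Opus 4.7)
The proof has two directions. The forward direction (SSA implementability implies feasibility) is mechanical: constraints (\ref{eq:seq_alloc0})--(\ref{eq:seq_last}) are precisely the dynamic-program equations tracking the state of the SSA, with $\SeqAlloc(\type_{i'}, i)$ the probability that agent $i'$ holds the token after iteration $i$ with type $\type_{i'}$, and $\SeqTrans(\type_{i'}, \type_i)$ the joint probability of a specific token-stealing transition. Constraint (\ref{eq:seq_last}) rearranges via (\ref{eq:seq_pi}) to yield a valid stochastic transition table $\SeqTable \in [0,1]$. Because the SSA maintains a single token, it is a legitimate ex post allocation rule serving at most one agent per run, so the induced $\PreAlloc(\type_i) = \SeqAlloc(\type_i, n)$ automatically lies in $\PreAllocSpace$.

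The substantive direction is the converse, and my plan is to exploit convexity via an extreme-point characterization. By \autoref{thm:border}, $\PreAllocSpace$ is the convex polytope defined by the inequalities~(\ref{eq:MRM}). Observing that the right-hand side of~(\ref{eq:MRM}), viewed as a function of $S = \bigcup_i S_i \subseteq \UnionTypeSpace$, is the coverage-style submodular function $g(S) = \Pr[\exists i: \type_i \in S \cap \TypeSpace_i]$, I would argue that $\PreAllocSpace$ is in fact a \emph{polymatroid}. Standard polymatroid theory then identifies its vertices with the outputs of the greedy algorithm: for each total order on a subset of types, assign each type its marginal contribution to $g$, which is exactly its probability of being the highest-ranked type that appears in the random profile. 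These greedy vertices are therefore precisely the ``ordered subset auctions'' described in the paper's introduction.

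Every ordered subset auction is implementable by a \emph{deterministic} SSA: given the order on $S \subseteq \UnionTypeSpace$, set $\SeqTable(\type_{i'}, \type_i) = 1$ iff $\type_i \in S$ outranks $\type_{i'}$ in the order (treating the dummy type and any $\type_{i'} \notin S$ as sitting at the bottom), and $0$ otherwise. Substituting into (\ref{eq:seq_pi}) and unrolling (\ref{eq:seq_alloc_self})--(\ref{eq:seq_alloc}) shows that the resulting $(\SeqAllocs, \SeqTranses)$ satisfies all defining constraints of $\SeqAllocSpace$ and that $\SeqAlloc(\type_i, n)$ equals the vertex value $\PreAlloc(\type_i)$. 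Since $\SeqAllocSpace$ is itself convex (cut out by linear constraints), convex combinations of these vertex-configurations also lie in $\SeqAllocSpace$ and project to the corresponding convex combinations of ordered subset auctions. As every point of $\PreAllocSpace$ is a convex combination of its vertices, the projection of $\SeqAllocSpace$ onto $\PreAlloc(\cdot) = \SeqAllocs(\cdot, n)$ covers all of $\PreAllocSpace$, completing the proof.

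The main obstacle I anticipate is the extreme-point characterization: verifying that the Border polytope is genuinely a polymatroid (as opposed to merely submodular-looking) and that its vertices are exactly ordered subset auctions. One way to sidestep detailed polymatroid machinery is a direct linear-objective argument: for any $c \in \mathbb{R}^{\UnionTypeSpace}$, show that $\sum_{\type_i} c_{\type_i} \PreAlloc(\type_i)$ is maximized over $\PreAllocSpace$ by ranking types in decreasing order of $c_{\type_i}$, truncating at $0$, and running the ordered subset auction induced by this ranking (a rearrangement argument). If necessary this can be carried out under a slight perturbation of $c$ to ensure unique optimizers, establishing that every vertex has this form.
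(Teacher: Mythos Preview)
Your proposal is correct but follows a genuinely different route from the paper's own proof. The paper does \emph{not} invoke Border's theorem or polymatroid machinery to prove \autoref{thm:seq_alloc}; instead it argues directly on $\SeqAllocSpace$. It sets up the linear program $\max \sum_{\Type_i} \SeqAllocs(\Type_i,n)$ subject to $\SeqAllocs(\Type_i,n)\le\PreAlloc(\Type_i)$ and $(\SeqAllocs,\SeqTranses)\in\SeqAllocSpace$, takes an optimum, and assumes for contradiction that some inequality is slack at a type~$\NonAugType$. It then introduces the notions of \emph{degenerate} and \emph{augmentable} types and proves two structural lemmas (\autoref{lem:degen} and \autoref{lem:aug}, via a network-flow interpretation of $\SeqAllocSpace$ with dynamic capacities) showing that in a non-degenerate optimum every non-augmentable type always beats every augmentable type for the token. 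This forces the set of non-augmentable types to violate the Border inequality \eqref{eq:MRM}, contradicting $\PreAlloc\in\PreAllocSpace$. In particular, Border's condition (\autoref{cor:1cons}) is obtained as a \emph{corollary} of this argument rather than as an input to it.

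Your approach instead takes Border's theorem as given, recognizes $\PreAllocSpace$ as the polymatroid $\PolyMat{\DistOracle_1}$, uses the greedy characterization of polymatroid vertices to identify extreme points with ordered-subset auctions, realizes each such auction by a deterministic transition table, and then closes under convex combinations using the (correct) observation that the constraints \eqref{eq:seq_alloc0}--\eqref{eq:seq_last} are linear in $(\SeqAllocs,\SeqTranses)$. This is exactly the machinery the paper develops \emph{later}, in Sections~\ref{sec:border_k}--\ref{sec:border_matroid}, for the $k$-unit and matroid settings (cf.\ \autoref{prop:vertex}, \autoref{lem:g_k-submodular}, \autoref{thm:border_k-unique}); you are effectively specializing that general argument to $k=1$. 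The trade-off: the paper's Section~\ref{sec:border_1} proof is self-contained and re-derives Border's condition along the way, while yours is shorter and more conceptual but relies on Border's theorem and standard polymatroid facts as black boxes. Your anticipated ``obstacle'' about the vertex characterization is not a real gap---the coverage function $g_1(S)=\Prx{\exists i:\Type_i\in S}$ is monotone submodular with $g_1(\emptyset)=0$, so \autoref{prop:vertex} applies directly.
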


\begin{corollary}
Given a blackbox for each agent~$i$ that solves for the optimal expected revenue $\Rev_i(\InAlloc_i)$ for any feasible interim
allocation rule $\InAllocs$, the optimal interim allocation rule can be computed by the following convex program which is of
quadratic size in the total number of types.
\begin{alignat*}{3}
    \text{maximize}     &\qquad& & \sum_{i=1}^n \Rev_i(\InAlloc_i) & \qquad & \\
    \text{subject to}   && & \SeqAllocs(\Type_i, n) = \PreAlloc(\Type_i) = \InAlloc_i(\Type_i) \DistProb_i(\Type_i),&
                            &\forall \Type_i \in \UnionTypeSpace\\
                        && & (\SeqAllocs, \SeqTranses) \in \SeqAllocSpace.
\end{alignat*}
Furthermore, given an optimal assignment for this program, the computed interim allocation rule can be implemented by SSA using
the the stochastic transition table defined by:\footnote{If the denominator is
  zero, i.e., $\SeqAlloc(\Type_i, i'-1)=0$, we can set
  $\SeqTable(\Type_i, \Type_{i'})$ to an arbitrary value in $[0,1]$.}
\begin{align*}
    \SeqTable(\Type_{i'}, \Type_{i}) &= \frac{\SeqTrans(\Type_{i'}, \Type_{i})}{\SeqAlloc(\Type_{i'}, i-1) \DistProb_{i}(\Type_{i})}, &
        & \forall \Type_{i} \in \TypeSpace_{\Range{1}{n}}, \forall \Type_{i'} \in \TypeSpace_{\Range{0}{i-1}}.
\end{align*}
\end{corollary}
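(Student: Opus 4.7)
The plan is to derive this corollary by assembling three ingredients already available: Proposition~\ref{claim:outcome-impl}, which expresses the optimal multi-agent revenue as $\max_{\InAllocs \in \InAllocSpace} \sum_i \Rev_i(\InAlloc_i)$; the normalization bijection $\PreAlloc(\Type_i)=\InAlloc_i(\Type_i)\DistProb_i(\Type_i)$ (whose image of $\InAllocSpace$ is $\PreAllocSpace$ by definition); and Theorem~\ref{thm:seq_alloc}, which states that $\PreAllocSpace$ is precisely the projection of the polytope $\SeqAllocSpace$ onto the coordinates $\SeqAllocs(\cdot,n)$.

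First, I would rewrite the optimization from Proposition~\ref{claim:outcome-impl} in terms of $\PreAlloc$ rather than $\InAllocs$, so the objective becomes $\sum_i \Rev_i(\PreAlloc(\Type_i)/\DistProb_i(\Type_i))$ and the feasibility condition becomes $\PreAlloc \in \PreAllocSpace$. Then I would substitute $\PreAllocSpace$ using Theorem~\ref{thm:seq_alloc}: the constraint $\PreAlloc \in \PreAllocSpace$ is equivalent to the existence of $(\SeqAllocs,\SeqTranses)\in\SeqAllocSpace$ with $\SeqAllocs(\Type_i,n)=\PreAlloc(\Type_i)$ for all $\Type_i$. Plugging this lifted feasibility into the program yields exactly the displayed convex program. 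It remains a convex program because (i) the objective is a sum of concave functions $\Rev_i(\cdot)$ (Proposition~\ref{claim:concave}), composed with a linear map from $\PreAlloc$ to $\InAlloc_i$, and (ii) the constraints \eqref{eq:seq_alloc0}--\eqref{eq:seq_last} defining $\SeqAllocSpace$ together with the linking equation $\SeqAllocs(\Type_i,n)=\InAlloc_i(\Type_i)\DistProb_i(\Type_i)$ are all linear. The size is quadratic in $|\UnionTypeSpace|$ because the variables $\SeqAllocs(\Type_{i'},i)$ and $\SeqTrans(\Type_{i'},\Type_i)$ together with the constraints \eqref{eq:seq_alloc_self}--\eqref{eq:seq_last} are indexed by pairs of types (one from an earlier agent, one from a later agent).

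For the implementation statement, given an optimal $(\SeqAllocs,\SeqTranses)$ I would define $\SeqTable$ by the displayed formula and invoke Theorem~\ref{thm:seq_alloc}. Two small checks are needed. The formula must produce a valid probability in $[0,1]$: nonnegativity follows from $\SeqTrans \ge 0$ in \eqref{eq:seq_last}, and the upper bound $\SeqTable(\Type_{i'},\Type_i)\le 1$ follows from the companion inequality $\SeqTrans(\Type_{i'},\Type_i)\le \SeqAlloc(\Type_{i'},i-1)\DistProb_i(\Type_i)$ in \eqref{eq:seq_last}; the footnote handles the degenerate case where the denominator vanishes, in which case the value of $\SeqTable(\Type_{i'},\Type_i)$ does not affect the execution of SSA since agent $i'$ holds the token with probability zero entering iteration $i$. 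Second, with this choice of $\SeqTable$, the recurrence \eqref{eq:seq_pi} is satisfied by construction, and together with \eqref{eq:seq_alloc0}--\eqref{eq:seq_alloc} this is exactly the dynamic program computing the interim allocation rule induced by SSA with parameter $\SeqTable$. Hence the SSA mechanism parameterized by this $\SeqTable$ implements $\PreAlloc = \SeqAllocs(\cdot,n)$, as desired.

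The main obstacle is essentially bookkeeping rather than mathematical depth: ensuring that the lift from $\PreAllocSpace$ to $\SeqAllocSpace$ preserves optimality (immediate from Theorem~\ref{thm:seq_alloc}) and that the division in the formula for $\SeqTable$ is handled consistently. Once Theorem~\ref{thm:seq_alloc} is in hand, the corollary is a direct assembly.
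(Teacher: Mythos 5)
Your assembly is correct and matches the paper's (implicit) reasoning: the paper leaves this corollary unproved as an immediate consequence of Theorem~\ref{thm:seq_alloc} combined with Proposition~\ref{claim:outcome-impl}, Proposition~\ref{claim:concave}, and the remark just before the theorem that any $(\SeqAllocs,\SeqTranses)\in\SeqAllocSpace$ yields a transition table by solving \eqref{eq:seq_pi} for $\SeqTable$. Your additional checks (that $\SeqTable\in[0,1]$ via \eqref{eq:seq_last}, that the footnote's degenerate case is harmless, and that the map $\PreAlloc\mapsto\InAlloc_i$ is linear so concavity is preserved) are exactly the bookkeeping the paper elides.
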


Next, we present a few definitions and lemmas that are used in the
proof of \autoref{thm:seq_alloc}.  Two transition tables $\SeqTable$
and $\SeqTable'$ are considered \emph{equivalent} if their induced
normalized interim allocation rules for SSA are equal.  Type $\Type_i$
is called \emph{degenerate} for $\SeqTable$ if in the execution of SSA
the token is sometimes passed to type $\Type_i$ but it is always taken
away from $\Type_i$ later, i.e., if $\SeqAllocs(\Type_i,i) > 0$ but
$\SeqAllocs(\Type_i,n) = 0$.  The stochastic transition table
$\SeqTable$ is degenerate if there is a degenerate type. For
$\SeqTable$, type $\Type_i$ is {\em augmentable} if there exists a
$\SeqTable'$ (with a corresponding $\SeqAllocs'$) which is
\emph{equivalent} to $\SeqTable$ for all types expect $\Type_i$ and
has $\SeqAllocs(\Type_i,n) > \SeqAllocs'(\Type_i,n)$.\footnote{We
  define $\Type_0$ to be augmentable unless the dummy agent never
  retains the token in which case all agents are non-augmentable (and
  for technical reasons we declare the dummy agent to be
  non-augmentable as well).}

\begin{lemma}
\label{lem:degen}%
For any stochastic transition table $\SeqTable$ there exists an equivalent $\SeqTable'$ that is non-degenerate.
\end{lemma}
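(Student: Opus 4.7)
The proof proceeds by an iterative modification argument: at each step, given a table with a degenerate type, I construct an equivalent table with strictly fewer degenerate types. Since the number of types is finite, this terminates at a non-degenerate table.

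Suppose $\SeqTable$ has a degenerate type $\type^* \in \TypeSpace_{i^*}$, so that $\SeqAlloc(\type^*, i^*) > 0$ while $\SeqAlloc(\type^*, n) = 0$. Because the SSA dynamics are Markov (the transition at iteration $j$ depends only on the current holder and the independently drawn $\type_j$), the conditional distribution of the eventual token-holder starting from state $(\type^*, i^*)$ is well-defined; call it $q$, supported on $\bigcup_{j > i^*} \TypeSpace_j$. Degeneracy says that $q$ has total mass one and puts zero mass on $\type^*$ itself, i.e., all flow entering $(\type^*, i^*)$ eventually leaves.

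The modified table $\SeqTable'$ will ``short-circuit'' the flow through $\type^*$. First, I set $\SeqTable'(\type_{i'}, \type^*) = 0$ for every $\type_{i'}$, so $\type^*$ never receives the token; in particular $\SeqAlloc'(\type^*, \cdot) \equiv 0$ and $\type^*$ ceases to be degenerate. To preserve the interim allocation rule, I must reroute the aggregate removed flow $\sum_{i' < i^*, \type_{i'}} \SeqAlloc(\type_{i'}, i^*{-}1)\,\SeqTable(\type_{i'}, \type^*)\,\DistProb_{i^*}(\type^*)$ so that it still reaches the destinations dictated by $q$. I do this iteration by iteration for $j = i^*{+}1, \ldots, n$: at iteration $j$ I increment $\SeqTable'(\type_{i'}, \type_j)$ for each predecessor $\type_{i'}$ so that the additional flow transferred into $(\type_j, j)$ under $\SeqTable'$ matches the portion of the removed flow that $q$ assigns to $\type_j$. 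Because the interim allocation $\PreAlloc(\type_j) = \SeqAlloc(\type_j, n)$ depends only on the net flow into each terminal node, matching these aggregate flows preserves equivalence.

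\textbf{Main obstacle.} The crux is showing that the compensating adjustments remain valid, i.e., that $\SeqTable'(\type_{i'}, \type_j) \in [0, 1]$ and that the constraints \eqref{eq:seq_alloc0}--\eqref{eq:seq_last} defining $\SeqAllocSpace$ continue to hold. The feasibility of the additional transfer rates follows from a ``conservation-of-capacity'' observation: the extra probability $\SeqAlloc'(\type_{i'}, j{-}1) - \SeqAlloc(\type_{i'}, j{-}1)$ that the token is retained at $\type_{i'}$ in $\SeqTable'$ (because it was not handed off to $\type^*$) provides exactly the slack in the capacity bound $\SeqTrans'(\type_{i'}, \type_j) \leq \SeqAlloc'(\type_{i'}, j{-}1)\,\DistProb_j(\type_j)$ needed to accommodate the rerouted flow. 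The preservation of interim allocations is then verified by induction on $j$, using the Markov property of the SSA to relate the conditional future under $\SeqTable'$ to the original distribution $q$. Once this is in place, iterating over all (finitely many) degenerate types yields the desired non-degenerate equivalent $\SeqTable'$.
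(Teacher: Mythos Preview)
Your high-level induction on the number of degenerate types matches the paper, but the rerouting step as you describe it does not preserve the interim allocation rule.

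You define $q$ as the \emph{terminal} distribution of the token started from $(\type^*,i^*)$ and then, for each $j>i^*$, inject extra flow from each predecessor $\type_{i'}$ into $(\type_j,j)$ so that the additional inflow equals the removed mass times $q(\type_j)$.  The problem is that once this extra mass enters $(\type_j,j)$ it is still subject to the unchanged transitions out of $\type_j$ at iterations $j{+}1,\ldots,n$; part of it leaks onward, so $\SeqAlloc'(\type_j,n)\neq\SeqAlloc(\type_j,n)$.  Concretely, take three agents each with a single type (all probabilities~$1$), $\SeqTable(\type_0,\type_1)=\SeqTable(\type_1,\type_2)=1$, $\SeqTable(\type_2,\type_3)=\tfrac12$.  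Then $\type_1$ is degenerate and $q(\type_2)=q(\type_3)=\tfrac12$.  Your rule gives $\SeqTable'(\type_0,\type_2)=\tfrac12$ and $\SeqTable'(\type_0,\type_3)=1$, which yields terminal allocations $(\tfrac14,\tfrac34)$ for $(\type_2,\type_3)$ instead of the original $(\tfrac12,\tfrac12)$.  Matching terminal masses at the point of injection is simply the wrong invariant.

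The paper fixes this by working with the flow variables $(\SeqAllocs,\SeqTranses)$ rather than the table $\SeqTable$.  Its $\Reroute$ operator, applied once per predecessor $\type_{i'}$, removes the edge $\SeqTrans(\type_{i'},\type^*)$ and transfers a proportional share of \emph{every} outgoing edge $\SeqTrans(\type^*,\type_{i''})$ (and of $\SeqAlloc(\type^*,\cdot)$) to the corresponding quantities for $\type_{i'}$.  This keeps the total inflow into every node $(\type_{i''},i'')$ with $i''>i^*$ unchanged, so the entire downstream evolution is untouched and equivalence is immediate; since $\SeqAlloc(\type^*,n)=0$, the predecessor's own terminal value is also preserved.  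If you want to salvage your forward-rerouting viewpoint, the right target is the \emph{first-hop} flows $\SeqTrans(\type^*,\type_j)$ (so that $\SeqAlloc'(\type_j,j)=\SeqAlloc(\type_j,j)$ for all $\type_j\neq\type^*$), not the terminal distribution $q$.
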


\begin{lemma}
\label{lem:aug}%
For any non-degenerate stochastic transition table $\SeqTable$, any
non-augmentable type $\Type_{i}$ always wins against any augmentable
type $\Type_{i'}$.
I.e.,
\begin{itemize}
\item if $i' < i$ and $\Type_{i'}$ has non-zero probability of holding
  the token then $\SeqTable(\Type_{i'}, \Type_{i})=1$, i.e.,
  $\Type_{i}$ always takes the token away from $\Type_{i'}$, and
\item if $i < i'$ and $\Type_{i}$ has non-zero probability of holding
  the token then $\SeqTable(\Type_{i}, \Type_{i'})=0$, i.e.,
  $\Type_{i'}$ never takes the token away from $\Type_{i}$.
\end{itemize}
\end{lemma}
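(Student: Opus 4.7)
I would prove both cases by contradiction: assume $\Type_i$ is non-augmentable, $\Type_{i'}$ is augmentable, yet the conclusion fails --- in Case 1 ($i' < i$) this means $\SeqAllocs(\Type_{i'}, i-1) > 0$ and $\SeqTable(\Type_{i'}, \Type_i) < 1$; in Case 2 ($i < i'$) it means $\SeqAllocs(\Type_i, i'-1) > 0$ and $\SeqTable(\Type_i, \Type_{i'}) > 0$. The goal is to produce a feasible direction $(d\SeqAllocs, d\SeqTranses)$ in the tangent cone of $\SeqAllocSpace$ at $(\SeqAllocs, \SeqTranses)$ whose final-iteration coordinate $d\SeqAllocs(\cdot, n)$ is strictly positive at $\Type_i$ and zero at every other type --- this immediately contradicts non-augmentability of $\Type_i$. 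By \autoref{lem:degen} I assume without loss of generality that $\SeqTable$ is non-degenerate, so $\SeqAllocs(\Type_i, k) > 0$ for all $i \le k \le n$ and $\SeqAllocs(\Type_{i'}, k) > 0$ for all $i' \le k \le n$.

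The direction is built in two layers. The first layer is a single-coordinate pulse in $\SeqTable$: in Case 1 raise $\SeqTable(\Type_{i'}, \Type_i)$ infinitesimally (feasible because it is $< 1$), in Case 2 lower $\SeqTable(\Type_i, \Type_{i'})$ (feasible because it is $> 0$). To first order this creates a pulse of magnitude $+\mu$ at $\SeqAllocs(\Type_i,\cdot)$ and $-\mu$ at $\SeqAllocs(\Type_{i'},\cdot)$ for some $\mu > 0$. Propagating this pulse forward using the linear equations \eqref{eq:seq_alloc0}--\eqref{eq:seq_pi}, the only ``new'' coordinate that can become contaminated at each subsequent iteration $k$ is $\SeqAllocs(\Type_k, k)$. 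I cancel this contamination by infinitesimally adjusting whichever of $\SeqTable(\Type_i, \Type_k)$ or $\SeqTable(\Type_{i'}, \Type_k)$ has slack in the correct direction; a short case analysis on the sign of $\SeqTable(\Type_i, \Type_k) - \SeqTable(\Type_{i'}, \Type_k)$ always yields a feasible choice, with non-degeneracy guaranteeing that the scale factors $\SeqAllocs(\Type_i, k-1), \SeqAllocs(\Type_{i'}, k-1)$ are strictly positive. Because we never touch $\SeqTable(\Type_0,\cdot)$, the dummy's pulse stays identically zero, so conservation of probability gives $d\SeqAllocs(\Type_i, k) + d\SeqAllocs(\Type_{i'}, k) = 0$ after each cancellation; an easy induction on $k$ then shows $d\SeqAllocs(\Type_i, k) > 0$ all the way through $k = n$.

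The resulting direction has $d\SeqAllocs(\Type_i, n) > 0$, $d\SeqAllocs(\Type_{i'}, n) < 0$, and zero at every other final coordinate. The second layer invokes augmentability of $\Type_{i'}$ to supply another feasible direction $(d\tilde{\SeqAllocs}, d\tilde{\SeqTranses})$ with $d\tilde{\SeqAllocs}(\Type_{i'}, n) > 0$ and zero at all other final coordinates. A positive combination of the two directions chosen so that the two $\Type_{i'}$-components cancel yields a tangent direction in $\SeqAllocSpace$ whose only non-zero final coordinate is a strictly positive entry at $\Type_i$, contradicting non-augmentability of $\Type_i$. The main obstacle is the inductive cancellation step: one must check that the needed single-coordinate adjustment is always feasible as an infinitesimal move. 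This is precisely where non-degeneracy (positivity of $\SeqAllocs(\Type_i, k-1)$ and $\SeqAllocs(\Type_{i'}, k-1)$) combines with the case analysis on $\SeqTable(\Type_i, \Type_k) - \SeqTable(\Type_{i'}, \Type_k)$ to close the argument.
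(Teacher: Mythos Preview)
Your approach is the paper's: both argue by contradiction, build a way to shift allocation probability from $\Type_{i'}$ to $\Type_i$ while leaving all other final coordinates fixed, and then combine with the augmentability witness for $\Type_{i'}$ to augment $\Type_i$. The paper packages the shift as a finite operator $\Reroute(\Type_{i'},\Type_i,\rho)$ applied to a convex combination $(1-\alpha)(\SeqAllocs,\SeqTranses)+\alpha(\SeqAllocs',\SeqTranses')$ with explicit $\alpha,\rho$; you do the same thing in the tangent cone, which avoids the parameter bookkeeping. Your downstream ``cancellation'' is the infinitesimal version of $\Reroute$'s proportional subtree reassignment.

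One concrete gap: your claim that non-degeneracy gives $\SeqAllocs(\Type_i,k)>0$ for all $k\ge i$ is unjustified in Case~1. Non-degeneracy only says that $\SeqAllocs(\Type_i,i)>0$ forces $\SeqAllocs(\Type_i,n)>0$; it does not supply $\SeqAllocs(\Type_i,i)>0$, and nothing in the Case~1 hypothesis ($\SeqAllocs(\Type_{i'},i-1)>0$, $\SeqTable(\Type_{i'},\Type_i)<1$) forces it --- take $\SeqTable(\cdot,\Type_i)\equiv 0$. With $A=\SeqAllocs(\Type_i,k-1)=0$ your case analysis cannot adjust $\SeqTable(\Type_i,\Type_k)$ as written, and if you instead adjust $\SeqTable(\Type_{i'},\Type_k)$ you no longer control the sign of $d\SeqAllocs(\Type_i,k)$. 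The fix is to drop the case analysis and work directly in $(\SeqAllocs,\SeqTranses)$-space: for every $\Type_k$ set
\[
d\SeqTranses(\Type_i,\Type_k)=\mu_k\,\SeqTable(\Type_{i'},\Type_k)\,\DistProb_k(\Type_k),\qquad
d\SeqTranses(\Type_{i'},\Type_k)=-\mu_k\,\SeqTable(\Type_{i'},\Type_k)\,\DistProb_k(\Type_k).
\]
The first lies in $[0,\,d\SeqAllocs(\Type_i,k-1)\DistProb_k(\Type_k)]$ regardless of whether $A=0$; the second just scales $\SeqTranses(\Type_{i'},\cdot)$ by the same factor as $\SeqAllocs(\Type_{i'},\cdot)$, so the dynamic capacity constraint is preserved. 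This gives $d\SeqAllocs(\Type_i,k)=\mu_k\bigl(1-\sum_{\Type_k}\SeqTable(\Type_{i'},\Type_k)\DistProb_k(\Type_k)\bigr)$, which is strictly positive using only non-degeneracy of $\Type_{i'}$. That is precisely the infinitesimal $\Reroute$, and the induction then goes through without any positivity assumption on $\SeqAllocs(\Type_i,\cdot)$.
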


It is possible to view the token passing in stochastic sequential allocation as a network flow.  From this perspective, the
augmentable and non-augmentable types form a minimum-cut and \autoref{lem:aug} states that the token must eventually flow from
the augmentable to non-augmentable types.  We defer the proof of this lemma to Appendix~\ref{app:border_1} where the main difficulty
in its proof is that the edges in the relevant flow problem have dynamic(non-constant) capacities.

\begin{proof}[Proof of Theorem~\ref{thm:seq_alloc}]
Any normalized interim allocation rule that can be implemented by the SSA algorithm is obviously feasible, so we only need to
prove the opposite direction. The proof is by contradiction, i.e., given a normalized interim allocation rule $\PreAlloc$ we show
that if there is no $(\SeqAllocs, \SeqTranses) \in \SeqAllocSpace$ such that $\PreAlloc(\cdot)=\SeqAllocs(\cdot, n)$, then
$\PreAlloc$ must be infeasible. Consider the following linear program for a given $\PreAlloc$ (i.e., $\PreAlloc$ is constant).
\begin{alignat*}{3}
    \text{maximize}     & \qquad    & &\sum_{\Type_{i} \in \TypeSpace_{\Range{1}{n}}} \SeqAllocs(\Type_{i}, n) & \qquad & \\
    \text{subject to}   &           & &\SeqAllocs(\Type_{i}, n) \le \PreAlloc(\Type_{i}), & &\forall \Type_{i} \in \TypeSpace_{\Range{1}{n}} \\
                        &           & &(\SeqAllocs, \SeqTrans) \in \SeqAllocSpace.
\end{alignat*}

Let $(\SeqAllocs, \SeqTrans)$ be an optimal assignment of this LP. If the first set of inequalities are all tight (i.e.,
$\PreAlloc(\cdot)=\SeqAllocs(\cdot, n)$) then $\PreAlloc$ can be implemented by the SSA, so by contradiction there must exists a
type $\NonAugType \in \UnionTypeSpace$ for which the inequality is not tight. Note that $\NonAugType$ cannot be augmentable
--- otherwise, by the definition of augmentability, the objective of the LP could be improved. Partition
$\TypeSpace_\Agents$ to augmentable types $\TypeSpace^+_\Agents$ and non-augmentable types $\TypeSpace^-_\Agents$. Note that
$\TypeSpace^-_\Agents$ is non-empty because $\NonAugType \in \TypeSpace^-_\Agents$. Without loss of generality, by
\autoref{lem:degen} we may assume that $(\SeqAllocs, \SeqTrans)$ is non-degenerate.\footnote{By \autoref{lem:degen}, there exits
an non-degenerate assignment with the same objective value.}

An agent wins if she holds the token at the end of the SSA algorithm. The ex ante probability that some agent with
non-augmentable type wins is $\sum_{\Type_i \in \TypeSpace^-_\Agents} \SeqAlloc(\Type_i, n)$. On the other hand,
\autoref{lem:aug} implies that the first (in the order agents are considered by SSA) agent with non-augmentable type will take
the token from her predecessors and, while she may lose the token to another non-augmentable type, the token will not be
relinquished to any augmentable type.  Therefore, the probability that an agent with a non-augmentable type is the winner is
exactly equal to the probability that at least one such agent exists, therefore
\begin{align*}
    \Prx[\Types \sim \DistProbs]{\exists i : \Type_{i} \in \TypeSpace^-_\Agents}
        &= \sum_{\Type_{i} \in \TypeSpace^-_\Agents} \SeqAlloc(\Type_{i}, n)
        < \sum_{\Type_{i} \in \TypeSpace^-_\Agents} \PreAlloc(\Type_{i}).
\end{align*}
The second inequality follows from the assumption above that $\NonAugType$ satisfies $\SeqAlloc(\NonAugType, n) <
\PreAlloc(\NonAugType)$. We conclude that $\PreAlloc$ requires an agent with non-augmentable type to win more frequently than
such an agent exists, which is a contradiction to interim feasibility of $\PreAlloc$.
\end{proof}

The contradiction that we derived in the proof of \autoref{thm:seq_alloc} yields a necessary and sufficient condition, as
formally stated in the following corollary, for feasibility of any given normalized interim allocation rule.
\begin{corollary}
\label{cor:1cons}%
A normalized interim allocation rule $\PreAlloc$ is feasible if and only if
\begin{align}
    \sum_{\TypeVar \in \TypeSubset} \PreAlloc(\TypeVar)  & \le \Prx[\Types \sim \DistProbs]{\exists i: \Type_i \in \TypeSubset}, &
        & \forall \TypeSubset \subseteq \UnionTypeSpace \tag{MRMB} \label{eq:MRMB}
\end{align}
\end{corollary}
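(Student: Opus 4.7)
The plan is to prove the two directions of the biconditional separately, with necessity being a short direct argument and sufficiency following essentially by reading off a contrapositive from the argument already carried out in the proof of \autoref{thm:seq_alloc}.

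For necessity, I would suppose that $\PreAlloc$ is feasible, meaning that it is induced by some ex post allocation rule $\epallocs$ obeying the single-unit constraint $\sum_i \epalloc_i(\types) \le 1$. Fix any $\TypeSubset \subseteq \UnionTypeSpace$. By the definition of the normalized interim allocation rule,
\[
\sum_{\TypeVar \in \TypeSubset} \PreAlloc(\TypeVar) \;=\; \sum_{i}\sum_{\TypeVar \in \TypeSubset \cap \TypeSpace_i} \InAlloc_i(\TypeVar)\DistProb_i(\TypeVar) \;=\; \Ex[\Types \sim \DistProbs]{\sum_i \epalloc_i(\Types)\cdot \mathds{1}[\Type_i \in \TypeSubset]}.
\]
The random variable inside the expectation is a nonnegative integer that is at most $1$ (by the single-unit feasibility constraint), and it can only be positive when there exists some $i$ with $\Type_i \in \TypeSubset$. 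Hence the right-hand side is bounded above by $\Prx[\Types \sim \DistProbs]{\exists i: \Type_i \in \TypeSubset}$, which is exactly condition \eqref{eq:MRMB}.

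For sufficiency, I would argue by contrapositive and invoke the machinery already developed for \autoref{thm:seq_alloc}. Suppose $\PreAlloc$ is infeasible. Then by \autoref{thm:seq_alloc}, there is no $(\SeqAllocs, \SeqTranses) \in \SeqAllocSpace$ with $\SeqAllocs(\Type_i, n) = \PreAlloc(\Type_i)$ for all $\Type_i \in \UnionTypeSpace$. Consider the LP from that proof; any optimal solution $(\SeqAllocs,\SeqTranses)$ must leave the constraint $\SeqAllocs(\NonAugType, n) \le \PreAlloc(\NonAugType)$ strict for some type $\NonAugType$, and as argued there $\NonAugType$ is necessarily non-augmentable (else the LP value could be improved). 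By \autoref{lem:degen} we may take $(\SeqAllocs,\SeqTranses)$ to be non-degenerate, and then \autoref{lem:aug} implies that once any non-augmentable type ever takes the token, the token is only ever relinquished to another non-augmentable type. Hence the winner's type lies in $\TypeSpace^-_\Agents$ precisely when at least one agent has a type in $\TypeSpace^-_\Agents$, giving
\[
\Prx[\Types \sim \DistProbs]{\exists i: \Type_i \in \TypeSpace^-_\Agents} \;=\; \sum_{\TypeVar \in \TypeSpace^-_\Agents} \SeqAlloc(\TypeVar, n) \;<\; \sum_{\TypeVar \in \TypeSpace^-_\Agents} \PreAlloc(\TypeVar),
\]
where the strict inequality uses $\NonAugType \in \TypeSpace^-_\Agents$ with a slack constraint. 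Thus \eqref{eq:MRMB} is violated at $\TypeSubset = \TypeSpace^-_\Agents$, completing the contrapositive.

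Since the heavy lifting (the combinatorial behavior of SSA via \autoref{lem:degen} and \autoref{lem:aug}) has already been done, there is no real obstacle here; the main care is just to observe that the proof of \autoref{thm:seq_alloc} does not merely produce a contradiction but actually exhibits an explicit violating subset of types, so that it directly witnesses failure of \eqref{eq:MRMB} rather than just failure of interim feasibility in the abstract.
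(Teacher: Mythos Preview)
Your proposal is correct and matches the paper's own justification: the paper explicitly states that necessity is trivial and that sufficiency is read off from the contradiction exhibited in the proof of \autoref{thm:seq_alloc}, which is precisely what you do. One small slip: in your necessity argument the random variable $\sum_i \epalloc_i(\Types)\cdot \mathds{1}[\Type_i \in \TypeSubset]$ need not be integer-valued (the ex post rule may be randomized), but it is still a $[0,1]$-valued quantity that vanishes whenever no $\Type_i$ lies in $\TypeSubset$, so your bound goes through unchanged.
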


The necessity of condition \eqref{eq:MRMB} is trivial and its sufficiency was previously proved by \citet{B91}. This condition
implies that the space of all feasible normalized interim allocation rules, $\PreAllocSpace$, can be specified by $2^\NumTypes$
linear constraints on $\NumTypes$-dimensional vectors $\PreAlloc$. An important consequence of \autoref{thm:seq_alloc} is that
$\PreAllocSpace$ can equivalently be formulated by only $O(\NumTypes^2)$ variables and  $O(\NumTypes^2)$ linear constraints as a
projection of $\SeqAllocSpace$, therefore any optimization problem over $\PreAllocSpace$ can equivalently be solved over
$\SeqAllocSpace$.

\subsection{$k$-Unit Feasibility Constraints}
\label{sec:border_k}%

In this section, we consider environments where at most $k$ agents can
be simultaneously allocated to. First, we generalize Border's
characterization of interim feasibility to environments with $k$-unit
feasibility constraint. Our generalization implies that the space of
feasible normalized interim allocation rules is a polymatroid. Second,
we observe that optimization problems can be efficiently solved over
polymatroids; this allows us to optimize over feasible interim
allocation rules. Third, we show that the normalized interim
allocation rules corresponding to the vertices of this polymatroid are
implemented by simple deterministic ordered-subset-based allocation
mechanisms. Furthermore, for any point in this polymatroid, the
corresponding normalized interim allocation rule can be implemented
by, (i) expressing it as a convex combination of the vertices of the
polymatroid, (ii) sampling from this convex combination, and (iii)
using the ordered subset mechanism corresponding to the sampled
vertex. We present an efficient randomized rounding routine for
rounding a point in a polymatroid to a vertex which combines the steps
(i) and (ii). These approaches together yield efficient algorithms for
optimizing and implementing interim allocation rules.

\paragraph{Polymatroid Preliminaries.}
Consider an arbitrary set function $\SubFun : 2^{\Universe} \to \PosReals$ defined over an arbitrary finite set $\Universe$; let
$\PolyMat{\SubFun}$ denote the polytope associate with $\SubFun$ defined as
\begin{align*}
    \PolyMat{\SubFun} &= \left\{ \PolyVec \in \PosReals^\Universe \middle| \forall \USubset \subseteq \Universe : \PolyVec(\USubset) \le \SubFun(\USubset)
    \right\}
\end{align*}
where $\PolyVec(\USubset)$ denotes $\sum_{\Elem \in \USubset}
\PolyVec(\Elem)$. The convex polytope $\PolyMat{\SubFun}$ is called a
\emph{polymatroid} if $\SubFun$ is a submodular function. Even though
a polymatroid is defined by an exponential number of linear
inequalities, the separation problem for any given $\PolyVec \in
\PosReals^{\Universe}$ can be solved in polynomial time as follows:
find $\displaystyle \USubset^* = \argmin_{\USubset}
\SubFun(\USubset)-\PolyVec(\USubset)$; if $\PolyVec$ is infeasible,
the inequality $\PolyVec(\USubset^*) \le \SubFun(\USubset^*)$ must be
violated, and that yields a separating hyperplane for $\PolyVec$. Note
that $\SubFun(\USubset)-\PolyVec(\USubset)$ is itself submodular in
$\USubset$, so it can be minimized in strong polynomial
time. Consequently, optimization problems can be solved over
polymatroids in polynomial time.  Next, we describe a characterization
of the vertices of a polymatroid. This characterization plays an
important role in our proofs and also in our ex post implementation of
interim allocation rules.

\begin{definition}[ordered subset]
\label{def:ranking}%
For an arbitrary finite set $\Universe$, an {\em ordered subset}
$\Ranking \subset \Universe$ is given by an ordering on elements
$\Ranking = (\RElem_1,\ldots,\RElem_{|\Ranking|})$ where shorthand
notation $\RElem_r \in \Ranking$ denotes the $r$th element in
$\Ranking$.
\end{definition}

\begin{proposition}
\label{prop:vertex}%
Let $\SubFun : 2^\Universe \to \PosReals$ be an arbitrary
non-decreasing submodular function with $\SubFun(\emptyset) = 0$ and
let $\PolyMat{\SubFun}$ be the associated polymatroid with the set of
vertices $\VertexSet{\PolyMat{\SubFun}}$. Every ordered subset
$\Ranking$ of $\Universe$ (see \autoref{def:ranking}) corresponds to a
vertex of $\PolyMat{\SubFun}$, denoted by
$\Vertex{\Ranking}{\PolyMat{\SubFun}}$, which is computed as
follows.  
\begin{align*}
    \forall \Elem \in \Universe :& \qquad &
    \PolyVec(\Elem) &=
        \begin{cases}
        \SubFun(\{\RElem_1,\ldots,\RElem_r\})-\SubFun(\{\RElem_1,\ldots,\RElem_{r-1}\}) & \text{if $\Elem = \RElem_r \in \Ranking$} \\
        0       & \text{if $\Elem \not \in \Ranking$}
        \end{cases}
\end{align*}
Furthermore, for every $\PolyVec \in \VertexSet{\PolyMat{\SubFun}}$
there exist a corresponding $\Ranking$.
\end{proposition}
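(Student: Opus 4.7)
The plan is to prove the proposition in three stages: (a) verify that the point $\PolyVec$ constructed from an ordered subset $\Ranking$ lies in $\PolyMat{\SubFun}$, (b) verify it is a vertex, and (c) show every vertex arises in this way. This is the standard Edmonds-style characterization of polymatroid vertices and the main tool throughout is submodularity of $\SubFun$.

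For (a), I would fix an arbitrary $\USubset \subseteq \Universe$ and write $\USubset \cap \Ranking = \{\RElem_{r_1}, \ldots, \RElem_{r_k}\}$ with $r_1 < r_2 < \cdots < r_k$. Then
\begin{align*}
\PolyVec(\USubset) \;=\; \sum_{j=1}^k \bigl[ \SubFun(\{\RElem_1,\ldots,\RElem_{r_j}\}) - \SubFun(\{\RElem_1,\ldots,\RElem_{r_j - 1}\}) \bigr].
\end{align*}
By submodularity, each marginal on the right is at most the marginal of $\RElem_{r_j}$ over the smaller set $\{\RElem_{r_1},\ldots,\RElem_{r_{j-1}}\}$. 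Telescoping these inequalities and using $\SubFun(\emptyset)=0$ gives $\PolyVec(\USubset) \le \SubFun(\{\RElem_{r_1},\ldots,\RElem_{r_k}\})$, and monotonicity of $\SubFun$ finishes the bound by $\SubFun(\USubset)$. The elements of $\Universe \setminus \Ranking$ contribute zero, and the $\PolyVec(\Elem) \ge 0$ constraints are immediate from monotonicity applied to singleton marginals.

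For (b), I would exhibit $|\Universe|$ tight, linearly independent constraints at $\PolyVec$. For each $\RElem_r \in \Ranking$, the constraint $\PolyVec(\{\RElem_1,\ldots,\RElem_r\}) \le \SubFun(\{\RElem_1,\ldots,\RElem_r\})$ is tight by construction (telescoping the definition of $\PolyVec$). For each $\Elem \in \Universe \setminus \Ranking$, the non-negativity constraint $\PolyVec(\Elem) \ge 0$ is tight. The indicator vectors of the nested prefixes of $\Ranking$ together with the standard basis vectors for $\Universe \setminus \Ranking$ are clearly linearly independent, so $\PolyVec$ is determined by $|\Universe|$ linearly independent active constraints and is therefore a vertex of $\PolyMat{\SubFun}$.

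For (c), the key fact is that the family of tight sets at any point of a polymatroid is closed under union and intersection. Specifically, if $\PolyVec(\USubset_1)=\SubFun(\USubset_1)$ and $\PolyVec(\USubset_2)=\SubFun(\USubset_2)$, then submodularity
\begin{align*}
\SubFun(\USubset_1) + \SubFun(\USubset_2) \;\ge\; \SubFun(\USubset_1 \cup \USubset_2) + \SubFun(\USubset_1 \cap \USubset_2) \;\ge\; \PolyVec(\USubset_1 \cup \USubset_2) + \PolyVec(\USubset_1 \cap \USubset_2) \;=\; \PolyVec(\USubset_1)+\PolyVec(\USubset_2)
\end{align*}
forces equality throughout, so $\USubset_1 \cup \USubset_2$ and $\USubset_1 \cap \USubset_2$ are also tight. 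Now, given any vertex $\PolyVec^*$, it is determined by $|\Universe|$ linearly independent tight constraints from $\{\PolyVec(\USubset) \le \SubFun(\USubset)\}$ and $\{\PolyVec(\Elem) \ge 0\}$. Let $Z = \{\Elem : \PolyVec^*(\Elem)=0\}$ be the elements pinned at zero. Using the lattice closure above on the tight sets from submodular constraints, one can extract a chain $\emptyset = C_0 \subsetneq C_1 \subsetneq \cdots \subsetneq C_k \subseteq \Universe \setminus Z$ with each $C_r \setminus C_{r-1}$ a singleton and each $C_r$ tight; this is the step where I expect the main bookkeeping to live. Reading off the singleton differences in order produces an ordered subset $\Ranking$, and the greedy formula then reproduces $\PolyVec^*$ on $\Ranking$ while forcing $\PolyVec^*(\Elem)=0$ for $\Elem \in Z$, completing the correspondence.
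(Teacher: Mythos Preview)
The paper does not prove this proposition; immediately after stating it, the authors simply remark that the associated ordered subset can be found greedily and defer to \citet{S03} for the full treatment. Your three-part argument is the standard Edmonds characterization and is correct. Parts (a) and (b) are clean. For part (c), your assertion that the chain can be taken inside $\Universe \setminus Z$ with singleton steps does hold, but since you flag it as the place where the bookkeeping lives, here is the missing observation: if $\USubset$ is tight then so is $\USubset \setminus Z$ (because $\PolyVec^*(\USubset \setminus Z)=\PolyVec^*(\USubset)=\SubFun(\USubset)\ge \SubFun(\USubset\setminus Z)$ by monotonicity), so the tight sets restricted to $\Universe\setminus Z$ still form a lattice; and in a maximal chain of that lattice a non-singleton step $\{a,b,\ldots\}$ would mean no tight set separates $a$ from $b$, hence $e_a-e_b$ is orthogonal to every active constraint, contradicting that $\PolyVec^*$ is a vertex. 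This also forces the top of the chain to be all of $\Universe\setminus Z$. An alternative (and slightly slicker) route for (c) is LP duality: every vertex is the unique maximizer of some linear objective, and Edmonds' greedy algorithm shows that the optimum of any linear objective over $\PolyMat{\SubFun}$ has exactly the form in the proposition.
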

It is easy to see that for any $\PolyVec \in
\VertexSet{\PolyMat{\SubFun}}$, an associated $\Ranking$ can be found
efficiently by a greedy algorithm (see \citet{S03} for a comprehensive
treatment of polymatroids).

\paragraph{Ordered subset allocation mechanisms.}
The following class of allocation mechanisms are of particular importance both in our characterization of interim feasibility and
in ex post implementation of interim allocation rules.

\begin{definition}[ordered subset allocation mechanism]
\label{def:rank_alloc_k} Parameterized by a \emph{ordered subset} $\Ranking$ of $\UnionTypeSpace$ (see \autoref{def:ranking}),
the {\em ordered subset mechanism}, on profile of types $\Types \in
\TypeSpaces$, orders the agents based on their types according to
$\Ranking$, and allocates to the agents greedily (e.g., with $k$ units
available the $k$ first ordered agents received a unit).  If an agent
$i$ with type $\type_i \not\in \Ranking$ is never served.\footnote{The
  virtual valuation maximizing mechanisms from the classical
  literature on revenue maximizing auctions are ordered subset mechanisms,
  see, e.g., \citet{M81}, an observation made previousl by
  \citet{elk-07}. The difference between these ordered subset mechanisms and the
  classic virtual valuation maximization mechanisms is that our
  ordered subset will come from solving an optimization on the whole auction
  problem where as the Myerson's virtual values come directly from
  single-agent optimizations.}
\end{definition}

\paragraph{Characterization of interim feasibility.}
Border's characterization of interim feasibility for $k=1$ unit
auctions states that the probability of serving a type in a subspace
of type space is no more than the probability that a type in that
subspace shows up.  This upper bound is equivalent to the expected
minimum of one and the number of types from the subspace that show up;
furthermore, this equivalent phrasing of the upper bound extends to
characterize interim feasibility in $k$-unit auctions.

Consider expressing an ex post allocation for type profile $\types$ by
$\RvAlloc{\Types} \in \{0,1\}^\UnionTypeSpace$ as follows.  For all
$\type'_i \in \UnionTypeSpace$, $\RvAlloc{\Types}(\type'_i) = 1$ if
player $i$ is served and $\type_i = \type_i'$ and 0 otherwise.  This
definition of ex post allocations is convenient as the normalized
interim allocation rule is calculated by taking its expecation, i.e.,
$\PreAlloc(\type'_i) = \expect[\types]{\RvAlloc{\Types}(\type'_i)}$.
Ex post feasibility requires that,
\begin{align}
    \RvAlloc{\Types}(\TypeSubset) & \le \min(\Abs{\Types \cap \TypeSubset}, k), &
        &  \forall \Types \in \TypeSpaces, \forall \TypeSubset \subseteq \UnionTypeSpace
        \label{eq:bk1}
\end{align}
In other words: for any profile of types $\types$, the number of types
in $\TypeSubset$ that are served by $\RvAlloc{\types}$ must be at most
the number of types in $S$ that showed up in $\types$ and the upper
bound $k$.  Taking expectations of both sides of this equation with
respect to $\types$ motivates the following definition and theorem.

\begin{definition}
\label{def:incidence_k} The {\em expected rank function} for distribution $\denss$ and subspace $S \subset \UnionTypeSpace$ is
\begin{align}
    \DistOracle_k(\TypeSubset) &= \Ex[\Types \sim
      \DistProbs]{\min\left(\Abs{\Types \cap \TypeSubset}, k\right)}
    \tag{$\DistOracle_k$} \label{eq:dist_oracle}
\end{align}
where $\Types \cap \TypeSubset$ denotes $\{\Type_1,\ldots,\Type_n\}
\cap \TypeSubset$.
\end{definition}

\begin{theorem}
\label{thm:border_k}%
For supply constraint $k$ and distribution $\denss$, the space of all feasible normalized interim allocation rules, $\PreAllocSpace$, is the polymatroid associated with
$\DistOracle_k$, i.e., $\PreAllocSpace = \PolyMat{\DistOracle_k}$, i.e., for all $\PreAlloc \in \PreAllocSpace$, 
\begin{align}
    \PreAlloc(\TypeSubset) &\le \Ex[\Types]{\min(\Abs{\Types \cap \TypeSubset}, k)} = \DistOracle_k(\TypeSubset), & \forall \TypeSubset \subseteq \UnionTypeSpace
    \label{eq:bk2}
\end{align}
\end{theorem}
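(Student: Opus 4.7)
The plan is to prove the equality $\PreAllocSpace = \PolyMat{\DistOracle_k}$ by establishing both containments; the forward containment is routine while the reverse hinges on identifying the vertices of the polymatroid with the ordered subset mechanisms of \autoref{def:rank_alloc_k}.

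Necessity ($\PreAllocSpace \subseteq \PolyMat{\DistOracle_k}$) is immediate from~\eqref{eq:bk1}: any ex post allocation rule satisfies $\RvAlloc{\types}(S) \le \min(\Abs{\types \cap S}, k)$ pointwise, and taking expectations over $\types \sim \denss$ yields $\PreAlloc(S) = \Ex[\types]{\RvAlloc{\types}(S)} \le \DistOracle_k(S)$ for every $S \subseteq \UnionTypeSpace$.

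For sufficiency I would proceed in three steps. First, verify that $\PolyMat{\DistOracle_k}$ really is a polymatroid: $\DistOracle_k(\emptyset) = 0$ and monotonicity are obvious, while submodularity holds because for each fixed $\types$ the map $S \mapsto \min(\Abs{\types \cap S}, k)$ is submodular in $S$ (it is the rank function of a uniform matroid of rank $k$ on the realized types), and taking expectations preserves submodularity. Second, I would show that each vertex $\Vertex{\Ranking}{\PolyMat{\DistOracle_k}}$ described in \autoref{prop:vertex} is exactly the normalized interim allocation rule induced by the ordered subset mechanism with ordering $\Ranking = (\RElem_1,\ldots,\RElem_{|\Ranking|})$. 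Because the type spaces are assumed disjoint, each $\RElem_r$ belongs to a unique agent, and that agent is served precisely when (i) her realized type equals $\RElem_r$ and (ii) strictly fewer than $k$ of the types $\RElem_1,\ldots,\RElem_{r-1}$ are present in $\types$; a pointwise case analysis on $\types$ shows that this probability equals $\DistOracle_k(\{\RElem_1,\ldots,\RElem_r\}) - \DistOracle_k(\{\RElem_1,\ldots,\RElem_{r-1}\})$, which is the vertex coordinate prescribed by \autoref{prop:vertex}. Types outside $\Ranking$ are never served by the mechanism and receive the vertex value $0$.

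Third, since every point of a polymatroid is a convex combination of its vertices, any $\PreAlloc \in \PolyMat{\DistOracle_k}$ can be implemented by sampling a vertex according to such a combination and then running the corresponding ordered subset mechanism, which gives $\PolyMat{\DistOracle_k} \subseteq \PreAllocSpace$ and completes the proof. The main content is the vertex-to-mechanism identification in the second step; the polymatroid check and the rounding step are standard, and the subsections that follow provide the computational machinery (submodular-minimization separation oracle and randomized rounding) needed to make this argument algorithmic.
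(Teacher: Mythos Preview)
Your argument is correct. The paper actually provides two distinct proofs of this theorem, and yours aligns with one of them while differing from the other.

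The paper's primary route (in the main text) deduces \autoref{thm:border_k} from the matroid generalization \autoref{thm:border_matroid}, which in turn is proved via the Polymatroidal Decomposition Lemma (\autoref{lem:decompose}): writing $\DistOracle_k = \sum_{\Types} \DistProbs(\Types)\,\Rank[\Types]{\Mat}$ as a convex combination of per-profile rank functions, feasibility of $\PreAlloc$ is equivalent to decomposability into per-profile feasible allocations. The proof of that lemma itself goes through vertices and convex combinations, so at its core it is the same mechanism you use; the paper simply packages the vertex-plus-convexity step into a reusable abstraction that also yields the matroid case and the uniqueness statement of \autoref{thm:border_matroid-unique} in one stroke. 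Your direct vertex identification (showing $\Vertex{\Ranking}{\PolyMat{\DistOracle_k}}$ equals the normalized interim rule of the ordered subset mechanism for $\Ranking$) is the specialization of this lemma to $k$-uniform matroids, and is exactly the content of \autoref{thm:border_k-unique}.

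The paper also supplies an independent proof of the containment $\PolyMat{\DistOracle_k} \subseteq \PreAllocSpace$ in Appendix~\ref{app:border_k} via a max-flow/min-cut argument on a bipartite network with one node per type profile and one per type; a violated min-cut is translated directly into a violated constraint \eqref{eq:bk2}. This route bypasses polymatroid vertex structure entirely and is genuinely different from yours. Your approach is more structural and immediately yields the ex post implementation as a randomization over ordered subset mechanisms; the flow argument is more self-contained (it does not invoke \autoref{prop:vertex} or convex-combination representations) and gives an explicit witness $\TypeSubset = B'$ for infeasibility.
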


The proof of this theorem will be deferred to the next section where
we will derive a more general theorem.  A key step in the proof will
be relating the statement of the theorem to the polymatroid theory
described already.  To show that the constraint of the theorem is a
polymatroid, we observe that the expected rank function is submoduler.

\begin{lemma} 
\label{lem:g_k-submodular}
The expected rank function $\DistOracle_k$ is submodular.
\end{lemma}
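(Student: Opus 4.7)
My plan is to verify the diminishing marginal returns characterization of submodularity by first fixing a realized type profile, showing pointwise submodularity of the integrand in $\TypeSubset$, and then taking the expectation over $\Types \sim \DistProbs$.

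\textbf{Step 1 (Pointwise submodularity).} Fix any $\Types \in \TypeSpaces$ and consider $\phi_\Types(\TypeSubset) := \min(|\Types \cap \TypeSubset|, k)$. The inner quantity $a(\TypeSubset) := |\Types \cap \TypeSubset| = \sum_{\TypeVar \in \TypeSubset} \mathbf{1}[\TypeVar \in \{\Type_1,\ldots,\Type_n\}]$ is modular in $\TypeSubset$. I would verify directly that for any $A \subseteq B \subseteq \UnionTypeSpace$ and any $\TypeVar \notin B$,
\[
\phi_\Types(A \cup \{\TypeVar\}) - \phi_\Types(A) \;\geq\; \phi_\Types(B \cup \{\TypeVar\}) - \phi_\Types(B).
\]
If $\TypeVar \notin \{\Type_1,\ldots,\Type_n\}$, both sides are $0$. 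Otherwise $a(A \cup \{\TypeVar\}) = a(A)+1$ and $a(B \cup \{\TypeVar\}) = a(B)+1$, while $a(A) \leq a(B)$; since $\min(\cdot,k)$ is concave on integers, its discrete marginal $\min(x+1,k) - \min(x,k)$ is non-increasing in $x$, giving the desired inequality. Thus $\phi_\Types$ is submodular in $\TypeSubset$.

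\textbf{Step 2 (Expectation preserves submodularity).} The function $\DistOracle_k(\TypeSubset) = \Ex[\Types \sim \DistProbs]{\phi_\Types(\TypeSubset)}$ is a non-negative convex combination of submodular functions (one for each realization of $\Types$), hence submodular. Concretely, summing the pointwise inequality from Step~1 against the probability mass $\DistProbs(\Types)$ yields
\[
\DistOracle_k(A \cup \{\TypeVar\}) - \DistOracle_k(A) \;\geq\; \DistOracle_k(B \cup \{\TypeVar\}) - \DistOracle_k(B),
\]
which is the diminishing-returns form of submodularity.

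There is essentially no obstacle here; the only point worth stating carefully is the concavity of $\min(\cdot,k)$ applied to a modular integer-valued function, which is the source of submodularity. The rest is just linearity of expectation.
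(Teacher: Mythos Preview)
Your proposal is correct and follows essentially the same approach as the paper: fix a type profile $\Types$, observe that $\min(|\Types \cap \TypeSubset|,k)$ is submodular in $\TypeSubset$, and then note that $\DistOracle_k$ is a convex combination (over the distribution of $\Types$) of these submodular functions. The paper states the pointwise submodularity as ``obvious'' without justification, whereas you spell it out via concavity of $\min(\cdot,k)$ applied to a modular function; this is a welcome elaboration but not a different argument.
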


\begin{proof}
Observe that for any fixed $\Types$, $\min(\Types \cap
\TypeSubset, k)$ is obviously a submodular function in $\TypeSubset$, and therefore  $\DistOracle_k$ is a convex
combination\footnote{Note that taking the expectation is the same as taking a convex combination.} of submodular functions, so
$\DistOracle_k$ is submodular.
\end{proof}

We now relate vertices of the polymatroid to ordered subset allocation
mechanisms.

\begin{theorem}
\label{thm:border_k-unique}
For supply constraint $k$, if $\PreAlloc \in \PreAllocSpace$ is the vertex
$\Vertex{\Ranking}{\PolyMat{\DistOracle_k}}$ of the polymatroid
$\PolyMat{\DistOracle_k}$ the unique ex post implementation is the
ordered subset mechanism induced by $\Ranking$
(\autoref{def:rank_alloc_k}).
\end{theorem}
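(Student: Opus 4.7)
The plan is to prove this theorem in two parts: (i) verify that the ordered subset mechanism induced by $\Ranking$ implements the claimed vertex, and (ii) show that any ex post allocation rule implementing the same $\PreAlloc$ must coincide almost surely with that mechanism.

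For part (i), let $\Ranking_r = \{\RElem_1, \ldots, \RElem_r\}$ denote the length-$r$ prefix of $\Ranking$ (with $\Ranking_0 = \emptyset$). By greediness, on any realized profile $\Types$ the ordered subset mechanism serves exactly $\min(\Abs{\Types \cap \Ranking_r}, k)$ agents with types in $\Ranking_r$, pointwise in $\Types$. Taking expectations over $\Types \sim \DistProbs$ yields $\PreAlloc(\Ranking_r) = \DistOracle_k(\Ranking_r)$ for every $r$, and telescoping gives
\begin{equation*}
\PreAlloc(\RElem_r) = \DistOracle_k(\Ranking_r) - \DistOracle_k(\Ranking_{r-1}),
\end{equation*}
which matches the vertex formula from \autoref{prop:vertex}. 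For types $\tau \notin \Ranking$, the mechanism never serves $\tau$, so $\PreAlloc(\tau) = 0$, completing the identification with $\Vertex{\Ranking}{\PolyMat{\DistOracle_k}}$.

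For part (ii), suppose $\RvAlloc{\Types}$ is any ex post allocation rule implementing the same $\PreAlloc$. The pointwise ex post feasibility constraint~\eqref{eq:bk1} applied to $\TypeSubset = \Ranking_r$ yields $\RvAlloc{\Types}(\Ranking_r) \le \min(\Abs{\Types \cap \Ranking_r}, k)$ for every $\Types$. Taking expectations and invoking part (i),
\begin{equation*}
\DistOracle_k(\Ranking_r) = \PreAlloc(\Ranking_r) = \Ex[\Types]{\RvAlloc{\Types}(\Ranking_r)} \le \Ex[\Types]{\min(\Abs{\Types \cap \Ranking_r}, k)} = \DistOracle_k(\Ranking_r),
\end{equation*}
so the inequality is saturated almost surely for every prefix $\Ranking_r$ simultaneously. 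A short induction on $r$ then forces $\RvAlloc{\Types}$ to serve exactly the $\min(\Abs{\Types\cap\Ranking}, k)$ agents whose types appear earliest in $\Ranking$: the base case $r=1$ forces type $\RElem_1$ (if present) to be served, and the step from $r-1$ to $r$ forces $\RElem_r$ to be served iff $\RElem_r \in \Types$ and fewer than $k$ types from $\Ranking_{r-1}$ are present in $\Types$. Finally, for $\tau \notin \Ranking$ we have $\PreAlloc(\tau) = 0$; since $\RvAlloc{\Types}(\tau) \in \{0,1\}$ and its expectation vanishes, $\tau$ is never served almost surely. Hence $\RvAlloc{\cdot}$ agrees with the ordered subset mechanism on a set of profiles of full probability.

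The main subtlety is that the pointwise feasibility bound must be saturated at every prefix of $\Ranking$ simultaneously; this works out precisely because the vertex formula of \autoref{prop:vertex} is a telescoping chain of tight constraints for the polymatroid. Disjointness of agent type spaces ensures each $\tau \in \UnionTypeSpace$ identifies a unique agent, so the prefix-by-prefix induction above pins down not just the served \emph{types} but the served \emph{agents}, leaving no tie-breaking ambiguity.
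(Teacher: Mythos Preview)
Your proof is correct and follows essentially the same approach as the paper: both arguments observe that the vertex formula from \autoref{prop:vertex} makes the polymatroid constraint \eqref{eq:bk2} tight at every prefix $\Ranking_r$, deduce that the pointwise ex post constraint \eqref{eq:bk1} must therefore be saturated at every prefix for (almost) every $\Types$, and conclude that this forces the greedy ordered-subset allocation. Your version is somewhat more explicit---separating the verification that the ordered subset mechanism hits the vertex (part~(i)) from the uniqueness argument (part~(ii)), and spelling out the prefix-by-prefix induction---whereas the paper compresses both into a single short paragraph.
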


\begin{proof}
Let $\PreAlloc=\Vertex{\Ranking}{\PolyMat{\DistOracle_k}}$ be an
arbitrary vertex of $\PolyMat{\DistOracle_k}$ with a corresponding
ordered subset $\Ranking$; by \autoref{prop:vertex}, such a $\Ranking$
exists for every vertex of a polymatroid.  For every integer $r \leq
|\Ranking|$, define $\TypeSubset^r =\{\RElem_1,\ldots,\RElem_r\}$ as
the $r$-element prefix of the ordering. By \autoref{prop:vertex},
inequality~\eqref{eq:bk2} must be tight for every $\TypeSubset^r$
which implies that inequality~\eqref{eq:bk1} must also be tight for
every $\TypeSubset^r$ and every $\Types \in \TypeSpaces$. Observe that
inequality~\eqref{eq:bk1} being tight for a subset $\TypeSubset$ of
types implies that any ex post allocation mechanism implementing
$\PreAlloc$ must allocate as much as possible to types in
$\TypeSubset$.  By definition, an ordered subset mechanism allocates
to as many types as possible (up to $k$) from each $\TypeSubset^r$;
this is the unique outcome given that inequality~\eqref{eq:bk1} is
tight for every $\TypeSubset^r$.
\end{proof}

\paragraph{Optimization over feasible interim allocation rules.}
The characterization of interim feasibility as a polymatroid
constraint immediately enables efficient solving of optimization
problems over the feasible interim allocation rules as long as we can
compute $\DistOracle_k$ efficiently (see \citet{S03} for optimization
over polymatroids). The following lemma states that $\DistOracle_k$
can be computed efficiently.

\begin{lemma}
\label{lem:g_k}%
For independent agent (i.e., if $\DistProbs$ is a product distribution), $\DistOracle_k(\TypeSubset)$ can be exactly computed in
time $O((n+\Abs{\TypeSubset})\cdot k)$ for any $\TypeSubset \in \UnionTypeSpace$ using dynamic programming.
\end{lemma}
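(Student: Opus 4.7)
The plan is to reduce the computation to a standard truncated-sum DP over independent Bernoulli random variables. Because agent types are drawn independently, the only information about agent~$\agent$ relevant to $\DistOracle_k(\TypeSubset)$ is the probability
\[
  p_\agent \;=\; \sum_{\type \in \TypeSpace_\agent \cap \TypeSubset} \DistProb_\agent(\type)
\]
that $\Type_\agent$ lands in $\TypeSubset$. Since the $\TypeSpace_\agent$ are disjoint, summing these marginals across all agents visits each element of $\TypeSubset$ exactly once, so computing $p_1,\ldots,p_n$ takes $O(n+\Abs{\TypeSubset})$ time. With $X_\agent \sim \mathrm{Bernoulli}(p_\agent)$ independent, the quantity $\Abs{\Types \cap \TypeSubset}$ is distributed as $X_1+\cdots+X_n$, and $\DistOracle_k(\TypeSubset)=\Ex{\min(X_1+\cdots+X_n,\,k)}$.

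Next I would build a table $P_\agent(j)$ for $\agent \in \{0,1,\ldots,n\}$ and $j \in \{0,1,\ldots,k\}$ that tracks the distribution of the partial sum capped at $k$, namely $P_\agent(j) = \Prx{\min(X_1+\cdots+X_\agent,k)=j}$. The base case is $P_0(0)=1$ and $P_0(j)=0$ for $j>0$. The transition treats the cap $k$ as an absorbing state:
\begin{align*}
  P_\agent(j) &= (1-p_\agent)\,P_{\agent-1}(j) + p_\agent\,P_{\agent-1}(j-1), && 0 \le j < k,\\
  P_\agent(k) &= P_{\agent-1}(k) + p_\agent\,P_{\agent-1}(k-1).
\end{align*}
Correctness is immediate by induction: once the capped sum has reached $k$ it stays at $k$ regardless of subsequent Bernoullis, and otherwise it increases by at most one. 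Each update is $O(1)$ and there are $nk$ entries, so this phase runs in $O(nk)$ time.

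Finally I would return
\[
  \DistOracle_k(\TypeSubset) \;=\; \sum_{j=0}^{k} j\,P_n(j),
\]
which takes $O(k)$ additional work. The total is $O(n+\Abs{\TypeSubset}) + O(nk) + O(k) = O((n+\Abs{\TypeSubset})\cdot k)$ as claimed. There is no real obstacle here; the only thing to be careful about is the correct handling of the cap $k$ in the DP recurrence (ensuring it is absorbing) and the fact that we can aggregate each agent's marginal over $\TypeSpace_\agent \cap \TypeSubset$ up front so the per-agent work in the DP is independent of $\Abs{\TypeSubset}$.
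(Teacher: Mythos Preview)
Your proposal is correct and matches the paper's own proof essentially line for line: the paper defines $G^i_j = \Prx{\min(\Abs{\Types \cap \TypeSubset \cap \TypeSpace_{\Range{1}{i}}}, k)=j}$, gives the same absorbing-at-$k$ recurrence (written as $G^i_j = G^{i-1}_j + p_i(G^{i-1}_{j-1}-G^{i-1}_j)$ for $j<k$ and $G^i_k = G^{i-1}_k + p_i G^{i-1}_{k-1}$), and returns $\sum_{j=1}^k j\,G^n_j$. Your explicit accounting of the $O(n+\Abs{\TypeSubset})$ cost to compute the marginals $p_\agent$ is a small clarification the paper leaves implicit.
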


%

\paragraph{Ex post implementation of feasible interim allocation rules.}

We now address the task of finding an ex post implementation
corresponding to any $\PreAlloc \in \PreAllocSpace$. By
\autoref{thm:border_matroid}, if $\PreAlloc$ is a vertex of
$\PreAllocSpace$, it can be easily implemented by an ordered subset
allocation mechanism (\autoref{def:rank_alloc_k}). As any point in the
polymatroid (or any convex polytope) can be specified as a convex
combination of its vertices, to implement the corresponding interim
allocation rule it is enough to show that this convex combination can
be efficiently sampled from. An ex post implementation can then be
obtained by sampling a vertex and using the ordered subset mechanism
corresponding to that vertex. Instead of explicitly computing this
convex combination, we present a general randomized rounding routine
$\RandRound(\cdot)$ which takes a point in a polymatroid and returns a
vertex of the polymatroid such that the expected value of every
coordinate of the returned vertex is the same as the original
point. This approach is formally described next.

\begin{definition}[randomized ordered subset allocation mechanism]
\label{def:rra_k}%
Parameterized by a normalized interim allocation rule $\PreAlloc
\in \PreAllocSpace$, a \emph{randomized ordered subset allocation
  mechanism (RRA)} computes the allocation for a profile of types
$\Types \in \TypeSpaces$ as follows.
\begin{enumerate}
\item
Let $(\PreAlloc^*, \Ranking^*) \leftarrow \RandRound(\PreAlloc)$.
\item
Run the ordered subset mechanism (\autoref{def:rank_alloc_k}) with
ordered subset $\Ranking^*$.
\end{enumerate}
\end{definition}

\begin{theorem}
\label{thm:rra_k}%
Any normalized interim allocation rule $\PreAlloc \in \PreAllocSpace$
can be implemented by the randomized ordered subset allocation
mechanism (\autoref{def:rra_k}) as a distribution over deterministic
ordered subset allocation mechanisms.
\end{theorem}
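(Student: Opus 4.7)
The plan is to reduce the theorem to two ingredients and combine them by linearity of expectation. First, by \autoref{thm:border_k-unique}, for any fixed ordered subset $\Ranking$ the deterministic ordered subset mechanism of \autoref{def:rank_alloc_k} is the (unique) ex post implementation of the vertex $\Vertex{\Ranking}{\PolyMat{\DistOracle_k}}$ of the polymatroid $\PreAllocSpace = \PolyMat{\DistOracle_k}$. Second, I will show that the routine $\RandRound(\cdot)$ can be designed so that, on input $\PreAlloc \in \PreAllocSpace$, it outputs a random pair $(\PreAlloc^*, \Ranking^*)$ with $\PreAlloc^* = \Vertex{\Ranking^*}{\PolyMat{\DistOracle_k}}$ and $\Ex{\PreAlloc^*(\Type)} = \PreAlloc(\Type)$ for every $\Type \in \UnionTypeSpace$. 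Granting these two ingredients, the probability (over both $\Ranking^*$ and the agents' types) that an agent of type $\Type$ is served by the RRA of \autoref{def:rra_k} is exactly
\[
\Ex[\Ranking^*]{\Vertex{\Ranking^*}{\PolyMat{\DistOracle_k}}(\Type)} \;=\; \Ex{\PreAlloc^*(\Type)} \;=\; \PreAlloc(\Type),
\]
which is the implementation guarantee claimed. In particular RRA is a distribution over the deterministic ordered subset mechanisms that correspond to the sampled vertices.

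The real work is therefore to construct $\RandRound$ with the stated marginal-preserving property. Concretely, I would design $\RandRound$ as an iterative rounding scheme that exploits the polymatroid structure of $\PolyMat{\DistOracle_k}$ guaranteed by \autoref{lem:g_k-submodular}. Starting from the fractional input $\PreAlloc$, repeatedly identify an edge of the polymatroid through the current point (using tight sets found by minimizing $\DistOracle_k(\USubset)-\PreAlloc(\USubset)$ via submodular function minimization), move $\PreAlloc$ along that edge in both directions until a new constraint becomes tight or a coordinate hits zero, and randomize between the two endpoints with probabilities that preserve every coordinate's expectation. Submodularity and the non-decreasing property of $\DistOracle_k$ ensure both that such edges always exist at non-vertex points and that the resulting point remains in $\PolyMat{\DistOracle_k}$. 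After polynomially many steps the procedure reaches a vertex $\PreAlloc^*$, at which point the greedy algorithm of \autoref{prop:vertex} recovers an associated ordered subset $\Ranking^*$ that certifies $\PreAlloc^* = \Vertex{\Ranking^*}{\PolyMat{\DistOracle_k}}$. Correctness of the marginals follows from a coordinate-wise martingale argument across the rounding steps.

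The main obstacle is the design and analysis of this rounding step: proving that a feasible edge direction always exists at any non-vertex point of a general polymatroid, that the coin-flip bias required to preserve coordinate expectations is well-defined and keeps the iterate inside $\PolyMat{\DistOracle_k}$, and that the total number of iterations is polynomial in $|\UnionTypeSpace|$ so that the overall mechanism runs in polynomial time. Once this ingredient is in place, combining it with \autoref{thm:border_k-unique} and linearity of expectation closes the proof.
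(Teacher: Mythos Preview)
Your proposal is correct and identifies the same core argument as the paper: once $\RandRound$ is known to output a random vertex $\PreAlloc^*=\Vertex{\Ranking^*}{\PolyMat{\DistOracle_k}}$ with $\Ex{\PreAlloc^*}=\PreAlloc$, and once each such vertex is implemented by the ordered subset mechanism for $\Ranking^*$ (\autoref{thm:border_k-unique}), linearity of expectation finishes the proof. The paper's proof of this theorem is in fact the single sentence ``The proof follows from linearity of expectation.''

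The only difference is one of factoring. You fold the construction and analysis of $\RandRound$ into this proof, whereas the paper separates that out: $\RandRound$ is given its own explicit definition (the iterative tight-set rounding you sketch), and its guarantees---that it returns a vertex with the correct marginals in strongly polynomial time---are stated and proved as a standalone result (\autoref{thm:randround}). So the ``main obstacle'' you flag is real, but in the paper it is not part of the proof of \autoref{thm:rra_k}; it is the content of \autoref{thm:randround}. Your outline of the rounding procedure is accurate and matches what the paper does there.
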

\begin{proof}
The proof follows from linearity of expectation.
\end{proof}

\paragraph{Randomized rounding for polymatroids.}
We describe $\RandRound(\cdot)$ for general polymatroids. First, we
present a few definitions and give an overview of the rounding
operator. Consider an arbitrary finite set $\Universe$ and a
polymatroid $\PolyMat{\SubFun}$ associated with a non-decreasing
submodular function $\SubFun : 2^\Universe \to \PosReals$ with
$\SubFun(\emptyset) = 0$. A set $\USubset \subseteq \Universe$ is
called \emph{tight} with respect to a $\PolyVec \in
\PolyMat{\SubFun}$, if and only if
$\PolyVec(\USubset)=\SubFun(\USubset)$. A set
$\TightSets=\{\USubset^0,\ldots,\USubset^{m}\}$ of subsets of
$\Universe$ is called a \emph{nested family of tight sets} with
respect to $\PolyVec \in \PolyMat{\SubFun}$, if and only the elements
of $\TightSets$ can be or ordered and indexed such that $\emptyset =
\USubset^0 \subset \cdots \subset \USubset^m \subseteq \Universe$, and
such that $\USubset^r$ is tight with respect to $\PolyVec$ (for every
$r \in [m])$.

$\RandRound(\PolyVec)$ takes an arbitrary $\PolyVec \in
\PolyMat{\SubFun}$ and iteratively makes small changes to it until a
vertex is reached. At each iteration $\ell$, it computes
$\PolyVec^{\ell} \in \PolyMat{\SubFun}$, and a nested family of tight
sets $\TightSets^\ell$ (with respect to $\PolyVec^{\ell}$) such that
\begin{itemize}
\item
$\Ex{\PolyVec^\ell | \PolyVec^{\ell-1}}=\PolyVec^{\ell-1}$, and

\item
$\PolyVec^\ell$ is closer to a vertex (compared to $\PolyVec^{\ell-1}$) in the sense that either the number of non-zero
coordinates has decreased by one or the number of tight sets has increased by one.
\end{itemize}
Observe that the above process must stop after at most $2\Abs{\Universe}$ iterations\footnote{In fact we will show that it stops
after at most $\Abs{\Universe}$ iterations.}. At each iteration $\ell$ of the rounding process, a vector $\PolyVecDelta \in
\Reals^\Universe$ and $\delta, \delta' \in \PosReals$ are computed such that both $\PolyVec^{\ell-1}+\delta \cdot\PolyVecDelta$
and $\PolyVec^{\ell-1}-\delta'\cdot\PolyVecDelta$ are still in $\PolyMat{\SubFun}$, but closer to a vertex. The algorithm then
chooses a random $\delta'' \in \{\delta, -\delta'\}$ such that $\Ex{\delta''}=0$, and sets $\PolyVec^{\ell} \leftarrow
\PolyVec^{\ell-1}+\delta''\cdot\PolyVecDelta$.

\begin{definition}[$\RandRound(\PolyVec)$]
This operator takes as its input a point $\PolyVec \in \PolyMat{\SubFun}$ and returns as its output a pair $(\PolyVec^*,
\Ranking^*)$, where $\PolyVec^*$ is a random vertex of $\PolyMat{\SubFun}$ and $\Ranking^*$ is its associated ordered subset
(see \autoref{prop:vertex}), and such that $\Ex{\PolyVec^*}=\PolyVec$.

The algorithm modifies $\PolyVec$ iteratively until a vertex is reached. It also maintains a nested family of tight sets
$\TightSets$ with respect to $\PolyVec$. As we modify $\TightSets$, we always maintain an ordered labeling of its elements, i.e.,
if $\TightSets=\{\USubset^0,\ldots,\USubset^m\}$, we assume that $\emptyset=\USubset^0 \subset \cdots \subset \USubset^m
\subseteq \Universe$; in particular, the indices are updated whenever a new tight set is added. For each $\Elem \in \Universe$,
define $\IVec{\Elem} \in [0,1]^\Universe$ as a vector whose value is $1$ at coordinate $\Elem$ and $0$ everywhere else.

\begin{enumerate}
\item
Initialize $\TightSets \leftarrow \{\emptyset\}$.

\item
Repeat each of the following steps until no longer applicable:
\begin{itemize}
\item
If there exist distinct $\Elem, \Elem' \in \USubset^r \setminus \USubset^{r-1}$ for some $r \in [m]$:
\begin{enumerate}
\item
Set $\PolyVecDelta \leftarrow \IVec{\Elem}-\IVec{\Elem'}$, and compute $\delta,\delta' \in \PosReals$ such that
$\PolyVec+\delta\cdot \PolyVecDelta$ has a new tight set $\USubset$ and $\PolyVec-\delta'\cdot \PolyVecDelta$ has a new tight set
$\USubset'$, i.e,
\begin{itemize}
\item
set $\displaystyle \USubset \leftarrow \argmin_{\USubset^{r-1}+\Elem \subseteq \USubset \subseteq \USubset^r-\Elem'}
\SubFun(\USubset)-\PolyVec(\USubset)$, and $\delta \leftarrow \SubFun(\USubset)-\PolyVec(\USubset)$;

\item
set $\displaystyle  \USubset' \leftarrow \argmin_{\USubset^{r-1}+\Elem' \subseteq \USubset' \subseteq \USubset^r-\Elem}
\SubFun(\USubset')-\PolyVec(\USubset')$, and $\delta' \leftarrow \SubFun(\USubset')-\PolyVec(\USubset')$.
\end{itemize}

\item
$\begin{cases}
    \text{with prob. $\frac{\delta}{\delta+\delta'}$: \qquad
        set $\PolyVec \leftarrow \PolyVec+\delta\cdot\PolyVecDelta$, and add $\USubset$ to $\TightSets$.} & \\
    \text{with prob. $\frac{\delta'}{\delta+\delta'}$: \qquad
        set $\PolyVec \leftarrow \PolyVec-\delta'\cdot\PolyVecDelta$, and add $\USubset'$ to $\TightSets$.} &
\end{cases}$
\end{enumerate}

\item
If there exists $\Elem \in \Universe \setminus \USubset^m$ for which $\PolyVec(\Elem) > 0$:
\begin{enumerate}
\item
Set $\PolyVecDelta \leftarrow \IVec{\Elem}$, and compute $\delta,\delta' \in \PosReals$ such that $\PolyVec+\delta\cdot
\PolyVecDelta$ has a new tight set $\USubset$ and $\PolyVec-\delta'\cdot \PolyVecDelta$ has a zero at coordinate $\Elem$, i.e,
\begin{itemize}
\item
set $\displaystyle \USubset \leftarrow \argmin_{\USubset \supseteq \USubset^m+\Elem} \SubFun(\USubset)-\PolyVec(\USubset)$, and
$\delta \leftarrow \SubFun(\USubset)-\PolyVec(\USubset)$;

\item
set $\delta' \leftarrow \PolyVec(\Elem)$.
\end{itemize}

\item
$\begin{cases}
    \text{with prob. $\frac{\delta}{\delta+\delta'}$: \qquad
        set $\PolyVec \leftarrow \PolyVec+\delta\cdot\PolyVecDelta$, and add $\USubset$ to $\TightSets$.} & \\
    \text{with prob. $\frac{\delta'}{\delta+\delta'}$: \qquad
        set $\PolyVec \leftarrow \PolyVec-\delta'\cdot\PolyVecDelta$ } &
\end{cases}$

\end{enumerate}
\end{itemize}

\item Set $\PolyVec^* \leftarrow \PolyVec$ and define the ordered subset $\Ranking^* : \USubset^m \to [m]$ according to $\TightSets$, i.e.,
for each $r \in [m]$ and $\Elem \in \USubset^r \setminus \USubset^{r-1}$, define $\Ranking^*(\Elem) = r$.
\item Return $(\PolyVec^*,\Ranking^*)$.

\end{enumerate}
\end{definition}

\begin{theorem}
\label{thm:randround}%
For any non-decreasing submodular function $\SubFun : 2^\Universe \to \PosReals$ and any $\PolyVec \in \PolyMat{\SubFun}$, the
operator $\RandRound(\PolyVec)$ returns a random $(\PolyVec^*, \Ranking^*)$ such that $\PolyVec^* \in
\VertexSet{\PolyMat{\SubFun}}$, and $\Ranking^*$ is the ordered subset corresponding to $\PolyVec^*$ (see
\autoref{prop:vertex}), and such that $\Ex{\PolyVec^*}=\PolyVec$. Furthermore, the algorithm runs in strong polynomial time. In
particular, it runs for $O(\Abs{\Universe})$ iterations where each iteration involves solving two submodular minimizations.
\end{theorem}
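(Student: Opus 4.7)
My plan is to prove four claims in sequence: (i) preservation of the invariants that $\PolyVec \in \PolyMat{\SubFun}$ and that $\TightSets$ remains a nested family of tight sets with respect to the current $\PolyVec$; (ii) the conditional mean-preservation identity $\Ex{\PolyVec^{\ell} \mid \PolyVec^{\ell-1}} = \PolyVec^{\ell-1}$; (iii) at termination, $\PolyVec^{*}$ is the vertex of $\PolyMat{\SubFun}$ associated with the ordered subset $\Ranking^{*}$ by \autoref{prop:vertex}; and (iv) an $O(\Abs{\Universe})$ bound on the number of iterations, each requiring at most two submodular minimizations.

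For (i), consider the first case of the loop, where $\PolyVecDelta = \IVec{\Elem} - \IVec{\Elem'}$ with $\Elem, \Elem' \in \USubset^{r} \setminus \USubset^{r-1}$. Every pre-existing tight set $\USubset^{s}$ with $s < r$ contains neither $\Elem$ nor $\Elem'$, and every $\USubset^{s}$ with $s \geq r$ contains both, so $\PolyVec(\USubset^{s})$ is unchanged by the update and $\USubset^{s}$ remains tight. The freshly added set $\USubset$ (or $\USubset'$) lies strictly between $\USubset^{r-1}$ and $\USubset^{r}$, preserving the nested chain. The nontrivial part is showing that no polymatroid constraint $\SubFun(A) - \PolyVec(A) \geq 0$ is violated after the perturbation. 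Since $\SubFun - \PolyVec$ is non-negative and submodular, and $\USubset$ was chosen to minimize $\SubFun - \PolyVec$ over $[\USubset^{r-1}+\Elem,\, \USubset^{r}-\Elem']$, a standard uncrossing argument reduces any $A$ with $\Elem \in A$, $\Elem' \notin A$ first to $A \cap \USubset^{r}$ (using tightness of $\USubset^{r}$) and then to $(A \cap \USubset^{r}) \cup \USubset^{r-1}$ (using tightness of $\USubset^{r-1}$), yielding $(\SubFun - \PolyVec)(A) \geq (\SubFun - \PolyVec)(\USubset) = \delta$. Hence $\PolyVec(A) + \delta \leq \SubFun(A)$. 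The symmetric argument handles the $-\delta'$ branch as well as the second case of the loop.

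For (ii), the mixing probabilities are arranged so that $\Ex{\delta''} = \frac{\delta}{\delta + \delta'}\cdot \delta + \frac{\delta'}{\delta + \delta'}\cdot(-\delta') = 0$, so $\Ex{\PolyVec^{\ell} \mid \PolyVec^{\ell-1}} = \PolyVec^{\ell-1}$; iterating via the tower property yields $\Ex{\PolyVec^{*}} = \PolyVec$. For (iii), when neither trigger applies, each gap $\USubset^{r} \setminus \USubset^{r-1}$ is a singleton $\{\RElem_{r}\}$ and $\PolyVec(\Elem) = 0$ for all $\Elem \notin \USubset^{m}$; tightness of each $\USubset^{r}$ then forces $\PolyVec(\RElem_{r}) = \SubFun(\USubset^{r}) - \SubFun(\USubset^{r-1})$, which matches the vertex formula of \autoref{prop:vertex} for the ordering $\Ranking^{*} = (\RElem_{1}, \ldots, \RElem_{m})$. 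For (iv), each iteration either strictly refines $\TightSets$ (a chain in $2^{\Universe}$, hence of length at most $\Abs{\Universe}+1$) or zeros out a previously positive coordinate of $\PolyVec$ outside $\USubset^{m}$; once zeroed, such a coordinate is never again selected, since $\USubset^{m}$ only grows and the second case requires $\PolyVec(\Elem) > 0$. Both events occur at most $\Abs{\Universe}$ times, giving the stated bound. Each iteration's computations of $\USubset$ and $\USubset'$ are submodular minimizations on chain intervals, themselves submodular and hence solvable in strongly polynomial time.

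The main obstacle is step (i): verifying that every intermediate $\PolyVec$ remains in $\PolyMat{\SubFun}$ after being perturbed in either direction. This feasibility check is precisely where the chain-restricted minimality of $\USubset$ and $\USubset'$ is essential, and the uncrossing inequality must be invoked carefully for both the $+\delta$ and $-\delta'$ updates using the tightness of the flanking sets $\USubset^{r-1}$ and $\USubset^{r}$. Once this invariant is in hand, the martingale identity, the vertex characterization at termination, and the iteration count all follow with essentially routine bookkeeping.
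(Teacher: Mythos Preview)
The paper states \autoref{thm:randround} but does not supply a proof; the theorem appears at the end of Section~\ref{sec:border_k} with no argument, and the appendix contains no proof for it either (the commented-out lemmas \texttt{lem:round\_in} and \texttt{lem:round\_out} suggest a proof was once drafted but removed). So there is nothing to compare your proposal against, and your outline in fact fills a gap in the paper. The overall structure---invariants, martingale property, vertex identification at termination, and the $O(\Abs{\Universe})$ iteration bound---is the natural one and is correct.

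Two points deserve correction or amplification. First, your computation in (ii) is wrong as written:
\[
\tfrac{\delta}{\delta+\delta'}\cdot\delta + \tfrac{\delta'}{\delta+\delta'}\cdot(-\delta')
= \tfrac{\delta^{2}-(\delta')^{2}}{\delta+\delta'} = \delta-\delta',
\]
which is not zero in general. This is in fact a typo in the paper's formal definition of $\RandRound$; the prose just above the definition says the algorithm ``chooses a random $\delta'' \in \{\delta,-\delta'\}$ such that $\Ex{\delta''}=0$'', which requires the swapped probabilities (move by $+\delta$ with probability $\tfrac{\delta'}{\delta+\delta'}$ and by $-\delta'$ with probability $\tfrac{\delta}{\delta+\delta'}$). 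You should state this correction explicitly rather than assert the false identity.

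Second, in (i) you verify the upper-bound constraints $\SubFun(A)\ge\PolyVec(A)$ but omit the non-negativity constraint. In the first branch the $+\delta$ update decreases $\PolyVec(\Elem')$ by~$\delta$; to see this stays non-negative, note that $\USubset^{r}-\Elem'$ is in the minimization range, so $\delta \le \SubFun(\USubset^{r}-\Elem')-\PolyVec(\USubset^{r}-\Elem') = \SubFun(\USubset^{r}-\Elem')-\SubFun(\USubset^{r})+\PolyVec(\Elem') \le \PolyVec(\Elem')$, using tightness of $\USubset^{r}$ and monotonicity of~$\SubFun$. The symmetric bound handles the $-\delta'$ update. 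With these two fixes your argument is complete.
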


\subsection{Matroid Feasibility Constraints}
\label{sec:border_matroid}%

In this section, we consider environments where the feasibility constraints are encoded by a matroid $\Mat=(\UnionTypeSpace,
\IndepSets)$. For every type profile $\Types \in \TypeSpaces$, a subset $\TypeSubset \subseteq \{\Type_1,\ldots, \Type_n\}$ can
be simultaneously allocated to if and only if $\TypeSubset \in \IndepSets$. We show that the results of \autoref{sec:border_k}
can be easily generalized to environments with matroid feasibility constraints.

\paragraph{Matroid Preliminaries.}
A matroid $\Mat=(\Universe, \IndepSets)$ consists of a ground set $\Universe$ and a family of independent sets $\IndepSets
\subseteq 2^\Universe$ with the following two properties.
\begin{itemize}
\item
For every $\IndepSet, \IndepSet'$, if $\IndepSet' \subset \IndepSet \in \IndepSets$, then $\IndepSet' \in \IndepSets$.

\item
For every $\IndepSet,\IndepSet' \in \IndepSets$, if $\Abs{\IndepSet'} < \Abs{\IndepSet}$, there exists $\Elem \in \IndepSet
\setminus \IndepSet'$ such that $\IndepSet'\cup \{\Elem\} \in \IndepSets$.
\end{itemize}

For every matroid $\Mat$, the rank function $\Rank{\Mat} : 2^\Universe \to \mathds{N}\cup\{0\}$ is defined as follows: for each
$\USubset \subseteq \Universe$, $\Rank{\Mat}(\USubset)$ is the size of the maximum independent subset of $\USubset$. A matroid
can be uniquely characterized by its rank function, i.e., a set $\IndepSet \subseteq \Universe$ is an independent set if and only
if $\Rank{\Mat}(\IndepSet) = \Abs{\IndepSet}$. A matroid rank function has the following two properties:
\begin{itemize}
\item $\Rank{\Mat}(\cdot)$ is a non-negative non-decreasing integral submodular function.
\item $\Rank{\Mat}(\USubset) \le \Abs{\USubset}$, for all $\USubset \subseteq \Universe$.
\end{itemize}
Furthermore, every function with the above properties defines a matroid.

Any set $\USubset \subseteq \Universe$ can be equivalently represented by its incidence vector $\chi^\USubset \in
\{0,1\}^\Universe$ which has a $1$ at every coordinate $\Elem \in \USubset$ and $0$ everywhere else.

\begin{proposition}
Consider an arbitrary finite matroid $\Mat=(\Universe, \IndepSets)$ with rank function $\Rank{\Mat}(\cdot)$. Let
$\PolyMat{\Rank{\Mat}}$ denote the polymatroid associated with $\Rank{\Mat}(\cdot)$ (see \autoref{sec:border_k}); the vertices of
$\PolyMat{\Rank{\Mat}}$ are exactly the incidence vectors of the independent sets of $\Mat$.
\end{proposition}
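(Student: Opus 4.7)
The plan is to invoke the ordered-subset characterization of polymatroid vertices (\autoref{prop:vertex}) and exploit two structural properties of the matroid rank function: integrality and the fact that marginal rank increments lie in $\{0,1\}$.

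First, I would apply \autoref{prop:vertex} to write any vertex $\PolyVec$ of $\PolyMat{\Rank{\Mat}}$ as coming from some ordered subset $\Ranking = (\RElem_1,\ldots,\RElem_m)$ of $\Universe$, with
\[
\PolyVec(\RElem_r) = \Rank{\Mat}(\{\RElem_1,\ldots,\RElem_r\}) - \Rank{\Mat}(\{\RElem_1,\ldots,\RElem_{r-1}\}), \qquad \PolyVec(\Elem)=0 \text{ for } \Elem \not\in \Ranking.
\]
Because $\Rank{\Mat}$ is integral, monotone, and satisfies $\Rank{\Mat}(A\cup\{\Elem\}) - \Rank{\Mat}(A) \in \{0,1\}$ (a standard consequence of $\Rank{\Mat}(\USubset) \le |\USubset|$ and submodularity), each coordinate $\PolyVec(\RElem_r)$ lies in $\{0,1\}$. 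Hence $\PolyVec = \chi^T$ for $T = \{\RElem_r : \PolyVec(\RElem_r) = 1\}$.

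Second, I would show $T$ is independent by induction on its enumeration $T=\{t_1,\ldots,t_k\}$ in the order they appear in $\Ranking$. Suppose inductively that $\Rank{\Mat}(\{t_1,\ldots,t_{j-1}\}) = j-1$. If $t_j = \RElem_r$, then by hypothesis its marginal over $\{\RElem_1,\ldots,\RElem_{r-1}\}$ is $1$. Since $\{t_1,\ldots,t_{j-1}\} \subseteq \{\RElem_1,\ldots,\RElem_{r-1}\}$, submodularity (decreasing marginals) forces the marginal of $t_j$ over the smaller set to be at least $1$, and the $\{0,1\}$ bound makes it exactly $1$. Thus $\Rank{\Mat}(T) = |T|$, so $T$ is independent.

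For the converse, given any independent set $\IndepSet$, I would order its elements arbitrarily to obtain $\Ranking$. Every prefix of $\Ranking$ is a subset of $\IndepSet$, hence independent, so its rank equals its cardinality and every marginal increment is exactly one. \autoref{prop:vertex} then yields the vertex $\chi^{\IndepSet}$, completing the correspondence.

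The main subtlety is the submodularity step in the inductive argument: one must carefully verify that passing from the larger prefix $\{\RElem_1,\ldots,\RElem_{r-1}\}$ to the smaller subset $\{t_1,\ldots,t_{j-1}\}$ can only increase the marginal of $t_j$, which is precisely the decreasing-marginals form of submodularity applied to the matroid rank function. Everything else is essentially bookkeeping on top of \autoref{prop:vertex}.
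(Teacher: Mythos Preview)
Your argument is correct. Note, however, that the paper does not actually prove this proposition: it is stated as a classical fact about matroid polytopes and the reader is referred to \citet{S03}. So there is no ``paper's proof'' to compare against; you have supplied a self-contained argument where the paper simply cites the literature.

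That said, your route is the natural one given the tools already in the paper: you leverage \autoref{prop:vertex} (the ordered-subset description of polymatroid vertices) together with the unit-increment property of the matroid rank, and the submodularity step you flag is indeed the only non-bookkeeping point. One small remark: to justify that marginal rank increments lie in $\{0,1\}$ you can appeal directly to submodularity, $\Rank{\Mat}(A\cup\{\Elem\}) - \Rank{\Mat}(A) \le \Rank{\Mat}(\{\Elem\}) - \Rank{\Mat}(\emptyset) \le 1$, rather than the slightly indirect combination of $\Rank{\Mat}(\USubset)\le |\USubset|$ and submodularity you mention. Either way the conclusion stands.
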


See~\citet{S03} for a comprehensive treatment of matroids.



\paragraph{Characterization of interim feasibility.}
We now generalize the characterization of interim feasibility as the
polymatroid given by the expected rank of the matroid.  From this
generalization the computational results of the preceeding section can
be extended from $k$-unit environments to matroids.


Let $\RandBits$ denote the random bits used by an ex post allocation
rule, and let $\RvAlloc{\Types,\RandBits} \in \{0,1\}^\UnionTypeSpace$
denote the ex post allocation rule (i.e., the incidence vector of the
subset of types that get allocated to) for type profile $\Types \in
\TypeSpaces$ and random bits $\RandBits$. It is easy to see that
$\RvAlloc{\Types,\RandBits}$ is a feasible ex post allocation if and
only if it satisfies the following class of inequalities.
\begin{align}
    \RvAlloc{\Types,\RandBits}(\TypeSubset) & \le \Rank{\Mat}(\Types \cap \TypeSubset), &
        &  \forall \Types \in \TypeSpaces, \forall \TypeSubset \subseteq \UnionTypeSpace
        \label{eq:bm1}
\end{align}
The above inequality states that the subset of types that get
allocated to must be an independent set of the restriction of matroid
$\Mat$ to $\{\Type_1,\ldots,\Type_n\}$.  The expectation of the
left-hand-side is exactly the normalized interim allocation rule,
i.e., for any $\type'_i \in \UnionTypeSpace$, $\PreAlloc(\type'_i) =
\expect[\Types,\RandBits]{\RvAlloc{\Types,\RandBits}(\type'_i)}$.
Taking expectations of both sides of \eqref{eq:bm1} then motivates the
following definition and theorem that characterize interim
feasibility.

\begin{definition}
\label{def:incidence_matroid} The {\em expected rank} for distribudion $\denss$, subspace $S \subset \UnionTypeSpace$, and matroid $\Mat$ with rank function $\Rank{\Mat}$ is
\begin{align}
    \DistOracle_\Mat(\TypeSubset) &= \Ex[\Types \sim
      \DistProbs]{\Rank{\Mat}(\Types \cap \TypeSubset)}
    \tag{$\DistOracle_\Mat$} \label{eq:dist_oracle_matroid}
\end{align}
where $\Types \cap \TypeSubset$ denotes $\{\Type_1,\ldots,\Type_n\} \cap \TypeSubset$.
\end{definition}

\begin{theorem}
\label{thm:border_matroid}%
For matroid $\Mat$ and distribution $\denss$,
the space of all feasible normalized interim allocation rules, $\PreAllocSpace$, is the polymatroid associated with
$\DistOracle_k$, i.e., $\PreAllocSpace = \PolyMat{\DistOracle_\Mat}$, i.e., for all $\PreAlloc \in \PreAllocSpace$, 
\begin{align}
    \PreAlloc(\TypeSubset) &\le \Ex[\Types]{\Rank{\Mat}(\Types \cap \TypeSubset)} = \DistOracle_\Mat(\TypeSubset), & \forall \TypeSubset \subseteq \UnionTypeSpace
    \label{eq:bm2}
\end{align}
\end{theorem}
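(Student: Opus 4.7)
The plan is to mirror the proof strategy used for the $k$-unit case (Theorem~\ref{thm:border_k}), exploiting the fact that the uniform matroid of rank $k$ recovers that special case. The argument has three parts: (a) verify $\DistOracle_\Mat$ is a non-decreasing submodular function with $\DistOracle_\Mat(\emptyset)=0$, so $\PolyMat{\DistOracle_\Mat}$ is a well-defined polymatroid; (b) prove the necessity of \eqref{eq:bm2}; (c) prove sufficiency by showing each vertex of $\PolyMat{\DistOracle_\Mat}$ is implemented by an ordered subset allocation mechanism, and then invoking the randomized rounding theorem.

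For (a), the matroid rank function $\Rank{\Mat}(\cdot)$ is non-negative, non-decreasing, and submodular by standard matroid theory; hence for each fixed $\Types$ the function $\TypeSubset \mapsto \Rank{\Mat}(\Types \cap \TypeSubset)$ is non-decreasing and submodular in $\TypeSubset$, and vanishes at $\emptyset$. Since $\DistOracle_\Mat$ is a convex combination of such functions over the distribution of $\Types$, the same properties are inherited, exactly as in \autoref{lem:g_k-submodular}. For (b), take the expectation of \eqref{eq:bm1} over $\Types \sim \DistProbs$ and the mechanism's random bits $\RandBits$: the left-hand side equals $\sum_{\TypeVar \in \TypeSubset} \PreAlloc(\TypeVar) = \PreAlloc(\TypeSubset)$ by linearity of expectation, and the right-hand side equals $\DistOracle_\Mat(\TypeSubset)$ by \autoref{def:incidence_matroid}, giving \eqref{eq:bm2}.

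For (c), I would first generalize the ordered subset mechanism (\autoref{def:rank_alloc_k}) to matroid feasibility: on profile $\Types$, scan types in the order given by $\Ranking$ and greedily accept each $\RElem_r$ that is present in $\Types$ and whose addition preserves independence in $\Mat$. Given a vertex $\PreAlloc = \Vertex{\Ranking}{\PolyMat{\DistOracle_\Mat}}$ with prefix sets $\TypeSubset^r = \{\RElem_1,\ldots,\RElem_r\}$, \autoref{prop:vertex} gives $\PreAlloc(\RElem_r) = \DistOracle_\Mat(\TypeSubset^r) - \DistOracle_\Mat(\TypeSubset^{r-1}) = \Ex[\Types]{\Rank{\Mat}(\Types \cap \TypeSubset^r) - \Rank{\Mat}(\Types \cap \TypeSubset^{r-1})}$. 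Conditioning on $\Types$, the probability that the greedy mechanism allocates to $\RElem_r$ is exactly the indicator that adding $\RElem_r$ to the greedy independent set built from $\Types \cap \TypeSubset^{r-1}$ strictly increases the rank (this is the correctness of the matroid greedy algorithm: the greedy output has size equal to $\Rank{\Mat}(\Types \cap \TypeSubset^r)$ after processing the first $r$ candidates). This matches $\PreAlloc(\RElem_r)$ after taking expectation over $\Types$.

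Once vertex implementability is established, any interior $\PreAlloc \in \PolyMat{\DistOracle_\Mat}$ is implemented by running $\RandRound(\PreAlloc)$ (\autoref{thm:randround}) to obtain a random vertex $\PreAlloc^*$ with ordered subset $\Ranking^*$ satisfying $\Ex{\PreAlloc^*} = \PreAlloc$, and then running the ordered subset mechanism for $\Ranking^*$; linearity of expectation gives the required ex post implementation, as in \autoref{thm:rra_k}. The main conceptual obstacle is the greedy-matches-the-vertex step above, where one must appeal to the greedy algorithm's optimality for matroids; the remaining ingredients are direct lifts of the $k$-unit arguments, since the submodularity of $\DistOracle_\Mat$ makes all the polymatroid tooling (separation, optimization, and randomized rounding) immediately applicable.
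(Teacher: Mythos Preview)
Your proof is correct, but it takes a different route from the paper's. The paper proves \autoref{thm:border_matroid} by first isolating an abstract \emph{Polymatroidal Decomposition Lemma} (\autoref{lem:decompose}): if $\SubFun^* = \sum_j \lambda^j \SubFun^j$ is a convex combination of non-decreasing submodular functions, then $\PolyVec^* \in \PolyMat{\SubFun^*}$ if and only if $\PolyVec^*$ decomposes as $\sum_j \lambda^j \PolyVec^j$ with each $\PolyVec^j \in \PolyMat{\SubFun^j}$, and moreover this decomposition is unique at vertices (each $\PolyVec^j$ is the vertex of $\PolyMat{\SubFun^j}$ for the same ordered subset $\Ranking$). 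The theorem then follows in one line by taking $\SubFun^j = \Rank[\Types]{\Mat}$ and $\lambda^j = \DistProbs(\Types)$, since feasibility of $\PreAlloc$ is exactly the existence of such a per-profile decomposition. Your argument instead handles necessity by direct expectation of \eqref{eq:bm1} and sufficiency by explicitly computing, via the matroid greedy algorithm, that the ordered-subset mechanism realizes each vertex $\Vertex{\Ranking}{\PolyMat{\DistOracle_\Mat}}$, then invoking $\RandRound$ for interior points. The two arguments are essentially dual unpackings of the same vertex calculation (your greedy-rank-increment identity $\mathbf{1}\{\RElem_r \text{ added}\} = \Rank{\Mat}(\Types \cap \TypeSubset^r) - \Rank{\Mat}(\Types \cap \TypeSubset^{r-1})$ is precisely what drives the uniqueness clause of \autoref{lem:decompose}). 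The paper's abstraction buys reusability and immediately yields the uniqueness statement of \autoref{thm:border_matroid-unique} as a corollary; your route is more self-contained and makes the ex post implementation explicit without detouring through the decomposition lemma.
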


\begin{theorem}
\label{thm:border_matroid-unique}
For matroid $\Mat$, if $\PreAlloc \in \PreAllocSpace$ is the vertex
$\Vertex{\Ranking}{\PolyMat{\DistOracle_\Mat}}$ of the polymatroid
$\PolyMat{\DistOracle_\Mat}$ the unique ex post implementation is the
ordered subset mechanism induced by $\Ranking$
(\autoref{def:rank_alloc_k}).
\end{theorem}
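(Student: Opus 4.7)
The plan is to mirror the proof of \autoref{thm:border_k-unique}, with $\min(|\Types \cap \TypeSubset|,k)$ replaced throughout by $\Rank{\Mat}(\Types \cap \TypeSubset)$. Let $\PreAlloc = \Vertex{\Ranking}{\PolyMat{\DistOracle_\Mat}}$ be the vertex under consideration, and let $\Ranking = (\RElem_1, \ldots, \RElem_{|\Ranking|})$ be the corresponding ordered subset whose existence is guaranteed by \autoref{prop:vertex}. For $r \in [|\Ranking|]$, define the prefix $\TypeSubset^r = \{\RElem_1, \ldots, \RElem_r\}$. The infrastructure to invoke is already in place: \autoref{thm:border_matroid} identifies $\PreAllocSpace$ with $\PolyMat{\DistOracle_\Mat}$, \autoref{lem:g_k-submodular}'s argument carries over to show $\DistOracle_\Mat$ is submodular (as $\Rank{\Mat}(\Types \cap \cdot)$ is submodular for every fixed $\Types$), and \autoref{prop:vertex} tells us that for each vertex $\PreAlloc$, the polymatroid inequality \eqref{eq:bm2} is tight on every prefix $\TypeSubset^r$.

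First I would translate this tightness in expectation into pointwise tightness. The ex post feasibility constraint \eqref{eq:bm1} asserts $\RvAlloc{\Types,\RandBits}(\TypeSubset) \leq \Rank{\Mat}(\Types \cap \TypeSubset)$ for every $(\Types,\RandBits)$, and \eqref{eq:bm2} is obtained by taking expectations. Thus whenever $\PreAlloc(\TypeSubset^r) = \DistOracle_\Mat(\TypeSubset^r)$, the inequality \eqref{eq:bm1} must hold with equality almost surely over $(\Types, \RandBits)$; equivalently, the set of served agents whose types lie in $\TypeSubset^r$ must form a maximum independent subset of $\Types \cap \TypeSubset^r$ in $\Mat$.

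Second, I would argue that the simultaneous maximality requirement over every prefix uniquely pins down the allocation to be the matroid greedy on $\Ranking$, which is the ordered subset mechanism (\autoref{def:rank_alloc_k} read as ``serve the next element if doing so preserves independence''). The argument is by induction on $r$: given that the allocation restricted to $\TypeSubset^{r-1}$ is determined (as some fixed maximum independent subset of $\Types \cap \TypeSubset^{r-1}$), the decision at step $r$ is forced by whether $\Rank{\Mat}(\Types \cap \TypeSubset^r) - \Rank{\Mat}(\Types \cap \TypeSubset^{r-1}) \in \{0,1\}$ equals $1$, which by the matroid exchange property happens precisely when $\RElem_r$ occurs in $\Types$ and augmenting the current independent set with the agent bearing type $\RElem_r$ preserves independence. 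Any element $\Type_i \notin \Ranking$ has $\PreAlloc(\Type_i) = 0$ by \autoref{prop:vertex}, so it is never served; this matches the ``never served'' clause in \autoref{def:rank_alloc_k}.

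The main obstacle I anticipate is the bookkeeping around ``types showing up more than once.'' Under the paper's standing convention (made at the start of \autoref{sec:inter}) of disjointifying type spaces so that every $\Type \in \UnionTypeSpace$ is carried by a unique agent, a type $\Type \in \TypeSubset^r$ either appears in $\Types$ or not, the rank increments lie in $\{0,1\}$, and the induction above is unambiguous. Once this is noted, the rest of the proof is a routine transcription of the proof of \autoref{thm:border_k-unique}, with the $\min$-rank of the uniform matroid replaced by the general matroid rank $\Rank{\Mat}$.
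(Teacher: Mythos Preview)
Your proof is correct, but it takes a different route from the paper's. You directly generalize the proof of \autoref{thm:border_k-unique}: tightness of \eqref{eq:bm2} on every prefix $\TypeSubset^r$ forces tightness of \eqref{eq:bm1} pointwise, and then an induction on $r$ (using that $\Types\cap\TypeSubset^r$ and $\Types\cap\TypeSubset^{r-1}$ differ by at most the single element $\RElem_r$, so any basis of the latter extends to a basis of the former by adding $\RElem_r$ or not) pins down the served set to be exactly the matroid greedy output on $\Ranking$. The paper instead invokes the Polymatroidal Decomposition Lemma (\autoref{lem:decompose}): since $\DistOracle_\Mat=\sum_{\Types}\DistProbs(\Types)\,\Rank[\Types]{\Mat}$, any vertex $\PreAlloc=\Vertex{\Ranking}{\PolyMat{\DistOracle_\Mat}}$ decomposes \emph{uniquely} as $\RvAlloc{\Types}=\Vertex{\Ranking}{\PolyMat{\Rank[\Types]{\Mat}}}$, which is precisely the greedy (ordered-subset) allocation for each profile $\Types$. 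The prefix-tightness idea you use is still present, but it is packaged inside the proof of \autoref{lem:decompose}; since that lemma is already needed to prove \autoref{thm:border_matroid}, the paper's argument is a one-line corollary of existing machinery, whereas your argument is more self-contained and makes the matroid-greedy structure explicit without the detour through the abstract decomposition.
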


To prove the above theorems, we use the following decomposition lemma which applies to general polymatroids.

\begin{lemma}[Polymatroidal Decomposition]
\label{lem:decompose}%
Let $\Universe$ be an arbitrary finite set, $\SubFun^1,\ldots,\SubFun^m : 2^\Universe \to \PosReals$ be arbitrary non-decreasing
submodular functions, and $\SubFun^*=\sum_{j=1}^m \lambda^j \SubFun^j$ be an arbitrary convex combination of them. For every
$\PolyVec^*$ the following holds: $\PolyVec^* \in \PolyMat{\SubFun^*}$ if and only if  it can be decomposed as $\PolyVec^* =
\sum_{j=1}^m \lambda^j \PolyVec^j$ such that $\PolyVec^j \in \PolyMat{\SubFun^j}$ (for each $j \in [m]$). Furthermore, if
$\PolyVec^*$ is a vertex of $\PolyMat{\SubFun^*}$, this decomposition is unique. More precisely, if
$\PolyVec^*=\Vertex{\Ranking}{\PolyMat{\SubFun^*}}$ for some ordered subset $\Ranking$, then
$\PolyVec^j=\Vertex{\Ranking}{\PolyMat{\SubFun^j}}$ (for each $j \in [m])$.
\end{lemma}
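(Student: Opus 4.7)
The plan is to split the lemma into an easy containment and a harder constructive direction, and then treat vertices first before passing to arbitrary points by convex combination.

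\textbf{Easy direction.} If $\PolyVec^* = \sum_j \lambda^j \PolyVec^j$ with each $\PolyVec^j \in \PolyMat{\SubFun^j}$, then for every $\USubset \subseteq \Universe$, summing the inequalities $\PolyVec^j(\USubset) \le \SubFun^j(\USubset)$ with the weights $\lambda^j$ yields $\PolyVec^*(\USubset) \le \SubFun^*(\USubset)$, so $\PolyVec^* \in \PolyMat{\SubFun^*}$. Non-negativity of $\PolyVec^*$ is preserved by the convex combination.

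\textbf{Vertex case (existence).} Let $\PolyVec^*=\Vertex{\Ranking}{\PolyMat{\SubFun^*}}$ for some ordered subset $\Ranking=(\RElem_1,\ldots,\RElem_{|\Ranking|})$, and set $\USubset^r=\{\RElem_1,\ldots,\RElem_r\}$. Define $\PolyVec^j := \Vertex{\Ranking}{\PolyMat{\SubFun^j}}$, which by \autoref{prop:vertex} is a vertex of $\PolyMat{\SubFun^j}$ (using that each $\SubFun^j$ is non-decreasing and submodular with $\SubFun^j(\emptyset)=0$). Because the formula in \autoref{prop:vertex} is linear in the submodular function, the identity $\SubFun^* = \sum_j \lambda^j \SubFun^j$ yields, coordinate by coordinate,
\begin{align*}
\PolyVec^*(\Elem)
&= \SubFun^*(\USubset^r)-\SubFun^*(\USubset^{r-1})
 = \sum_j \lambda^j\bigl(\SubFun^j(\USubset^r)-\SubFun^j(\USubset^{r-1})\bigr)
 = \sum_j \lambda^j \PolyVec^j(\Elem)
\end{align*}
for $\Elem=\RElem_r\in \Ranking$, and $0=\sum_j \lambda^j\cdot 0$ for $\Elem \not\in \Ranking$.

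\textbf{Vertex case (uniqueness).} Assume without loss of generality that every $\lambda^j>0$. Suppose $\PolyVec^*=\sum_j \lambda^j \tilde{\PolyVec}^j$ is any other valid decomposition. For each $r \le |\Ranking|$, the set $\USubset^r$ is tight for $\PolyVec^*$ by \autoref{prop:vertex}, so
\begin{align*}
\sum_j \lambda^j \tilde{\PolyVec}^j(\USubset^r) \;=\; \PolyVec^*(\USubset^r) \;=\; \SubFun^*(\USubset^r) \;=\; \sum_j \lambda^j \SubFun^j(\USubset^r).
\end{align*}
Combined with $\tilde{\PolyVec}^j(\USubset^r)\le \SubFun^j(\USubset^r)$, every inequality is forced to be tight, so each $\USubset^r$ is tight for $\tilde{\PolyVec}^j$. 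Likewise, for $\Elem \notin \Ranking$, non-negativity of each $\tilde{\PolyVec}^j(\Elem)$ together with $\sum_j \lambda^j \tilde{\PolyVec}^j(\Elem) = \PolyVec^*(\Elem)=0$ forces $\tilde{\PolyVec}^j(\Elem)=0$. The full nested tight chain and these zero coordinates pin $\tilde{\PolyVec}^j$ down to $\Vertex{\Ranking}{\PolyMat{\SubFun^j}}$, matching the construction above.

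\textbf{General case.} For arbitrary $\PolyVec^* \in \PolyMat{\SubFun^*}$, express it as a convex combination of vertices, $\PolyVec^* = \sum_k \mu^k V^k$ with $V^k = \Vertex{\Ranking^k}{\PolyMat{\SubFun^*}}$. Applying the vertex case to each $V^k$ and swapping the order of summation gives
\begin{align*}
\PolyVec^* \;=\; \sum_k \mu^k \sum_j \lambda^j \Vertex{\Ranking^k}{\PolyMat{\SubFun^j}} \;=\; \sum_j \lambda^j \underbrace{\Bigl(\sum_k \mu^k \Vertex{\Ranking^k}{\PolyMat{\SubFun^j}}\Bigr)}_{=:\,\PolyVec^j},
\end{align*}
where each $\PolyVec^j$ is a convex combination of vertices of $\PolyMat{\SubFun^j}$ and hence lies in $\PolyMat{\SubFun^j}$. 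The main subtlety is the uniqueness claim in the vertex case: one must check that the tight nested chain and the vanishing coordinates together give a complete system of equalities that forces $\tilde{\PolyVec}^j$ to coincide with $\Vertex{\Ranking}{\PolyMat{\SubFun^j}}$; this is where the non-decreasing hypothesis on each $\SubFun^j$ is used to ensure the telescoping differences are non-negative and can only sum to zero when each summand vanishes.
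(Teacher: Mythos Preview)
Your proof is correct and follows essentially the same approach as the paper: establish the easy containment by summing weighted inequalities, handle vertices via the tight nested chain of prefixes from \autoref{prop:vertex} (which forces each $\PolyVec^j$ to be $\Vertex{\Ranking}{\PolyMat{\SubFun^j}}$), and then extend to arbitrary points of $\PolyMat{\SubFun^*}$ by writing them as convex combinations of vertices. The only organizational difference is that you separate existence and uniqueness at the vertex step, whereas the paper derives both at once from the tightness argument; either way, the content is the same.
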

\begin{proof}
First, observe that the only-if part is obviously true, i.e., if $\PolyVec^j \in \PolyMat{\SubFun^j}$ (for each $j \in [m]$), we
can write
\begin{align}
    \PolyVec^j(\USubset) & \le \SubFun^j(\USubset) & & \forall \USubset \subseteq \Universe, \label{eq:pmd1}
\intertext{multiplying both sides by $\lambda^j$ and summing over all $j\in [m]$ we obtain}
    \PolyVec^*(\USubset)=\sum_{i=1}^m \lambda^j \PolyVec^j(\USubset) & \le \sum_{j=1}^m \lambda^j \SubFun^j(\USubset)=\SubFun^*(\USubset)  & & \forall \USubset \subseteq
    \Universe,
    \label{eq:pmd2}
\end{align}
which implies that $\PolyVec^* \in \PolyMat{\SubFun^*}$.

Next, we prove that for every $\PolyVec^* \in \PolyMat{\SubFun^*}$
such a decomposition exists. Note that a polymatroid is a convex
polytope, so any $\PolyVec^* \in \PolyMat{\SubFun^*}$ can be written
as a convex combination of vertices as $\PolyVec^*=\sum_{\ell}
\alpha^\ell {\PolyVec^*}^\ell$, where each ${\PolyVec^*}^\ell$ is a
vertex of $\PolyMat{\SubFun^*}$; consequently, if we prove the claim
for the vertices of $\PolyMat{\SubFun^*}$, i.e., that
${\PolyVec^*}^\ell=\sum_{j=1}^m \lambda^j \PolyVec^{\ell,j}$ for some
$\PolyVec^{\ell,j} \in \PolyMat{\SubFun^j}$, then a decomposition of
$\PolyVec^* = \sum_{j=1}^m \lambda^j \PolyVec^j$ can be obtained by
setting $\PolyVec^j = \sum_\ell \alpha^\ell \PolyVec^{\ell,j}$.

Next, we prove the second part of the theorem which also implies that
a decomposition exists for every vertex of $\PolyMat{\SubFun^*}$. Let
$\PolyVec^*=\Vertex{\Ranking}{\PolyMat{\SubFun^*}}$ be an arbitrary
vertex of $\PolyMat{\SubFun^*}$ with a corresponding ordered subset
$\Ranking$; by \autoref{prop:vertex}, such a $\Ranking$ exists for
every vertex of a polymatroid. For every integer $r \leq |\Ranking|$,
define $\TypeSubset^r =\{\RElem_1,\ldots,\RElem_r\}$ as the
$r$-element prefix of the ordering.  By \autoref{prop:vertex},
inequality~\eqref{eq:pmd2} is tight for every $\USubset^r$, which
implies that inequality~\eqref{eq:pmd1} must also be tight for each
$\USubset^r$ and for every $j \in [m]$. Consequently, for each $r \in [|\Ranking|]$, and each $j \in [m]$, by taking the difference
of the inequality~\eqref{eq:pmd2} for $\USubset^{r}$ and
$\USubset^{r-1}$, given that they are tight, we obtain
\begin{align*}
    \PolyVec^j(\Elem) & =\SubFun^j(\USubset^{r})-\SubFun^j(\USubset^{r-1})
\end{align*}
Furthermore, for each $\Elem \not \in \Ranking$, $\PolyVec^*(\Elem)=0$ which implies that $\PolyVec^j(\Elem)=0$ for
every $j \in [m]$. Observe that we have obtained a unique $\PolyVec^j$ for each $j \in [m]$ which is exactly the vertex of
$\PolyMat{\SubFun^j}$ corresponding to $\Ranking$ as described in \autoref{prop:vertex}. It is easy to verify that indeed
$\PolyVec^* = \sum_{j=1}^m \lambda^j \PolyVec^j$.
\end{proof}

\begin{proof}[Proof of \autoref{thm:border_matroid}]
The inequality in equation \eqref{eq:bm1} states that the subset of types that get allocated to must be an independent set of the restriction of
matroid $\Mat$ to $\{\Type_1,\ldots,\Type_n\}$. Define $\Rank[\Types ]{\Mat}(\TypeSubset) = \Rank{\Mat}(\Types \cap \TypeSubset)$
(for all $(\TypeSubset \subseteq \UnionTypeSpace$). Notice that $\Rank[\Types ]{\Mat}$ is a submodular function. The above
inequality implies that $\RvAlloc{\Types,\RandBits} \in \PolyMat{\Rank[\Types ]{\Mat}}$. Define $\RvAlloc{\Types} =
\Ex[\RandBits]{\RvAlloc{\Types,\RandBits}}$. Observe that $\PreAlloc=\Ex[\Types]{\RvAlloc{\Types}}$, so $\PreAlloc$ is a
feasible normalized interim allocation rule if and only if it can be decomposed as $\PreAlloc = \sum_{\Types \in \TypeSpaces}
\DistProb(\Types) \RvAlloc{\Types}$ where $\RvAlloc{\Types} \in \PolyMat{\Rank[\Types]{\Mat}}$ for every $\Types \in
\TypeSpaces$; by \autoref{lem:decompose}, this is equivalent to $\PreAlloc \in \PolyMat{\DistOracle_{\Mat}}$ where
$\DistOracle_\Mat(\TypeSubset) = \sum_{\Types \in \TypeSpaces} \DistProb(\Types) \Rank[\Types]{
\Mat}(\TypeSubset)=\Ex[\Types]{\Rank[\Types]{\Mat}(\TypeSubset)}$ (for all $\TypeSubset \subseteq \UnionTypeSpace$), as defined
in \autoref{def:incidence_matroid}. That completes the proof of the first part of the theorem.
\end{proof}

\begin{proof}[Proof of \autoref{thm:border_matroid-unique}]
Suppose $\PreAlloc=\Vertex{\Ranking}{\PolyMat{\DistOracle_\Mat}}$ for
some ordered subset $\Ranking$. By \autoref{lem:decompose}, the
decomposition of $\PreAlloc$ is unique and is given by
$\RvAlloc{\Types} =
\Vertex{\Ranking}{\PolyMat{\Rank[\Types]{\Mat}}}$. Notice that this is
the same allocation obtained by the deterministic rank-based
allocation mechanism which ranks according to $\Ranking$ (see
\autoref{def:rank_alloc_k}).
\end{proof}

\paragraph{Optimization over feasible interim allocation rules.}

As in \autoref{sec:border_k}, the characterization of interim
feasibility as a polymatroid constraint immediately enables efficient
solving of optimization problems over the feasible normalized interim
allocation rules as long as we can compute $\DistOracle_\Mat$
efficiently (see \citet{S03} for optimization over
polymatroids). Depending on the specific matroid, it might be possible
to exactly compute $\DistOracle_\Mat$ in polynomial time (e.g., as in
\autoref{lem:g_k}); otherwise, it can be computed approximately within
a factor of $1-\epsilon$ and with probability $1-\delta$, by sampling,
in time polynomial in $\frac{1}{\epsilon}$ and $\frac{1}{\delta}$.

\paragraph{Ex post implementation of feasible interim allocation rules.}
An ex post implementation for any $\PreAlloc \in \PreAllocSpace$ can
be obtained exactly as in \autoref{sec:border_k}. 


\begin{corollary}
\label{thm:rra_matroid}%
Any normalized interim allocation rule $\PreAlloc \in \PreAllocSpace$
can be implemented by the randomized rank-based allocation mechanism
(\autoref{def:rra_k}) as a distribution over deterministic
rank-based allocation mechanisms (\autoref{def:rank_alloc_k}).
\end{corollary}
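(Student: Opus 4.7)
The plan is to mirror the argument of \autoref{thm:rra_k} (which handled $k$-unit constraints) and reassemble the three ingredients already in hand: the polymatroidal characterization of $\PreAllocSpace$, the uniqueness of ex post implementation at vertices, and the randomized rounding operator $\RandRound(\cdot)$.

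First, I would invoke \autoref{thm:border_matroid} to identify $\PreAllocSpace$ with the polymatroid $\PolyMat{\DistOracle_\Mat}$, where $\DistOracle_\Mat$ is the non-decreasing submodular function from \autoref{def:incidence_matroid}. This places any feasible $\PreAlloc \in \PreAllocSpace$ inside a polymatroid of exactly the form needed to apply \autoref{thm:randround}. Feeding $\PreAlloc$ into $\RandRound(\cdot)$ yields a random pair $(\PreAlloc^*, \Ranking^*)$ where $\PreAlloc^*$ is a vertex of $\PolyMat{\DistOracle_\Mat}$, $\Ranking^*$ is the ordered subset associated with $\PreAlloc^*$ via \autoref{prop:vertex}, and crucially $\Ex{\PreAlloc^*} = \PreAlloc$ coordinatewise.

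Next, I would apply \autoref{thm:border_matroid-unique} to the sampled vertex: for each realization of $(\PreAlloc^*, \Ranking^*)$, the ordered subset mechanism (\autoref{def:rank_alloc_k}) run with $\Ranking^*$ is the (unique) ex post implementation of the interim rule $\PreAlloc^*$. The randomized rank-based allocation mechanism of \autoref{def:rra_k} is defined to do exactly this: sample $\Ranking^*$ and then execute the deterministic ordered subset mechanism. Hence the realized ex post allocation rule is a distribution over deterministic ordered subset mechanisms.

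Finally I would close the loop by linearity of expectation. Conditioning on the randomness of $\RandRound$, the normalized interim allocation rule induced by the ordered subset mechanism with ordered subset $\Ranking^*$ is exactly $\PreAlloc^*$; taking expectation over the draw of $\RandRound$ gives $\Ex{\PreAlloc^*} = \PreAlloc$, so the randomized ordered subset mechanism implements $\PreAlloc$. There is no substantive obstacle here, since every nontrivial piece (polymatroidality of $\DistOracle_\Mat$, correctness and unbiasedness of $\RandRound$, uniqueness of vertex implementations) has already been established; the only care needed is to note that \autoref{thm:randround} applies because $\DistOracle_\Mat$ is non-decreasing submodular with $\DistOracle_\Mat(\emptyset)=0$, a fact that follows directly from $\Rank{\Mat}$ having these properties (submodularity passes through expectation as in the proof of \autoref{lem:g_k-submodular}).
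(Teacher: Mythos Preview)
Your proposal is correct and follows essentially the same approach as the paper: the paper presents this corollary without a separate proof, noting only that ``an ex post implementation for any $\PreAlloc \in \PreAllocSpace$ can be obtained exactly as in \autoref{sec:border_k},'' where the corresponding result (\autoref{thm:rra_k}) is proved in one line by linearity of expectation. You have simply made explicit the ingredients (\autoref{thm:border_matroid}, \autoref{thm:border_matroid-unique}, \autoref{thm:randround}) that the paper leaves implicit in that cross-reference.
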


\section{Conclusions and Extensions}

\label{sec:conc}


In this paper we have focused on binary allocation problems where an
agent is either served or not served.  For these binary allocation
problems distributions over allocations are given by a single number,
i.e., the probability that the agent is served.  Our results can be
extended to environments with multi-unit demand when the agents
utility is linear in the expected number of units the agent receives.

In Section~\ref{sec:inter} we described algorithms for
optimizing over feasible interim allocation rules and for (ex post)
implementation of the resulting rules.  Neither these algorithms nor
the generalization of Border's condition require the types of the
agents to be independently distributed.  However, our formulation of
incentive compatibility for interim allocation rules does require
independence.  For correlated distributions the interim allocation
rule is a function of the actual type of the agent (which conditions
the types of the other agents) and the reported type of the agent.
Therefore, this generalization of our theorem to correlated
environments has little relevance for mechanism design.

The algorithms in Section~\ref{sec:inter} do not require the
feasibility constraint to be known in advance.  A simple example where
this generalization is interesting is a multi-unit auction where the
supply $k$ is stochastically drawn from a known distribution.  Our
result shows that the optimal auction in such an environment can be
described by picking the random ordering on types and allocating
greedily by this ordering while supplies last.  We do not know of many
examples other than this where this generalization is interesting.

Our techniques can also be used in conjunction with the approach of
\citet{CDW11} for solving multi-item auction problems for agents with
additive values.

One important extension of our work is to scenarios where the type
space and distribution are only available via oracle access (and can
be very large or even infinite).  Given a polynomial time
approximation scheme for a variant of the single agent problem we can
construct a polynomial time approximation scheme for the multi-agent
problem.  Such a model is important, for instance, when the agents
type space is multi-dimensional but succinctly describable, e.g., for
unit demand agents with independently distributed values for various
items for sale.  Such a type space would be exponentially large in the
number of items but succinctly described in polynomial space in the
number of items.  While reduction can be applied to this scenario;
however, we do not know of any solution to the optimal single-agent
problem with which to instantiate the reduction.


\bibliographystyle{apalike}
\bibliography{bibs}

\begin{thebibliography}{}

\bibitem[Alaei, 2011]{A11}
Alaei, S. (2011).
\newblock Bayesian combinatorial auctions: Expanding single buyer mechanisms to
  many buyers.
\newblock In {\em IEEE 52nd Annual Symposium on Foundations of Computer
  Science}, pages 512--521.

\bibitem[Armstrong, 1996]{A96}
Armstrong, M. (1996).
\newblock Multiproduct nonlinear pricing.
\newblock {\em Econometrica}, 64(1):51--75.

\bibitem[Arrow and Debreu, 1954]{AD54}
Arrow, K.~J. and Debreu, G. (1954).
\newblock Existence of an equilibrium for a competitive economy.
\newblock {\em Econometrica}, 22(3):pp. 265--290.

\bibitem[Bhattacharya et~al., 2010]{BGGM10}
Bhattacharya, S., Goel, G., Gollapudi, S., and Munagala, K. (2010).
\newblock Budget constrained auctions with heterogeneous items.
\newblock In {\em 44th ACM Symposium on Theory of Computing}, pages 379--388.

\bibitem[Border, 2007]{B07}
Border, K. (2007).
\newblock Reduced form auctions revisited.
\newblock {\em Economic Theory}, 31(1):167--181.

\bibitem[Border, 1991]{B91}
Border, K.~C. (1991).
\newblock Implementation of reduced form auctions: A geometric approach.
\newblock {\em Econometrica}, 59(4):pp. 1175--1187.

\bibitem[Briest et~al., 2010]{BCKW-10}
Briest, P., Chawla, S., Kleinberg, R., and Weinberg, S.~M. (2010).
\newblock Pricing randomized allocations.
\newblock In {\em ACM-SIAM Symposium on Discrete Algorithms}, pages 585--597.

\bibitem[Bulow and Roberts, 1989]{BR89}
Bulow, J. and Roberts, J. (1989).
\newblock The simple economics of optimal auctions.
\newblock {\em Journal of Political Economy}, 97(5):1060--90.

\bibitem[Cai and Daskalakis, 2011]{CD11}
Cai, Y. and Daskalakis, C. (2011).
\newblock Extreme-value theorems for optimal multidimensional pricing.
\newblock In {\em IEEE 52nd Annual Symposium on Foundations of Computer
  Science}, pages 522--531.

\bibitem[Cai et~al., 2012]{CDW11}
Cai, Y., Daskalakis, C., and Weinberg, M. (2012).
\newblock An algorithmic characterization of multi-dimensional mechanisms.

\bibitem[Chawla et~al., 2007]{CHK07}
Chawla, S., Hartline, J.~D., and Kleinberg, R.~D. (2007).
\newblock Algorithmic pricing via virtual valuations.
\newblock In MacKie-Mason, J.~K., Parkes, D.~C., and Resnick, P., editors, {\em
  ACM Conference on Electronic Commerce}, pages 243--251. ACM.

\bibitem[Chawla et~al., 2010]{CHMS10}
Chawla, S., Hartline, J.~D., Malec, D.~L., and Sivan, B. (2010).
\newblock Multi-parameter mechanism design and sequential posted pricing.
\newblock In {\em 42nd ACM Symposium on Theory of Computing}, pages 311--320.

\bibitem[Che and Gale, 1995]{CG95}
Che, Y. and Gale, I. (1995).
\newblock The optimal mechanism for selling to budget-constrained consumers.
\newblock Technical report.

\bibitem[Chen and Deng, 2006]{CD06}
Chen, X. and Deng, X. (2006).
\newblock Settling the complexity of two-player nash equilibrium.
\newblock In {\em IEEE 47th Annual Symposium on Foundations of Computer
  Science}, pages 261--272.

\bibitem[Daskalakis et~al., 2009]{DGP09}
Daskalakis, C., Goldberg, P.~W., and Papadimitriou, C.~H. (2009).
\newblock The complexity of computing a nash equilibrium.
\newblock {\em SIAM J. Comput.}, 39(1):195--259.

\bibitem[Devanur et~al., 2008]{DPSV08}
Devanur, N.~R., Papadimitriou, C.~H., Saberi, A., and Vazirani, V.~V. (2008).
\newblock Market equilibrium via a primal--dual algorithm for a convex program.
\newblock {\em J. ACM}, 55(5).

\bibitem[E., 2007]{elk-07}
E., E. (2007).
\newblock Designing and learning optimal finite support auctions.
\newblock In {\em Proceedings of the eighteenth annual ACM-SIAM symposium on
  Discrete algorithms}, SODA '07, pages 736--745.

\bibitem[Jain, 2004]{J04}
Jain, K. (2004).
\newblock A polynomial time algorithm for computing the arrow-debreu market
  equilibrium for linear utilities.
\newblock In {\em IEEE 45th Annual Symposium on Foundations of Computer
  Science}, pages 286--294.

\bibitem[Laffont and Robert, 1996]{LR96}
Laffont, J.-J. and Robert, J. (1996).
\newblock Optimal auction with financially constrained buyers.
\newblock {\em Economics Letters}, 52(2):181--186.

\bibitem[Manelli and Vincent, 2010]{MV10}
Manelli, A.~M. and Vincent, D.~R. (2010).
\newblock Bayesian and dominant-strategy implementation in the independent
  private-values model.
\newblock {\em Econometrica}, 78(6):1905--1938.

\bibitem[Maskin and Riley, 1984]{MR84}
Maskin, E. and Riley, J. (1984).
\newblock Optimal auctions with risk averse buyers.
\newblock {\em Econometrica}, 52(6):pp. 1473--1518.

\bibitem[Maskin, 2000]{M00}
Maskin, E.~S. (2000).
\newblock Auctions, development, and privatization: Efficient auctions with
  liquidity-constrained buyers.
\newblock {\em European Economic Review}, 44(4-6):667--681.

\bibitem[Matthews, 1983]{Matthews83}
Matthews, S. (1983).
\newblock Selling to risk averse buyers with unobservable tastes.
\newblock {\em Journal of Economic Theory}, 30(2):370--400.

\bibitem[Matthews, 1984]{M84}
Matthews, S.~A. (1984).
\newblock On the implementability of reduced form auctions.
\newblock {\em Econometrica}, 52(6):pp. 1519--1522.

\bibitem[McAfee and McMillan, 1988]{MM88}
McAfee, R.~P. and McMillan, J. (1988).
\newblock Multidimensional incentive compatibility and mechanism design.
\newblock {\em Journal of Economic Theory}, 46(2):335--354.

\bibitem[Mierendorff, 2011]{M11}
Mierendorff, K. (2011).
\newblock Asymmetric reduced form auctions.
\newblock {\em Economics Letters}, 110(1):41--44.

\bibitem[Mirman and Sibley, 1980]{MS80}
Mirman, L.~J. and Sibley, D. (1980).
\newblock Optimal nonlinear prices for multiproduct monopolies.
\newblock {\em Bell Journal of Economics}, 11(2):659--670.

\bibitem[Myerson, 1981]{M81}
Myerson, R.~B. (1981).
\newblock Optimal auction design.
\newblock {\em Mathematics of Operations Research}, 6(1):pp. 58--73.

\bibitem[Nash, 1951]{N51}
Nash, J. (1951).
\newblock Non-cooperative games.
\newblock {\em The Annals of Mathematics}, 54(2):pp. 286--295.

\bibitem[Pai and Vohra, 2008]{PV08}
Pai, M.~M. and Vohra, R. (2008).
\newblock Optimal auctions with financially constrained bidders.
\newblock Discussion papers, Northwestern University, Center for Mathematical
  Studies in Economics and Management Science.

\bibitem[Roberts, 1979]{R79}
Roberts, K. W.~S. (1979).
\newblock Welfare considerations of nonlinear pricing.
\newblock {\em Economic Journal}, 89(353):66--83.

\bibitem[Rochet and Chone, 1998]{RC98}
Rochet, J.-C. and Chone, P. (1998).
\newblock Ironing, sweeping, and multidimensional screening.
\newblock {\em Econometrica}, 66(4):783--826.

\bibitem[Schrijver, 2003]{S03}
Schrijver, A. (2003).
\newblock {\em {Combinatorial Optimization : Polyhedra and Efficiency
  (Algorithms and Combinatorics)}}.
\newblock Springer.

\bibitem[Spence, 1980]{S80}
Spence, A.~M. (1980).
\newblock Multi-product quantity-dependent prices and profitability
  constraints.
\newblock {\em Review of Economic Studies}, 47(5):821--41.

\bibitem[Wilson, 1994]{W93}
Wilson, R. (1994).
\newblock Nonlinear pricing.
\newblock {\em Journal of Political Economy}, 102(6):1288--1291.

\end{thebibliography}

\appendix
\section{Proofs from Section~\ref{sec:border_1}}
\label{app:border_1}%

We first describe a network flow formulation of $\SeqAllocSpace$, which is used to prove \autoref{lem:degen} and
\autoref{lem:aug}.

\begin{figure}
\center
\includegraphics[width=.95\textwidth]{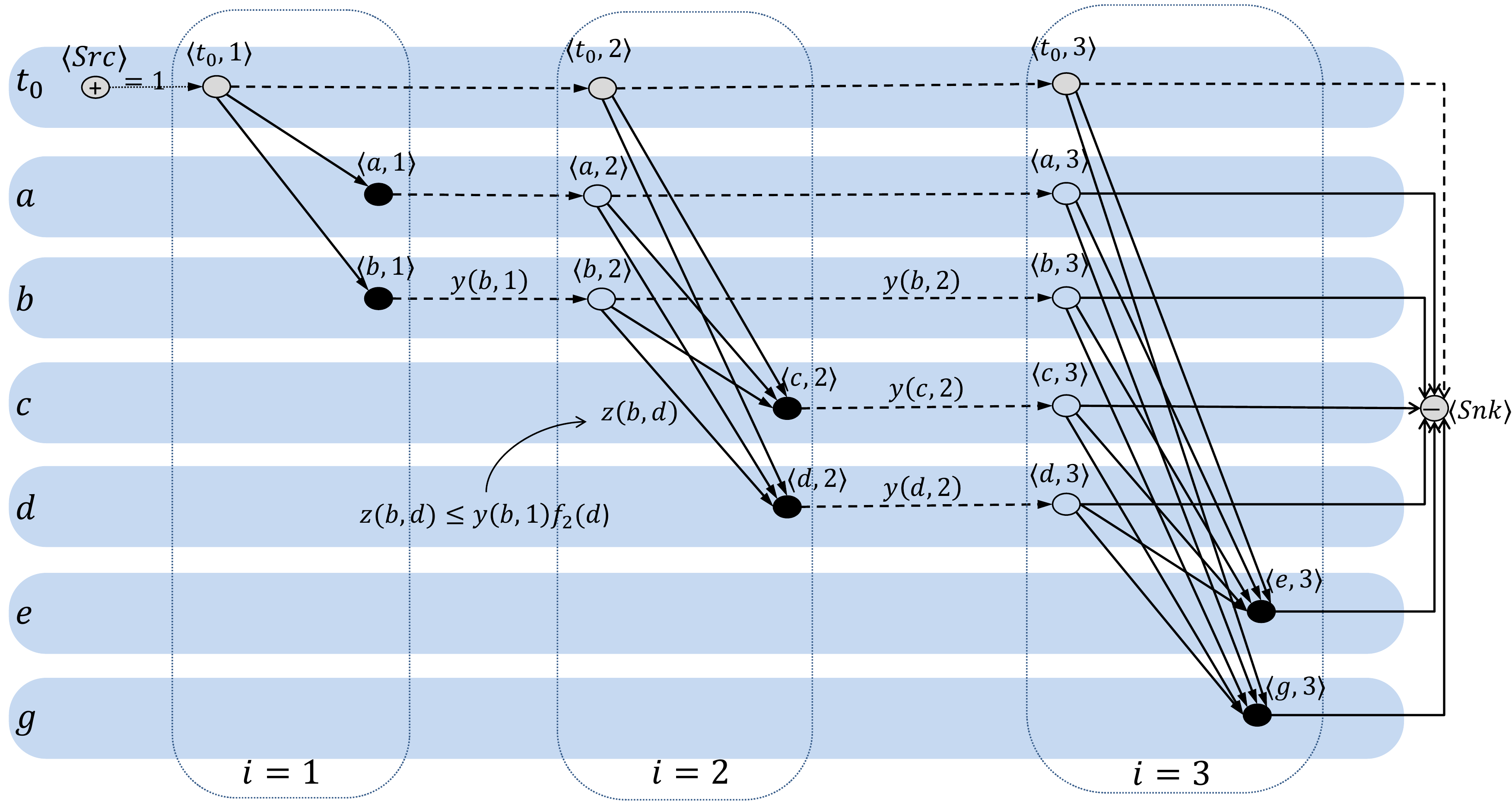}
\caption{The flow network corresponding to the SSA algorithm~\ref{def:ssa}. In this instance, there are three agents with type
spaces $\TypeSpace_1 = \{a,b\}$, $\TypeSpace_2=\{c,d\}$, and $\TypeSpace_3=\{e,g\}$. All nodes in the same row correspond to the
same type. The diagonal edges have dynamic capacity constraints while all other edges have no capacity constraints. The flow
going from $\Node{\Type_{i'}}{i}$ to $\Node{\Type_{i}}{i}$ corresponds to the ex-ante probability of $\Type_{i}$ taking the token
away from $\Type_{i'}$. The flow going from $\Node{\Type_{i'}}{i}$ to $\Node{\Type_{i'}}{i+1}$ corresponds to the ex-ante
probability of $\Type_{i'}$ still holding the token after agent $i$ is visited. \label{fig:seq_alloc}}
\end{figure}

\paragraph{A network flow formulation of $\SeqAllocSpace$.}
We construct a network in which every feasible flow corresponds to some $(\SeqAllocs, \SeqTranses) \in \SeqAllocSpace$. The
network (see \autoref{fig:seq_alloc}) has a source node $\Src$, a sink node $\Snk$, and $n-i+1$ nodes for every $\Type_i \in
\UnionTypeSpace$ labeled as $\Node{\Type_i}{i},\cdots,\Node{\Type_i}{n}$ where each node $\Node{\Type_{i'}}{i}$ corresponds to
the type $\Type_{i'}$ at the time SSA algorithm is visiting agent $i$. For each $\Type_{i'} \in \UnionTypeSpace$ and for each $i
\in \Range{i'}{n-1}$ there is an edge $(\Node{\Type_{i'}}{i}, \Node{\Type_{i'}}{i+1})$ with infinite capacity whose flow is equal
to $\SeqAlloc(\Type_{i'}, i)$; we refer to these edges as \emph{``horizontal edges''}. For every $\Type_{i'}$ and every
$\Type_{i}$ where $i' < i$ there is an edge $(\Node{\Type_{i'}}{i}, \Node{\Type_{i}}{i})$ whose flow is equal to
$\SeqTrans(\Type_{i'}, \Type_{i})$ and whose capacity is equal to the total amount of flow that enters $\Node{\Type_{i'}}{i}$
multiplied by $\DistProb_{i}(\Type_{i})$, i.e., it has a dynamic capacity which is equal to $\SeqAlloc(\Type_{i'}, i-1)
\DistProb_{i}(\Type_{i})$; we refer to these edges as \emph{``diagonal edges''}. There is an edge $(\Src, \Type_0)$ through which
the source node pushes exactly one unit of flow. Finally, for every $\Type_i \in \UnionTypeSpace$, there is an edge
$(\Node{\Type_i}{n}, \Snk)$ with unlimited capacity whose flow is equal to $\SeqAlloc(\Type_i, n)$. To simplify the proofs we
sometimes use $\Node{\Type_0}{0}$ as an alias for the source node $\Src$ and $\Node{\Type_i}{n+1}$ as aliases for the sink node
$\Snk$. The network always has a feasible flow because all the flow can be routed along the path $\Src, \Node{\Type_0}{1},\ldots,
\Node{\Type_0}{n}, \Snk$.

We define the \emph{residual capacity} between two types $\Type_{i'}, \Type_{i} \in \UnionTypeSpace$ with respect to a given
$(\SeqAllocs, \SeqTranses) \in \SeqAllocSpace$ as follows.
\begin{align*}
    \RCap_{\SeqAllocs, \SeqTranses}(\Type_{i'}, \Type_{i})=
        \begin{cases}
            \SeqAlloc(\Type_{i'}, i-1)\DistProb_{i}(\Type_{i})-\SeqTrans(\Type_{i'}, \Type_{i}) & i > i' \\
            \SeqTrans(\Type_{i}, \Type_{i'})                                                    & i < i' \\
            0                                                                                   & \text{otherwise}
        \end{cases} \tag{$\RCap$} \label{eq:rcap}
\end{align*}

Due to dynamic capacity constraints, it is not possible to augment a flow along a path with positive residual capacity by simply
changing the amount of the flow along the edges of the path, because reducing the total flow entering a node also decreases the
capacity of the diagonal edges leaving that node, which could potentially violate their capacity constraints. Therefore, we
introduce an operator $\Reroute(\Type_{i'}, \Type_{i}, \rho)$ (algorithm~\ref{alg:reroute} and \autoref{fig:reroute}) which
modifies an existing $(\SeqAllocs, \SeqTrans) \in \SeqAllocSpace$, while maintaining its feasibility, to transfer a
$\rho$-fraction of $\SeqAllocs(\Type_i, n)$ to $\SeqAllocs(\Type_{i'}, n)$ by changing the flow along the cycle
\begin{align*}
\Snk,\Node{\Type_{i'}}{n},\Node{\Type_{i'}}{n-1},\ldots,\Node{\Type_{i'}}{\max(i',i)},\Node{\Type_{i}}{\max(i',i)},\ldots,\Node{\Type_{i}}{n-1},\Node{\Type_{i}}{n},
\Snk
\end{align*}
and adjusting the flow of the the diagonal edges which leave this cycle. More precisely, $\Reroute(\Type_{i'}, \Type_{i}, \rho)$
takes out a $\rho$-fraction of the flow going through the subtree rooted at $\Node{\Type_{i'}}{\max(i',i)}$~\footnote{This
subtree consists of the path $\Node{\Type_{i'}}{\max(i',i)},\ldots, \Node{\Type_{i'}}{n}, \Snk$ and all the diagonal edges
leaving this path.} and reassigns it to the subtree rooted at $\Node{\Type_{i}}{\max(i',i)}$ (see \autoref{fig:reroute}).

\begin{algorithm}[h]
\textbf{Input:} An existing $(\SeqAllocs, \SeqTranses) \in \SeqAllocSpace$ given implicitly, a source type $\Type_{i'}
\in \UnionTypeSpace$, a destination type $\Type_{i} \in \UnionTypeSpace$ where $i' \neq i$, and a fraction $\rho \in [0,1]$.\\
\textbf{Output:} Modify $(\SeqAllocs, \SeqTranses)$  to transfer a $\rho$-fraction of $\SeqAllocs(\Type_{i'}, n)$ to
$\SeqAllocs(\Type_{i}, n)$ while ensuring that the modified assignment is still in $\SeqAllocSpace$.
\begin{algorithmic}[1]
    \IF{$i' < i$}
        \STATE Increase $\SeqTrans(\Type_{i'}, \Type_{i})$ by $\rho\cdot \SeqAlloc(\Type_{i'},i)$.
    \ELSE
        \STATE Decrease $\SeqTrans(\Type_{i},\Type_{i'})$ by $\rho\cdot \SeqAlloc(\Type_{i'},i')$.
    \ENDIF
    \FOR{$i''=\max(i',i)$ to $n$}
        \STATE Increase $\SeqAlloc(\Type_{i},i'')$ by $\rho\cdot\SeqAlloc(\Type_{i'},i'')$.
        \STATE Decrease $\SeqAlloc(\Type_{i'}, i'')$ by $\rho\cdot\SeqAlloc(\Type_{i'},i'')$.
    \ENDFOR
    \FOR{$\Type_{i''} \in \TypeSpace_{\Range{\max(i',i)+1}{n}}$}
        \STATE Increase $\SeqTrans(\Type_{i}, \Type_{i''})$ by $\rho\cdot\SeqTrans(\Type_{i'}, \Type_{i''})$.
        \STATE Decrease $\SeqTrans(\Type_{i'}, \Type_{i''})$ by $\rho\cdot\SeqTrans(\Type_{i'}, \Type_{i''})$.
    \ENDFOR
\end{algorithmic}
\caption{$\Reroute(\Type_{i'},\Type_{i}, \rho)$.\label{alg:reroute}}%
\end{algorithm}

\begin{figure}[h]
\center
\includegraphics[width=.95\textwidth]{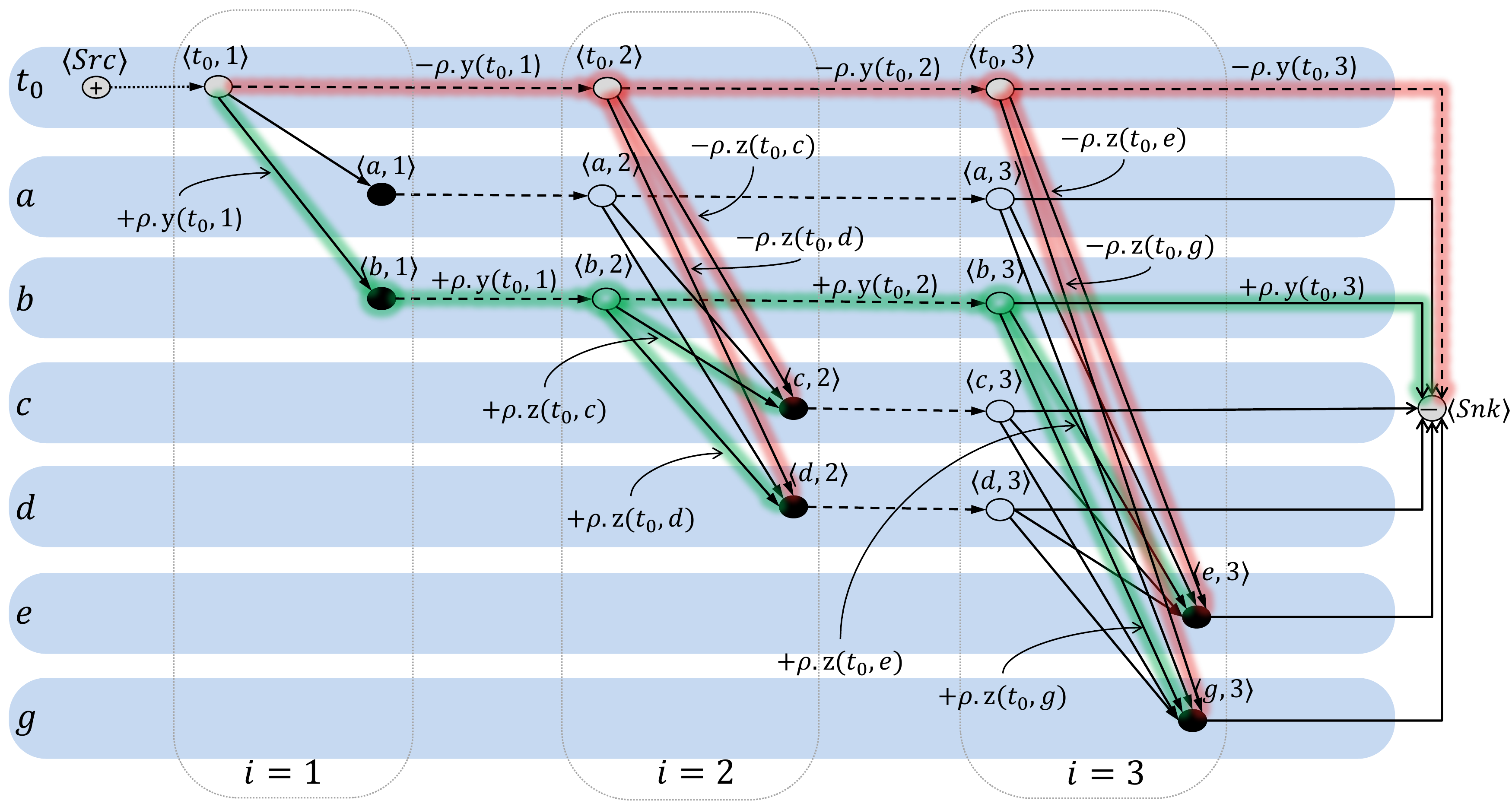}
\caption{Changes made by applying $\Reroute(\Type_0, b, \rho)$. A $\rho$-fraction of the red subtree rooted at $\Type_0$ is take
out and reassigned to the green subtree rooted at $b$. The exact amount of change is indicated for each green and each red edge.
The flow along all other edges stay intact. The operator has the effect of reassigning $\rho$-fraction of ex-ante probability of
allocation for type $\Type_0$ to type $b$. \label{fig:reroute}}
\end{figure}

\begin{proof}[Proof of Lemma~\ref{lem:degen}]
For any given $(\SeqAllocs, \SeqTrans) \in \SeqAllocSpace$ we show that it is always possible to modify $\SeqAllocs$ and
$\SeqTrans$ to obtain a non-degenerate feasible assignment with the same induced interim allocation probabilities (i.e., the same
$\SeqAllocs(\cdot, n)$). Let $d$ denote the number of degenerate types with respect to $(\SeqAllocs, \SeqTrans)$, i.e., define
\begin{align*}
    d = \#\left\{ \Type_i \in \TypeSpace_{\Range{1}{n}} \middle| \SeqAllocs(\Type_i,n)=0, \SeqAlloc(\Type_i, i) > 0\right\}
\end{align*}
The proof is by induction on $d$. The base case is $d=0$ which is trivial. We prove the claim for $d > 0$ by modifying
$\SeqAllocs$ and $\SeqTrans$, reducing the number of degenerate types to $d-1$, and then applying the induction hypothesis. Let
$\Type_{i}$ be a degenerate type. For each $\Type_{i'} \in \TypeSpace_{\Range{0}{i-1}}$, we apply the operator $\Reroute(\Type_i,
\Type_{i'}, \frac{\SeqTrans(\Type_{i'}, \Type_i)}{\SeqAlloc(\Type_i, i)})$ unless $\SeqAlloc(\Type_i, i)$ has already reached
$0$. Applying this operator to each type $\Type_{i'}$ eliminates the flow from $\Node{\Type_{i'}}{i}$ to $\Node{\Type_i}{i}$, so
eventually $\SeqAlloc(\Type_i, i)$ reaches $0$ and $\Type_i$ is no longer degenerate and also no new degenerate type is
introduced, so the number of degenerate types is reduced to $d-1$. It is also easy to see that $\SeqAlloc(\Type_{i'},n)$ is not
modified because $\SeqAlloc(\Type_{i},n) = 0$. That completes the proof.
\end{proof}

\begin{proof}[Proof of Lemma~\ref{lem:aug}]
To prove the lemma it is enough to show that for any augmentable type $\Type_{i'}$ and any non-augmentable type $\Type_{i}$,
$\RCap_{\SeqAllocs, \SeqTranses}(\Type_{i'}, \Type_{i}) =0$ which is equivalent to the statement of the lemma (the equivalence
follows from the definition of $\RCap$ and equation~\eqref{eq:seq_pi}). The proof is by contradiction. Suppose $\Type_{i'}$ is
augmentable and $\RCap_{\SeqAllocs, \SeqTranses}(\Type_{i'}, \Type_{i}) = \delta$ for some positive $\delta$; we show that
$\Type_{i}$ is also augmentable. Since $\Type_{i'}$ is augmentable, there exists a $(\SeqAllocs',\SeqTranses') \in
\SeqAllocSpace$ such that $\SeqAlloc'(\TypeVar,n)=\SeqAlloc(\TypeVar,n)$ for all $\TypeVar \in \TypeSpace_\Agents \setminus
\{\Type_0, \Type_{i'}\}$ and $\SeqAlloc'(\Type_{i'},n)-\SeqAlloc(\Type_{i'},n)=\epsilon > 0$. Define
\begin{align*}
    (\SeqAllocs'', \SeqTranses'') &= (1-\alpha)\cdot(\SeqAllocs, \SeqTranses)+\alpha\cdot(\SeqAllocs', \SeqTranses')
\end{align*}
where $\alpha \in [0,1]$ is a parameter that we specify later. Note that in $(\SeqAllocs'', \SeqTranses'')$, $\Type_{i'}$ is
augmented by $\alpha\epsilon$, and $\RCap_{\SeqAllocs'', \SeqTranses''}(\Type_{i'}, \Type_{i}) \ge (1-\alpha)\delta$, and
$(\SeqAllocs'', \SeqTranses'') \in \SeqAllocSpace$ because it is a convex combination of $(\SeqAllocs, \SeqTranses)$ and
$(\SeqAllocs', \SeqTranses')$. Consider applying $\Reroute(\Type_{i'}, \Type_{i}, \rho)$ to $(\SeqAllocs'', \SeqTranses'')$ for
some parameter $\rho \in [0,1]$. The idea is to choose $\alpha$ and $\rho$ such that the exact amount, by which $\Type_{i'}$ was
augmented, gets reassigned to $\Type_{i}$, by applying $\Reroute(\Type_{i'}, \Type_{i}, \rho)$; so that eventually $\Type_{i}$ is
augmented while every other type (except $\Type_0$) has the same allocation probabilities as they originally had in $(\SeqAllocs,
\SeqTranses)$. It is easy to verify that by setting
\begin{align*}
    \alpha&=\frac{\SeqAlloc(\Type_{i'},n)\delta}{2} & \rho&= \frac{\epsilon\delta}{2+\epsilon\delta}
\end{align*}
we get a feasible assignment in which the allocation probability of $\Type_{i}$ is augmented by $\alpha\epsilon$ while every
other type (except $\Type_0$) has the same allocation probabilities as in $(\SeqAllocs, \SeqTranses)$. We still need to show that
$\alpha
> 0$. The proof is again by contradiction. Suppose $\alpha = 0$, so it must be $\SeqAllocs(\Type_{i'},n)=0$, which would imply that
$\Type_{i'}$ is a degenerate type because $\SeqAlloc(\Type_{i'}, i') > 0$ (because $\RCap_{\SeqAllocs,\SeqTranses}(\Type_{i'},
\Type_{i})> 0$), however $(\SeqAllocs, \SeqTranses)$ is a non-degenerate assignment by the hypothesis of the lemma, which is a
contradiction. That completes the proof.
\end{proof}

\section{Proofs from Section~\ref{sec:border_k}}
\label{app:border_k}

\begin{figure}[h]
\centering
\includegraphics[width=0.7\textwidth]{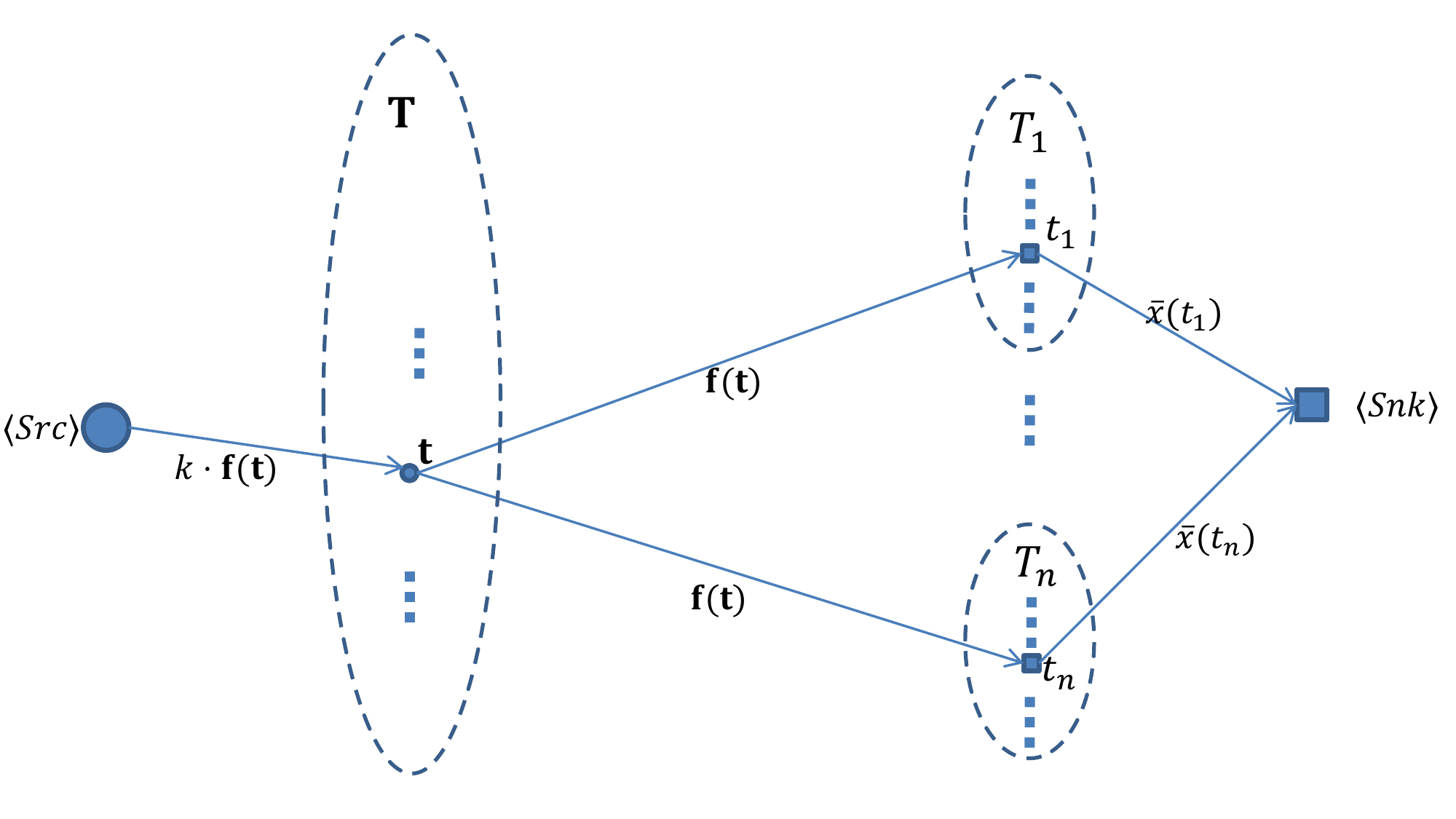}
\caption{The bipartite graph used in the max-flow/min-cut argument of the proof of \autoref{thm:border_k}. The capacities are
indicated on the edges.\label{fig:kflow}}
\end{figure}

\begin{proof}[Rest of the proof of \autoref{thm:border_k}]
We give a proof of $\PolyMat{\DistOracle_k} \subseteq \PreAllocSpace $ based on the min-cut/max-flow theorem. We start by
constructing a directed bipartite graph as illustrated in \autoref{fig:kflow}. On one side we put a node $\SNode{\Types}$, for
each type profile $\Types \in \TypeSpaces$. On the other side we put a node $\SNode{\Type_i}$, for each type $\Type_i \in
\TypeSpace_{\Range{1}{n}}$. We also add a source node $\Src$ and a sink node $\Snk$. We add a directed edge from $\Src$ to the
node $\SNode{\Types}$, for each $\Types \in \TypeSpaces$ and set the capacity of this edge to $k\cdot\DistProbs(\Types)$. We also
add $n$ outgoing edges for every node $\SNode{\Types}$, each one going to one of the nodes
$\SNode{\Type_1},\ldots,\SNode{\Type_n}$ and with a capacity of $\DistProbs(\Types)$. Finally we add a directed edge from the
node $\SNode{\Type_i}$, for each $\Type_i \in \TypeSpace_{\Range{1}{n}}$, to $\Snk$ with capacity of $\PreAlloc(\Type_i)$.
Consider a maximum flow from $\Src$ to $\Snk$. It is easy to see that there exists a feasible ex post implementation for
$\PreAlloc$ if and only if all the edges to the sink node $\Snk$ are saturated. In particular, if $\rho(\Types,\Type_i)$ denotes
the amount of flow from $\SNode{\Types}$ to $\SNode{\Type_i}$, a feasible ex post implementation can be obtained by allocating to
each type $\Type_i$ with probability $\rho(\Types,\Type_i)/\DistProbs(\Types)$ when the type profile $\Types$ is reported by the
agents.

We show that if a feasible ex post implementation does not exist, then $\PreAlloc \not \in \PolyMat{\DistOracle_k}$. Observe that
if a feasible ex post implementation does not exist, then some of the incoming edges of $\Snk$ are not saturated by the max-flow.
Let $(A, B)$ be a minimum cut such that $\Src \in A$ and $\Snk \in B$. Let $B'=B \cap \UnionTypeSpace$. We show that the
polymatroid inequality
\begin{align}
    \PreAlloc(B') &\le \DistOracle_k(B') \label{eq:bk3}
\end{align}
must have been violated. It is easy to see that the size of the cut is given by the following equation.
\begin{align*}
    \Cut(A, B) &=
        \sum_{\Types \in \TypeSpaces \cap A}\#\left\{i \middle| \Type_i \in B\right\}\DistProbs(\Types)
            +\sum_{\Types \in \TypeSpaces\cap B}k\cdot\DistProbs(\Types)+\sum_{\TypeVar \in \UnionTypeSpace\cap A}\PreAlloc(\TypeVar)
\end{align*}
Observe that for each $\Types \in \TypeSpaces \cap A$, it must be that $\#\{i | \Type_i \in B\} \le k$, otherwise moving
$\SNode{\Types}$ to $B$ would decrease the size of the cut. So the size of the minimum cut can be in simply written as:
\begin{align*}
    \Cut(A, B) &=
        \sum_{\Types \in \TypeSpaces} \min\left(\#\left\{i \middle| \Type_i \in B\right\},k\right) \DistProbs(\Types)+\sum_{\TypeVar \in
        \UnionTypeSpace\cap A}\PreAlloc(\TypeVar)
\end{align*}
On the other hand, since some of the incoming edges of $\Snk$ are not saturated by the max-flow, it must be that
\begin{align*}
    \sum_{\TypeVar\in \UnionTypeSpace} \PreAlloc(\TypeVar) &= \Cut(A\cup B-\Snk, \Snk) > \Cut(A, B),
\intertext{so}
    \sum_{\TypeVar \in \UnionTypeSpace \cap B}\PreAlloc(\TypeVar) &>
        \sum_{\Types \in \TypeSpaces} \min\left(\#\left\{i \middle| \Type_i \in B\right\},k\right) \DistProbs(\Types).
\end{align*}
The right hand side of the above inequality is the same as $\Ex[\Types \sim \DistProbs]{\min(\#\{i | \Type_i \in B\}, k)}$ which
shows that polymatroid inequality~\eqref{eq:bk3} of $\PolyMat{\DistOracle_k}$ is violated so $\PreAlloc \not \in
\PolyMat{\DistOracle_k}$. That completes the proof.
\end{proof}

\section{Proofs from Section~\ref{sec:border_matroid}}

\begin{proof}[Proof of \autoref{lem:g_k}]
Assuming that agents are independent (i.e., assuming $\DistProbs(\cdot)$ is a product distribution), $\DistOracle_k(\TypeSubset)$
can be computed in time $O((n+\Abs{\TypeSubset})\cdot k)$ using the following dynamic program in which
$\DistDP{i}{j}{\TypeSubset}$ denotes the probability of the event that $\min(\Abs{\Types \cap \TypeSubset \cap
\TypeSpace_{\Range{1}{i}}}, k)=j$.
\begin{align*}
    \DistOracle_k(\TypeSubset) &= \sum_{j=1}^k j\cdot \DistDP{n}{j}{\TypeSubset} \\
    \DistDP{i}{j}{\TypeSubset} &=
        \begin{cases}
            \DistDP{i-1}{k}{\TypeSubset}+(\sum_{\Type_i \in \TypeSubset \cap \TypeSpace_i} \DistProb_i(\Type_i))
                \cdot \DistDP{i-1}{k-1}{\TypeSubset} &  1 \le i \le n, j = k\\
            \DistDP{i-1}{j}{\TypeSubset}+(\sum_{\Type_i \in \TypeSubset \cap \TypeSpace_i} \DistProb_i(\Type_i))
                \cdot (\DistDP{i-1}{j-1}{\TypeSubset}-\DistDP{i-1}{j}{\TypeSubset}) & 1 \le i \le n, 0 \le j < k\\
            1 & i = 0, j = 0 \\
            0 & \text{otherwise}
        \end{cases} \\
\end{align*}
\end{proof}


\if 0

\begin{lemma}
\label{lem:closed}%
Consider an arbitrary $\PreAlloc \in \PreAllocSpace$; the tight sets (with respect to $\PreAlloc$) are closed under union and
intersection.
\end{lemma}
\begin{proof}
The proof is standard. We present it for completeness. Define $h(\TypeSubset)=\DistOracle_k(\TypeSubset)-\PreAlloc(\TypeSubset)$.
Suppose $\TypeSubset$ and $\TypeSubset'$ are both tight with respect to $\PreAlloc$, so  $h(\TypeSubset)=h(\TypeSubset')=0$.
Observe that $h(\cdot)$ is non-negative and submodular (because $\DistOracle_k(\cdot)$ is submodular), so
\begin{align*}
    0 \le h(\TypeSubset \cap \TypeSubset')+h(\TypeSubset \cup \TypeSubset') \le h(\TypeSubset)+h(\TypeSubset')=0
\end{align*}
So $h(\TypeSubset \cap \TypeSubset')=h(\TypeSubset \cup \TypeSubset')=0$ which proves the lemma.
\end{proof}

\begin{lemma}
\label{lem:round_in}%
Consider an arbitrary $\PreAlloc \in \PreAllocSpace$ and any two tight sets $\TypeSubset, \TypeSubset''$ (with respect to
$\PreAlloc$) such that $\TypeSubset \subset \TypeSubset'' \subseteq \{\TypeVar | \PreAlloc(\TypeVar) > 0\}$. If there exist
distinct $a, b \in \TypeSubset'' \setminus \TypeSubset$, we can find (in strongly polynomial time) $\delta_{ab} \in [0,1]$, such
that $\PreAlloc'=(\PreAlloc+\delta_{ab}\cdot(\IVec{a}-\IVec{b}))$ is in $\PreAllocSpace$ and satisfies
    \begin{itemize}
    \item either $\PreAlloc'(b)=0$,
    \item or there exists a tight set $\TypeSubset'$ (with respect to $\PreAlloc'$), such that $\TypeSubset \subset
    \TypeSubset' \subset \TypeSubset''$.
    \end{itemize}
\end{lemma}

\begin{proof}
Consider the following experiment. Start from $\delta=0$, increase it continuously, while setting $\PreAlloc'$ to be
$\PreAlloc+\delta\cdot(\IVec{a}-\IVec{b})$. The vector $\PreAlloc'$ is going to be feasible as long as $\PreAlloc'(b)\geq 0$ and
no new set
becomes tight. Notice that since $\PreAlloc(a)+\PreAlloc(b)=\PreAlloc'(a)+\PreAlloc'(b)$, the only sets for which we have
$\PreAlloc(\TypeSubset')\neq \PreAlloc'(\TypeSubset')$ are the those that contain~$a$ but not~$b$.
Thus, the only sets that may get tight as $\delta$ increases are those minimizing
$\DistOracle_k(\TypeSubset')-\PreAlloc(\TypeSubset')$ among sets containing $a$ and not~$b$.  Let $\TypeSubset^*$ be such a set
and let $\lambda'$ be $\DistOracle_k(\TypeSubset^*) - \PreAlloc(\TypeSubset^*)$.
Then either $\PreAlloc(b)\leq \lambda'$, in which case we set $\delta_{ab}=\PreAlloc(b)$ and
satisfy the first bullet; or $\PreAlloc(b)> \lambda'$, in which case we set $\delta_{ab} = \lambda'$ and make $\TypeSubset^*$
tight.
In the second case, observe that the set $\TypeSubset' = \TypeSubset \cup (\TypeSubset^*\cap \TypeSubset'')$ is also tight by
Lemma~\ref{lem:closed}.  We have $\TypeSubset \subset \TypeSubset' \subset \TypeSubset''$, and the second bullet is satisfied.
Since $\TypeSubset'$ is a solution to the submodular minimization problem $\min_{\TypeSubset': \TypeSubset \subseteq \TypeSubset'
\subseteq \TypeSubset''} \DistOracle_k(\TypeSubset')-\PreAlloc(\TypeSubset')$, it (and hence $\delta_{ab}$) can be found in
strongly polynomial time.

\end{proof}

\begin{lemma}
\label{lem:round_out}%
Consider an arbitrary $\PreAlloc \in \PreAllocSpace$ and any tight set $\TypeSubset$ (with respect to $\PreAlloc$) such that
$\TypeSubset \subseteq \{\TypeVar | \PreAlloc(\TypeVar) > 0\}$. If there exists $c \in \UnionTypeSpace \setminus \TypeSubset$ for
which $\PreAlloc(c) > 0$, we can find in strongly polynomial time $\delta^+ \in [0,1]$, such that
$\PreAlloc'=(\PreAlloc+\delta^+\cdot\IVec{c})$ is in~$\PreAllocSpace$, and there exists a tight set $\TypeSubset'$ (with respect
to $\PreAlloc'$) where $\TypeSubset' \supset \TypeSubset$.
\end{lemma}
\begin{proof}
Again consider the following experiment. Start from $\delta = 0$, increase it continuously while setting $\PreAlloc'(c)$ to be
$\PreAlloc(c) + \delta \cdot \IVec{c}$.  Consider a set $S^*$ that first becomes tight in $\PreAlloc'$. Notice that $\TypeSubset$
is still tight in $\PreAlloc'$. Therefore, by Lemma~\ref{lem:closed}, $\TypeSubset' = \TypeSubset \cup \TypeSubset^*$ is a tight
set,
and since $c\notin \TypeSubset$ and $c\in \TypeSubset'$, we have $\TypeSubset \subset \TypeSubset'$.

\end{proof}

\fi

\end{document}